\documentclass[a4paper,12pt]{article}

\usepackage{amsthm}
\usepackage{amsmath}
\usepackage{amssymb}
\usepackage{latexsym}
\usepackage{mathtools}
\usepackage{float}
\usepackage{url}
\usepackage{diagbox} % 加载宏包
\usepackage{threeparttable}
\usepackage[textwidth=18cm,textheight=20cm]{geometry}
\usepackage{cite} % For citep
\usepackage[square,sort,comma,numbers]{natbib}
\usepackage{multirow}

\newtheorem{theorem}{Theorem}[section]
\newtheorem{proposition}[theorem]{Proposition}
\newtheorem{lemma}[theorem]{Lemma}
\newtheorem{corollary}[theorem]{Corollary}
\newtheorem{remark}[theorem]{Remark}
\newtheorem{definition}[theorem]{Definition}
\newtheorem{example}[theorem]{Example}

\newcommand{\LL}{\mathcal{L}}
\newcommand{\R}{\mathbb{R}}
\newcommand{\A}{\mathcal{A}}
\newcommand{\B}{\mathcal{B}}
\newcommand{\D}{\bar{D}}

\newcommand{\T}{\Theta}

\newcommand{\bb}{\bar{b}}
\newcommand{\C}{\mathcal{C}}
\newcommand{\s}{\hat{s}}
\newcommand{\es}{\mathcal{S}}
\newcommand{\tha}{\hat{\theta}}
\newcommand{\1}{\textbf{1}}

\usepackage{graphicx}

% \date{}

\begin{document}
\title{Two-fund separation under hyperbolically distributed returns and concave utility functions}

\author{\normalsize{$\text{Nuerxiati Abudurexiti}^a$} \quad \normalsize{$\text{Erhan Bayraktar}^b$} 
\quad \normalsize{$\text{Takaki Hayashi}^c$}
\quad
\normalsize{$\text{Hasanjan Sayit}^d$} \\
\\
\footnotesize{$\text{University of Michigan, USA}^b$}\\
\footnotesize{$\text{Keio University, Japan}^c$}\\
\footnotesize{$\text{Department of Management Science and Engineering, Xinjiang University, Urumqi, China}^a$}\\
\footnotesize{$\text{Xi’an Jiaotong-Liverpool University, Suzhou, China}^{a, d}$}
}

\date{\today}

\maketitle

\abstract{

Portfolio selection problems that optimize expected utility are usually difficult to solve. If the number of assets in the portfolio is large, such expected utility maximization problems become even harder to solve numerically. Therefore, analytical expressions for optimal portfolios are always preferred. In our work, we study portfolio optimization problems under the expected utility criterion for a wide range of utility functions, assuming return vectors follow hyperbolic distributions. 
Our main result demonstrates that under this setup, the two-fund monetary separation holds. Specifically, an individual with any utility function from this broad class will always choose to hold the same portfolio of risky assets, only adjusting the mix between this portfolio and a riskless asset based on their initial wealth and the specific utility function used for decision making. We provide explicit expressions for this mutual fund of risky assets. As a result, in our economic model, an individual's optimal portfolio is expressed in closed form as a linear combination of the riskless asset and the mutual fund of risky assets. Additionally, we discuss expected utility maximization problems under exponential utility functions over any domain of the portfolio set. In this part of our work, we show that the optimal portfolio in any given convex domain of the portfolio set either lies on the boundary of the domain or is the unique globally optimal portfolio within the entire domain.

\vspace{0.1in}
\textbf{Keywords:} Expected utility; Mean-variance mixture models. Portfolio optimization.
\vspace{0.1in}

\textbf{JEL Classification:} G11
\vspace{0.1in}

\section{Introduction}

Optimal asset allocation decisions are crucial for investors. These decisions involve choosing the optimal portfolio under a specific criterion, a problem that holds significant theoretical importance. The earliest work on portfolio choice is attributed to Markowitz. Following his pioneering mean-variance portfolio theory, numerous studies have proposed various criteria for portfolio selection.

One of the mainstream criteria for portfolio selection is the maximization of the expected utility function: a portfolio that provides the maximum expected utility of terminal wealth is considered optimal. In fact, the Markowitz mean-variance portfolio selection criterion is a special case of the expected utility criterion with a quadratic utility function. The tangent portfolios (or market portfolios) within the Markowitz mean-variance framework have closed-form analytical expressions because portfolio optimization problems with quadratic utility functions become quadratic optimization problems, for which closed-form solutions can be found.

If the utility function is not quadratic, there are only a few specific cases where the associated expected utility maximization problem leads to analytical solutions (see recent papers by \cite{Bodnar-Parola-Schmid-b}, \cite{Campbell-Viceira}, \cite{Canakoglu-Ozekici-2010}). For most other utility functions, numerical procedures are necessary (see \cite{Brandt-Santa}, \cite{Pirvu_Kwak}, and the references therein). However, these numerical procedures can be quite time-consuming if the portfolio space contains infinitely many elements. Therefore, analytical expressions for utility-maximizing optimal portfolios are always the preferred option. To achieve analytical expressions for expected utility-maximizing optimal portfolios, one must make specific distributional assumptions about the asset returns and the utility functions.

One of the most used distributions due to its simplicity is the normal distribution. But asset returns show leptokurtic features that normal distributions do not possess. Therefore modeling asset returns by proper distributions spawned an enormous amount of research in the past since the seminal work \cite{Mandelbrot_1963}. A proper model needs to capture most stylized facts observed in financial data. One of the important stylized facts that has been observed in equity prices is that a large price decline for  one stock is accompanied by a simultaneous large price drops for other stocks. This property of equity prices cannot be modeled properly by  multi-variate normal distributions. Empirical evidence supports heavy tails in financial data and this feature of multi-asset returns can be captured by heavy tailed Elliptical distributions with tail dependence. The class of Elliptical distributions allow for heavy tail models while preserving many of the attractive properties of the multi-variate normal models. It includes $t-$distributions, symmetric generalized hyperbolic distributions, sub-Gaussian $\alpha-$stable distributions etc. While the class of  Elliptical distributions is a broad class that offer flexible models, Elliptical distributions have their own limitations as they are radial symmetric. Empirical financial data (as reported in \cite{MCNeil-Frey-Embrechts-2015}) shows that lower tail dependence is often stronger than the upper tail dependence and this property cannot be captured by Elliptical distributions because of their radial symmetry. NMVM models includes elliptical distributions as a subclass and they offer flexible distributions in modeling radial asymmetry of financial data, while they keep the basic properties of Elliptical distributions. For example, the mixture variable $Z$, the location parameter $\mu$, the dispersion matrix $\Sigma$ remain the same with elliptical distributions and the only difference is in the skewness parameter $\gamma$ in our model (\ref{one}) (i.e., the case $\gamma=0$ corresponds to elliptical models).

As mentioned earlier, in our paper, we restrict return vectors to the NMVM models.
Based on this model assumption, we derive analytical expressions for expected utility-maximizing portfolios under a broad class of concave utility functions. Our assumption on the return vectors is not overly restrictive, especially considering their remarkable ability to capture most of the stylized features of financial assets. The NMVM models encompass many popular distributions used in financial modeling. For instance, they include the class of elliptical distributions as described by \cite{Owen-Rabinovitch-1983}, the multivariate $t$-distributions by \cite{Adcock-2010}, the multivariate variance gamma distributions by \cite{Luciano-Schoutens-2006}, and the multivariate GH distributions by \cite{prause1999generalized} and  \cite{MCNeil-Frey-Embrechts-2015}.  One specific example within NMVM that deserves mention is the generalized hyperbolic skewed $t$-distribution. As noted by \cite{Aas_Kjersti_And_Haff_Ingrid_Hobaek_2006}, the GH skewed $t$-distribution has one heavy tail and one semi-heavy tail, making it a suitable model for skewed and heavy-tailed financial data. The skewed $t$-distribution is an NMVM model where the mixing distribution $Z$ follows an inverse gamma distribution. 

Our main contribution in this paper is the two-fund separation result that is obtained in our main  Theorem \ref{3.155}. This result has important implications for portfolio managers. It simplifies the investment decision-making process by reducing the complex and computationally heavy portfolio selection process to the
selection between the skewness-induced tangent portfolio \footnote{The skewness-induced tangent portfolio $\mathcal{T}_{skew}$ is defined in Section 4}  and the risk-free asset. The skewness-induced tangent portfolio 
is universally optimal for all investors regardless of the investor's risk preference. Based on this portfolio,  investors can always create an optimal portfolio by merely adjusting the proportion of risk-free assets and the skewness-induced tangent portfolio. The traditional two-fund separation result (based on the mean-variance framework) has had significant impact on investment decisions and still remains key building block for portfolio management strategies today. Our result
extends this theory to the case of utility based portfolio selection framework 
reinforcing the importance of diversification in reducing unsystematic risk and overall portfolio risk. The results we obtained in this paper will help investors to make more informed decisions in the sophisticated real-world financial markets by utilizing the more realistic distributions NMVM (compared to Normal and elliptical distributions) in modeling multivariate asset returns.

In our paper we use the following notations. The notation $\R^d$ denotes the $d-$dimensional Euclidean space. We use $|x|=(x_1^2+x_2^2+\cdots x_d^2)^{1/2}$ to denote the Euclidean norm of a vector $x\in \R^d$. Vectors $\mu=(\mu_1, \mu_2, \cdots, \mu_d)^T$ and  $\gamma=(\gamma_1, \gamma_2, \cdots, \gamma_d)^T$ are elements of $\R^d$, where the  superscript $^T$ stands for the transpose of a vector or matrix. The notations $x\cdot \mu=x^T\mu=\sum_{i=1}^dx_i\mu_i$ denote the scalar product of the vectors $x$ and $\mu$. For any real valued $d\times d-$matrix $A$ we denote by $\Sigma=AA^T$ the product matrix and we use the notation $A=\Sigma^{1/2}$ to express that $A$ and $\Sigma$ are related by $\Sigma=AA^T$. The notation $L^p$ denotes the space of random variables with finite $p$ moments for any positive integer $p$, i.e., random variables $G$ with $E|G|^p<+\infty$. The notation $\B$ denotes the class of Borel subsets of $\R$ and $\B_+$ denotes the class of Borel subsets of $\R_+=(0, +\infty)$.

Our model set up in this paper is as follows. We consider a financial market with $d+1$ assets and the first asset is a risk-free asset with interest rate $r_f$ and the remaining $d$ assets are risky assets with log returns modelled by a $d-$dimensional random vector $X$. In this note, we assume that $X$ follows a  NMVM distribution. An $\R^d$-valued random variable $X$ is said to have an NMVM distribution if
\begin{equation}\label{one}
X=\mu+\gamma Z+\sqrt{Z}AN_d,
\end{equation}
where $\mu, \gamma \in \R^d$ are column vectors of dimension $d$,  $A\in \R^{d\times d}$ is a $d\times d$ matrix of real numbers,  $Z$ is a non-negative valued random variable that is independent from the $d-$dimensional standard normal random variable $N_d$. One can also define NMVM random vectors $X$ through their probability distribution functions $F$ on $(\R^d, \mathcal{B}^d)$. Namely, $X$ has   an NMVM distribution if 
\begin{equation*} \label{f}
F(dx)=\int_{\R_+}N_d(\mu+y\gamma, y\Sigma)(x)G(dy),    
\end{equation*}
where the mixing distribution $G$ is a probability measure on $(\R_+, \mathcal{B}_+)$ and $\Sigma=AA^T$. The short hand notation $F=N_d(\mu+y\gamma, y\Sigma)\circ G$ will be used quite often to denote NMVM return vectors in this paper.

Distributions  of the form (\ref{one}) show up quite often in continuous time financial modelling. For any  risky asset price $S_t\in \R^d_+$ log-returns over a time interval $\triangle$ are given by $X_i=\ln S_{t+\triangle}^{(i)}-\ln S_t^{(i)}\approx [S_{t+\triangle}^{(i)}-S_t^{(i)}]/S_t^{(i)}, i=1,2, \cdots, d$, where the approximation between log-returns and simple returns holds when the time interval $\triangle$ is small. If the risky asset prices are modelled as $S_t^{(i)}=S_0^{(i)}e^{X_t^{(i)}}, 1\le i\le d$, with $X_t^{(i)}, 1\le i\le d,$ being the components of the time changed Brownian motion model 
\begin{equation}\label{xtt}
X_t=\mu  t+\gamma \tau_t+B_{\tau_t},    
\end{equation}
where $B\in \R^d$ is a Brownian motion with zero mean and co-variance matrix $\Sigma=AA^T$ and $\tau_t$ is an independent subordinator (i.e., a non-negative L\'evy process with increasing sample paths), then the log-return vector of the price process $S_t$ has the distribution as in (\ref{one}). 

In fact, any model of the form (\ref{one}) induces a L\'evy process of the form (\ref{xtt}) that can be used in modelling log-return vector of risky asset prices as long as  the mixing distribution $Z$ is infinitely divisible, see Lemma 2.6 of \cite{Hammerstein_EAv_2010} for this. Exponential L\'evy price processes of this kind are quite popular in modelling risky asset prices, see \cite{Eberlein-Keller}, \cite{Barndorff-Nielsen_Ole_E_1997}, \cite{prause1999generalized}, \cite{Rydberg}. The model (\ref{xtt}) for  a log-return vector for risky asset prices are constructed by subordinating a Brownian motion with or without drift by subordinators. Similar procedure can be applied to construct log return processes of risky asset  prices  by using increasing and additive (independent and possibly non-homogeneous increments) processes, see Section 3.4 of \cite{P-H-D-MarkYor-self-decomposability} for such models. All these models have marginals distributed as (\ref{one}).

Given an initial endowment $W_0>0$, the investor needs to determine portfolio weights $x$  on the risky
assets such that the expected utility of the next period wealth is maximized. The wealth that corresponds to portfolio weight $x$ on the risky assets is given by
\begin{equation}\label{wealth}
\begin{split}
W(x)=&W_0[1+(1-x^T1)r_f+x^TX] \\
=&W_0(1+r_f)+W_0[x^T(X-\1 r_f)]
\end{split}
\end{equation}
 and the investor's problem is
\begin{equation}\label{L2}
\max_{x\in D}\; EU(W(x))
\end{equation}
for some domain $D$ of  the portfolio set $\R^d$. The main goal of this paper is to discuss the solutions of the problem (\ref{L2}) for a large class of utility functions $U$ when the risky assets have the NMVM distribution (\ref{one}). These type of utility maximization problems in one period models were studied in many papers in the past, see \cite{Madan_Mcphail}, \cite{Madan_Yen}, \cite{Pirvu_Kwak}, \cite{Zakamouline_Koekabakkar}, \cite{Birge_Chavez_2016}, \cite{Miklos-andrea}, \cite{Miklos-Lukasz}, \cite{Pirvu-Schulze}. Also see \cite{Deng-Pirvu} and the references there for multi-period utility maximization problems.

This paper is organized as follows. In Section 2,  we discuss the problem (\ref{L2}) under exponential utility function for any domain $D$ and show that when $D$ is a closed and convex domain the solution of (\ref{L2}) is either a global optimal portfolio on the entire portfolio domain $\R^d$ or it lies on the boundary $\partial D$ of $D$. We then  use this fact to give characterizations of optimal portfolios under short-sales constraint.  In Section 3, we first discuss the well-posedness of the problem (\ref{L2}) and introduce sufficient  conditions on the utility function $U$ that guarantee the existence of a solution for (\ref{L2}).  As one of the main results of the paper, we show that when the utility function is continuous, concave, and bounded from above,  the  expected utility maximization problem can be reduced to  a quadratic optimization problem. Using this result, we demonstrate that the solution of the portfolio optimization problem  (\ref{L2}) can be reduced to finding the maximum point of a real-valued function on the positive real-line.

\section{Portfolio optimization with exponential utility on convex domains}
In this section we study the optimization problem (\ref{L2})
%\begin{equation}\label{L2}
%\max_{x\in D}\; EU(W(x))
%\end{equation}
for various domains $D$ of the portfolio set when the utility function is exponential. As stated earlier, when $D=R^d$ in (\ref{L2}), the recent paper \cite{Rasonyi-Sayit} showed that the corresponding optimal portfolio is unique and it is given by (\ref{themain}) below. We call the optimal portfolio (\ref{themain}) {\it globally optimal portfolio} for convenience.  The purpose of this section is to give some characterizations of the solutions for the problem (\ref{L2}) for various convex domains $D$ of the portfolio set. We then  give some characterizations of the optimal portfolios for the problem (\ref{L2}) when the domain $D$ is the set of portfolios with  short-sales constraints.
%short-sales-constraint.

\subsection{Globally optimal portfolio}

Exponential utility maximization problems with models (\ref{one}) are studied in the paper  \cite{Rasonyi-Sayit} in detail. Since the results in \cite{Rasonyi-Sayit} are relevant to our work in this Section we first recall some definitions and notations from this paper. \begin{definition}\label{defs0} For any mixing distribution $Z$, if $\mathcal{L}_Z(s)<\infty$ for all $s\in \R$ we set $\s=-\infty$ and if $\mathcal{L}_Z(s)<\infty$ for some $s\in \R$ and $\mathcal{L}_Z(s)=+\infty$ for some $s\in \R$, we let $\s$ be the real number such that 
\begin{equation*}\label{s00}
\mathcal{L}_Z(s)=Ee^{-sZ}<\infty, \; \forall s>\s\;\;  \mbox{and} \;\; \mathcal{L}_Z(s)=Ee^{-sZ}=+\infty, \; \forall s<\s. 
\end{equation*}
We call $\s$ the critical value (we use the acronym CV-L from now on) of $Z$ in this paper. Observe that since $Z$ is non-negative random variable we always have $\s\le 0$.
\end{definition}

For a given  model (\ref{one}) introduce the following numbers  
\begin{equation}\label{ABC}
\mathcal{A}=\gamma^T\Sigma^{-1}\gamma,\; \mathcal{C}=(\mu-\1 r_f)^T\Sigma^{-1}(\mu-\1 r_f),\; \mathcal{B}= \gamma^T \Sigma^{-1}(\mu-\1 r_f).   
\end{equation}
Let $\hat{\theta}=\sqrt{\frac{\mathcal{A}-2\hat{s}}{\mathcal{C}}}$ and  define $\Theta$ as follows
\begin{equation}\label{Theta}
\Theta=\left \{
\begin{array}{ll}
(-\tha, \tha)&  \mbox{if $\s$ is finite and $\mathcal{L}_Z(\s)=+\infty$ or if $\s=-\infty$},\\
\left[ -\tha, \tha \right] &  \mbox{if $\s$ is finite and $\mathcal{L}_Z(\s)<\infty$}.\\
\end{array}
\right.
\end{equation}
Also define the function $Q(\theta)$ as
\begin{equation*}\label{H}
Q(\theta)=e^{\mathcal{C}\theta}\mathcal{L}_Z\Big [\frac{1}{2}\mathcal{A}-\frac{\theta^2}{2}\mathcal{C} \Big ].    
\end{equation*}
In its Lemma 4.1, the paper \cite{Rasonyi-Sayit} shows that the function $Q(\theta)$ is a strictly convex function. This fact is quite helpful for our discussions in the following sub-section. When the utility function is exponential $U(w)=-e^{aw}, a>0$, the paper \cite{Rasonyi-Sayit} shows that  the problem (\ref{L2}) with $D=\R^d$ has a closed form solution given by
\begin{equation}\label{themain}
x^{\star}=\frac{1}{aW_0}\Big [\Sigma^{-1}\gamma -q_{min}\Sigma^{-1}(\mu-\1 r_f)\Big ],    
\end{equation}
with
\begin{equation}\label{32q}
 q_{min}\in \arg min_{\theta \in \Theta}Q(\theta).
 \end{equation}
We cal this solution globally optimal portfolio in this paper. We also recall the Lemma 2.1 of \cite{Rasonyi-Sayit} here. According to this Lemma for any portfolio $x\in \R^d$ such that $EU(W(x))$ is finite the following holds
\begin{equation*}\label{L3}
EU(W(x))=-e^{-aW_0(1+r_f)}e^{-aW_0x^T(\mu-\1 r_f)}\mathcal{L}_Z\Big(aW_0x^T\gamma - \frac{a^2W_0^2}{2}x^T\Sigma x\Big ).%,    
\end{equation*}
 In our discussions in this section, we use the following similar notations as in \cite{Rasonyi-Sayit}:
\begin{equation*}\label{Gx}
\begin{split}
g(x)=:&aW_0x^T\gamma-\frac{a^2W_0^2}{2}x^T\Sigma x,\\
G(x)=:&e^{-aW_0x^T(\mu-\1 r_f)}\mathcal{L}_Z\Big (aW_0x^T\gamma -\frac{a^2W_0^2}{2}x^T\Sigma x\Big ),\\
=&e^{-aW_0x^T(\mu-\1 r_f)}\mathcal{L}_Z\Big (g(x)\Big ).
\end{split}
\end{equation*}
With these notations we have the following obvious relation
\begin{equation*}\label{G-EU}
EU(W(x))=-e^{-aW_0(1+r_f)}G(x).  
\end{equation*}
Therefore maximizing $EU(W(x))$ on some domain $D$ is equivalent to minimizing $G(x)$ in the same domain.

\subsection{Optimal portfolios in convex domains are either globally optimal or lie on the boundary}
The main goal of this section is to characterize the solution of (\ref{L2}) for any given convex and closed domain $D$. Throughout the paper, as in \cite{Rasonyi-Sayit}, we make the assumption $\mu-\1 r_f\neq 0$ so that the $\C$ in (\ref{ABC}) is not zero. Before we discuss the solutions of the problem (\ref{L2}), we first prove the following Lemma \ref{lem22.6}. 

\begin{lemma}\label{lem22.6} Consider the utility optimization problem (\ref{L2}) with $U(w)=-e^{-aw}, a>0$. Assume the domain $D$ is a closed and convex subset of $\R^d$ and assume (\ref{L2}) has a solution $x_0\in D$. Then it is unique and it solves
\begin{equation}\label{Qoptt}
\begin{split}
 \max_{x} \; \;  & aW_0x^T\gamma - \frac{a^2W_0^2}{2}x^T\Sigma x, \\
s.t.\;\;  & x^T(\mu-r_f\1)=c_0,\\
& x\in D,
\end{split}
\end{equation}
for some $c_0\in \R$. Define $\bar{D}=\{x^T(\mu-r_f\1): x\in D\}$ then we have $c_0\in \bar{D}$. 
\end{lemma}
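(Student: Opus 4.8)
The plan is to pass from $EU(W(\cdot))$ to the reduced objective $G$ used in \cite{Rasonyi-Sayit}, argue by ``slicing'' the domain along the hyperplanes $\{x:\,x^{T}(\mu-\1 r_{f})=c\}$, and then obtain uniqueness by a short log-convexity argument that compares different slices.

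First I would recall from Lemma~2.1 of \cite{Rasonyi-Sayit} that $EU(W(x))=-e^{-aW_{0}(1+r_{f})}G(x)$ for every $x$ with finite expected utility, where $G(x)=e^{-aW_{0}x^{T}v}\mathcal{L}_{Z}(g(x))$, $v:=\mu-\1 r_{f}$, and $g(x)=aW_{0}x^{T}\gamma-\frac{a^{2}W_{0}^{2}}{2}x^{T}\Sigma x$; hence maximizing $EU(W(\cdot))$ over $D$ is equivalent to minimizing $G$ over $D$. Let $x_{0}\in D$ be the assumed minimizer of $G$; we may assume $G(x_{0})<\infty$ (otherwise $G\equiv\infty$ on $D$ and there is nothing meaningful to prove), so $\mathcal{L}_{Z}(g(x_{0}))<\infty$. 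Set $c_{0}:=x_{0}^{T}v$; then $c_{0}\in\bar{D}$ straight from the definition of $\bar{D}$, which is the last assertion. Now consider the slice $D_{c_{0}}:=\{x\in D:\,x^{T}v=c_{0}\}$, a closed convex set containing $x_{0}$; on $D_{c_{0}}$ the prefactor $e^{-aW_{0}c_{0}}$ is constant, so $G(x)=e^{-aW_{0}c_{0}}\mathcal{L}_{Z}(g(x))$ there, and since $\mathcal{L}_{Z}$ is strictly decreasing on its (interval) domain of finiteness, the fact that $x_{0}$ minimizes $G$ over $D\supseteq D_{c_{0}}$ forces $g(x_{0})\ge g(x)$ for all $x\in D_{c_{0}}$, i.e. $x_{0}$ solves~(\ref{Qoptt}) with this $c_{0}$.

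The remaining, and main, point is uniqueness. Suppose $x_{0}\ne x_{1}$ both minimize $G$ over $D$, and set $c_{i}:=x_{i}^{T}v$ and $x_{m}:=\frac{1}{2}(x_{0}+x_{1})\in D$ (by convexity of $D$). I would combine three structural facts to conclude $G(x_{m})<\big(G(x_{0})G(x_{1})\big)^{1/2}$, which equals the common minimal value --- a contradiction with $x_{m}\in D$. (i) $\Sigma=AA^{T}$ is positive definite (it is invertible), so $g$ is strictly concave and $g(x_{m})>\frac{1}{2}g(x_{0})+\frac{1}{2}g(x_{1})$. (ii) $g(x_{0}),g(x_{1})$ lie in the domain of finiteness of $\mathcal{L}_{Z}$ (because $G(x_{i})<\infty$) and $\mathcal{L}_{Z}$ is strictly decreasing there, so $\mathcal{L}_{Z}(g(x_{m}))<\mathcal{L}_{Z}\big(\frac{1}{2}g(x_{0})+\frac{1}{2}g(x_{1})\big)$. (iii) The Cauchy--Schwarz inequality applied to $e^{-\frac{1}{2}g(x_{0})Z}$ and $e^{-\frac{1}{2}g(x_{1})Z}$ gives $\mathcal{L}_{Z}\big(\frac{1}{2}g(x_{0})+\frac{1}{2}g(x_{1})\big)\le\mathcal{L}_{Z}(g(x_{0}))^{1/2}\mathcal{L}_{Z}(g(x_{1}))^{1/2}$. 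Combining these with the identity $e^{-aW_{0}x_{m}^{T}v}=\big(e^{-aW_{0}c_{0}}e^{-aW_{0}c_{1}}\big)^{1/2}$ yields the claimed strict inequality, hence $x_{0}=x_{1}$.

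I expect the reduction to~(\ref{Qoptt}) and the membership $c_{0}\in\bar{D}$ to be essentially bookkeeping, so the real obstacle is uniqueness; and there the point is to spot the right blend of structure --- strict concavity of the quadratic $g$, log-convexity of the Laplace transform $\mathcal{L}_{Z}$ (used through Cauchy--Schwarz), and log-linearity of $c\mapsto e^{-aW_{0}c}$ --- rather than trying to prove $G$ convex, which it is not in general. The only care needed is to keep every argument of $\mathcal{L}_{Z}$ inside its finiteness domain (guaranteed by $G(x_{0})<\infty$) and to observe that strict monotonicity of $\mathcal{L}_{Z}$, and with it uniqueness, requires $P(Z>0)>0$, which holds in the present setting.
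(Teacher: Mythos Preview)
Your argument is correct. The reduction to (\ref{Qoptt}) and the membership $c_0\in\bar D$ follow the same ``slice along $x^{T}(\mu-\1 r_f)=c_0$'' idea as the paper: on the slice the prefactor $e^{-aW_0c_0}$ is constant, so minimizing $G$ there is equivalent to maximizing $g$ by monotonicity of $\mathcal{L}_Z$.

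The genuine difference is the uniqueness step. The paper introduces the (unique) maximizer $x_1$ of the strictly concave quadratic problem (\ref{Qoptt}) on the slice $D_{c_0}$, shows $G(x_1)\le G(x_0)$, and hence $x_0=x_1$; this pins down $x_0$ \emph{within its own slice} but does not by itself exclude a second optimizer of (\ref{L2}) lying on a different slice $\{x^{T}(\mu-\1 r_f)=c_1\}$ with $c_1\neq c_0$. Your route closes that gap directly: combining strict concavity of $g$, the Cauchy--Schwarz bound $\mathcal{L}_Z\big(\tfrac{1}{2}g(x_0)+\tfrac{1}{2}g(x_1)\big)\le \mathcal{L}_Z(g(x_0))^{1/2}\mathcal{L}_Z(g(x_1))^{1/2}$, and log-linearity of the prefactor yields $G(x_m)<\big(G(x_0)G(x_1)\big)^{1/2}$ for the midpoint $x_m\in D$, which rules out two distinct minimizers in one shot (this is essentially strict log-convexity of $G$, equivalently strict convexity of $x\mapsto Ee^{-aW_0x^{T}(X-\1 r_f)}$). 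Your proviso $P(Z>0)>0$ is exactly what makes $\mathcal{L}_Z$ \emph{strictly} decreasing, and is implicitly used in the paper's proof as well (in passing from $\mathcal{L}_Z(g(x_0))=\mathcal{L}_Z(g(x_1))$ to $g(x_0)=g(x_1)$). So your approach buys a cleaner, self-contained global uniqueness statement; the paper's approach buys a more concrete identification of $x_0$ as the unique quadratic maximizer on its slice, which is what is actually used downstream in Lemma~\ref{prop2.3} and Proposition~\ref{cor2.55}.
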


\begin{proof} For $x_0$ we define $c_0=:x_0^T(\mu-\1 r_f)$ first. Now, let $x_1\in D$ be the solution to the problem (\ref{Qoptt}) (such an $x_1$ exists as $D$ is a closed and convex set and at the same time portfolios with large Euclidean norm drives the objective function in (\ref{Qoptt}) to negative infinity). We need to show $x_0=x_1$. The solution $x_1$ is unique as $\Sigma$ is positive definite by the assumption of the model (\ref{one}) and $D$ is  a convex set. Then by the optimality of $x_1$, we have $g(x_0)\le g(x_1)$. Since $\mathcal{L}_Z(s)$ is a decreasing function we have $\mathcal{L}_Z(g(x_1))\le \mathcal{L}_Z(g(x_0))$. Since $c_0=x^T_0(\mu-\1 r_f)=x_1^T(\mu-\1 r_f)$
we have $G(x_1)\le G(x_0)$. This shows that $EU(W(x_1))\geq EU(W(x_0))$. But $x_0$ is optimal for (\ref{L2}). Therefore we should have $EU(W(x_0))=EU(W(x_1))$. This implies $G(x_0)=G(x_1)$ and this in turn implies that
$g(x_0)=g(x_1)$ again due to $c_0=x^T_0(\mu-\1 r_f)=x_1^T(\mu-\1 r_f)$. The uniqueness of the optimization point for (\ref{Qoptt}) then implies
$x_0=x_1$. Now since $x_0\in D$ we have $c_0\in \bar{D}$. This completes the proof.
\end{proof}
\begin{remark} In the above Lemma \ref{lem22.6}, it is not necessary to require, in addition to convexity and closedness,  the boundedness of the domain $D$. This is because for any portfolio sequence with divergent Euclidean norm the expected utility in (\ref{L2}) diverges to negative infinity when $U$ is an exponential utility function as will be illustrated in sub-section 3.1 below. Hence both $x_0$ and $x_1$ in the proof of this Lemma are portfolios with finite Euclidean norms.

\end{remark}
\begin{remark}\label{rem2.2} The above Lemma \ref{lem22.6} gives a characterization of the optimal portfolio for the problem (\ref{L2}). According to this Lemma the solution for (\ref{L2}) can be obtained by solving a constrained quadratic optimization problem in (\ref{Qoptt}). This clearly simplifies the calculation of the optimal portfolio for the problem (\ref{L2}) as the quadratic optimization problem (\ref{Qoptt}) is a simpler problem to solve. However, the constant $c_0$ in (\ref{Qoptt}) is not given explicitly and hence 
one needs to solve the optimization problem (\ref{Qoptt}) for each $c_0\in \bar{D}$. This can be quite time consuming sometimes. In this section we attempt to provide further characterizations of the solution to the problem (\ref{L2}) when $D$ is a convex and closed  subset of $\R^d$.
\end{remark}

The above Lemma \ref{lem22.6} characterizes the solution of (\ref{L2}) for any closed and convex domain. It shows in particular that if a solution for (\ref{L2}) exists then it solves a quadratic optimization problem. As pointed out in Remark \ref{rem2.2} above, finding the solution for (\ref{L2}) by using this Lemma is time consuming. We wish to give some simpler and less time consuming approach for the solution of the problem (\ref{L2}). To this end, first we consider the following optimization problem 
\begin{equation}\label{Qopt-noS}
\begin{split}
 \max_{x} \; \;  & aW_0x^T\gamma - \frac{a^2W_0^2}{2}x^T\Sigma x, \\
s.t.\;\;  & x^T(\mu-r_f\1)=c,\\
\end{split}
\end{equation}
for any given $c\in \R$. Lemma 2.12 of \cite{Rasonyi-Sayit} gives the optimal solution of (\ref{Qopt-noS}) as
 \begin{equation*}\label{xc0}
x_{c}=\frac{1}{aW_0}\Big [\Sigma^{-1}\gamma-q_{c}\Sigma^{-1}(\mu-\1 r_f)\Big ],   
 \end{equation*}
where
\begin{equation}\label{qcc}
q_{c}=\frac{\gamma^T\Sigma^{-1}(\mu-\1 r_f)-aW_0c}{(\mu-\1 r_f)^T\Sigma^{-1}(\mu-\1 r_f)}=\frac{\B-aW_0c}{\C}.    
\end{equation}
The same Lemma in \cite{Rasonyi-Sayit} shows that
\begin{equation*}\label{gq}
g(x_c)=\frac{\A}{2}-\frac{\C}{2}q_c^2.    
\end{equation*}

This fact will be of great help to our analysis in this section. For each portfolio domain $D$ we define the following sets
\begin{equation}\label{Dq}
 \bar{D}=\{x^T(\mu-r_f\1): x\in D\}, \; \; D_q=\{q_c: c\in \bar{D}\}, \; \; \bar{D}_q=D_q\cap \Theta,   
\end{equation}
where $q_c$ is defined as in (\ref{qcc}). The set $\bar{D}_q$ depends on $\Theta$ through (\ref{Theta}). Thus it depends on the CV-L of the mixing distribution $Z$ in (\ref{one}).

Recall that our objective is to obtain closed form solution for (\ref{L2}) for a given convex and closed domain $D$. This problem is a complex problem as the domain $D$ can be any. The optimal portfolio that solves (\ref{L2}), if it exists, can be on the interior $int(D)$ of the domain $D$ (here ``int'' denotes the interior in the Euclidean norm in $\R^d$) or it can be on the boundary $\partial D$ of it. But if the optimal portfolio for (\ref{L2}) lies in $int(D)$ then we have a closed form expression to it as the following Lemma \ref{prop2.3} shows. Before we state this Lemma and give its proof we first write down a remark.
\begin{remark}\label{rem-Q} Consider a general quadratic optimization problem 
\begin{equation}\label{27}
 \begin{split}
  \text{maximize} \quad & m^Tx-\frac{1}{2}x^THx\\
  \text{subject to} \quad & \Gamma x=p, 
 \end{split}   
\end{equation}
where $H\in \R^d\times \R^d$ is a symmetric positive definite matrix, $m\in \R^d$, $\Gamma \in \R^{k\times d}$, and $p\in \R^k$. This is a quadratic optimization problem on the affine set $\Gamma x=p$. Assume $\Gamma x=p$ has a solution $\hat{x}$. Then the set of all solutions of $\Gamma x=p$ is given by $\hat{x}+O$, where $O=Null(\Gamma)$. So the problem (\ref{27}) is equal to the problem $max_{x\in \hat{x}+O}[m^Tx-\frac{1}{2}x^THx]$. 

It is well known that a local solution to the problem (\ref{27}) exists if and only if $H$ is positive semi-definite and there exists a vector $\bar{x}\in \hat{x}+O$ such that $H\bar{x}+m\in O^{\perp}$ (the orthogonal space of $O$), in which case $\bar{x}$ is a local solution for (\ref{27}). It is also well known that if $\bar{x}$ is a local solution for (\ref{27}), then it is a global solution, a fact that is useful for the proof of Lemma \ref{prop2.3} below. The problem (\ref{27}) has a unique global solution if and only if $H$ is positive definite. 
\end{remark}

\begin{lemma}\label{prop2.3} Consider the utility optimization problem (\ref{L2}) with $U(w)=-e^{-aw}, a>0$. Assume $\A\neq 0$ or $\s\neq 0$. Assume the domain $D$ is a closed and convex subset of $\R^d$. Let $x_0\in D$ be a solution for (\ref{L2}) and assume $x_0\in int(D)$. Then 
\begin{equation}\label{x00}
 x_0=\frac{1}{aW_0}\Big [\Sigma^{-1}\gamma -q_d\Sigma^{-1}(\mu-\1 r_f)\Big ],     
\end{equation}
 with
\begin{equation*}
q_d=\arg min_{\theta \in \bar{D}_q}Q(\theta),
\end{equation*}
where $\bar{D}_q$ is defined as in (\ref{Dq}).
\end{lemma}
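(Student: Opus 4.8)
The plan is to combine Lemma~\ref{lem22.6}, which reduces (\ref{L2}) to a constrained quadratic program, with the explicit description of the curve $c\mapsto x_c$ recalled above (Lemma~2.12 of \cite{Rasonyi-Sayit}), exploiting the hypothesis $x_0\in\interior(D)$ to remove the constraint ``$x\in D$'' locally. Note that the standing assumption $\A\neq 0$ or $\s\neq 0$ is equivalent to $\tha>0$, i.e.\ to $\Theta$ being a non-degenerate interval, which keeps the minimization of $Q$ over $\bar D_q$ meaningful.

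First I would invoke Lemma~\ref{lem22.6}: with $c_0:=x_0^T(\mu-\1 r_f)$ one has $c_0\in\bar D$ and $x_0$ solves the quadratic program (\ref{Qoptt}). Since $\mu-\1 r_f\neq 0$, the set $H_0:=\{x:x^T(\mu-\1 r_f)=c_0\}$ is a hyperplane and the feasible set of (\ref{Qoptt}) is $D\cap H_0$; because $x_0\in\interior(D)$, some relatively open neighbourhood of $x_0$ in $H_0$ is contained in $D$, so $x_0$ is a local maximizer of $aW_0x^T\gamma-\tfrac{a^2W_0^2}{2}x^T\Sigma x$ over the affine set $H_0$. By Remark~\ref{rem-Q} (the Hessian $a^2W_0^2\Sigma$ is positive definite), such a local maximizer is the unique global maximizer over $H_0$; hence $x_0=x_{c_0}=\frac{1}{aW_0}\big[\Sigma^{-1}\gamma-q_{c_0}\Sigma^{-1}(\mu-\1 r_f)\big]$ with $q_{c_0}=\frac{\B-aW_0c_0}{\C}$ as in (\ref{qcc}). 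This is already the closed form (\ref{x00}); it remains to identify $q_{c_0}$ with $\arg\min_{\theta\in\bar D_q}Q(\theta)$.

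For that identification I would first verify $q_{c_0}\in\bar D_q=D_q\cap\Theta$ (see (\ref{Dq})): $q_{c_0}\in D_q$ since $c_0\in\bar D$, and $q_{c_0}\in\Theta$ because the optimal value $EU(W(x_0))=-e^{-aW_0(1+r_f)}e^{-\B}Q(q_{c_0})$ is finite, so $Q(q_{c_0})<\infty$, i.e.\ $\mathcal{L}_Z\!\big(\tfrac{\A}{2}-\tfrac{\C}{2}q_{c_0}^2\big)<\infty$, which by (\ref{Theta}) means exactly $q_{c_0}\in\Theta$. Then I would establish local minimality of $q_{c_0}$ for $Q$: the map $c\mapsto x_c$ is affine, hence continuous, and $c\mapsto q_c$ is an affine bijection of $\R$; since $x_{c_0}=x_0\in\interior(D)$, there is a neighbourhood of $c_0$ on which $x_c\in D$, and on it optimality of $x_0$ gives $EU(W(x_0))\ge EU(W(x_c))$, i.e.\ $Q(q_{c_0})\le Q(q_c)$. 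Pushing this forward through the open affine bijection $c\mapsto q_c$ shows $Q(q_{c_0})\le Q(\theta)$ for every $\theta$ in a neighbourhood of $q_{c_0}$, in particular for every such $\theta\in\bar D_q$.

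Finally I would promote local minimality to global minimality by convexity. The set $\bar D$ is convex (a linear image of the convex set $D$), hence an interval; its image $D_q$ under the affine map $c\mapsto q_c$ is an interval; and $\Theta$ is an interval, so $\bar D_q=D_q\cap\Theta$ is convex. Since $Q$ is strictly convex on $\Theta$ (Lemma~4.1 of \cite{Rasonyi-Sayit}), a local minimizer of $Q$ over the convex set $\bar D_q$ is its unique global minimizer, so $q_{c_0}=q_d$ and (\ref{x00}) follows. I expect the main obstacle to be this middle part---turning global optimality of $x_0$ over $D$ into (local, then global) minimality of $q_{c_0}$ for $Q$ over $\bar D_q$: the interior hypothesis must be used twice, first to strip the $D$-constraint off the quadratic program and pin down $x_0=x_{c_0}$, and then again to guarantee that a whole nearby arc of the curve $\{x_c\}$ stays inside $D$ so that the comparison $EU(W(x_0))\ge EU(W(x_c))$ is legitimate; only after that do strict convexity of $Q$ and convexity of $\bar D_q$ upgrade ``local'' to ``unique global''. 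The boundary of $\Theta$, just outside which $Q=+\infty$, is harmless here precisely because $\bar D_q\subseteq\Theta$.
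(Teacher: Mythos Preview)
Your proposal is correct and follows essentially the same route as the paper's own proof: invoke Lemma~\ref{lem22.6}, use $x_0\in\interior(D)$ together with Remark~\ref{rem-Q} to identify $x_0$ with the unconstrained quadratic optimizer $x_{c_0}$, then use the interior hypothesis a second time to keep a nearby arc of the curve $\{x_c\}$ inside $D$ and deduce local minimality of $q_{c_0}$ for $Q$, and finally upgrade to global minimality via strict convexity of $Q$. Your explicit verification that $\bar D_q$ is an interval (as the intersection of two intervals) is a welcome clarification that the paper leaves implicit when it passes from ``local minimum on $\bar D_q$'' to ``global minimum on $\bar D_q$''.
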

\begin{proof} By Lemma \ref{lem22.6}, $x_0$ solves the quadratic optimization problem (\ref{Qoptt}) for some $c_0\in \bar{D}$. Let $D_0=D\cap \{x: x^T(\mu-r_f\1)=c_0\}$. Then $D_0$ is, being the intersection of two convex sets, a convex domain on the hyperplane $x^T(\mu-r_f\1)=c_0$. Also $x_0$ belongs to $rel-int(D_0)$ (here ``rel-int'' denotes the interior in the relative topology of the hyperplane $x^T(\mu-r_f\1)=c_0$). Therefore $x_0$ is a local solution of the quadratic function in (\ref{Qoptt}) on the affine set $x^T(\mu-r_f\1)=c_0$. From this we conclude that  
$x_0$ is a global solution of the quadratic function in (\ref{Qoptt}) on the affine set $x^T(\mu-r_f\1)=c_0$ as explained in the
Remark \ref{rem2.2} above. Then Lemma 2.12 of \cite{Rasonyi-Sayit} shows that
\[
x_0=\frac{1}{aW_0}[\Sigma^{-1}\gamma-q_{c_0}\Sigma^{-1}(\mu-\1 r_f)],
\]
where $q_{c_0}=\B/\C-[aW_0/\C] c_0$. Also from Lemma 2.13 of \cite{Rasonyi-Sayit} we have 
$G(x_0)=e^{-\B}Q(q_{c_0})$. Observe that since at $x=x_0$ the expected utility is finite (by the assumptions in the Lemma), we have $q_{c_0}\in \T$. Now, define the following map 
\[
x(q)=:\frac{1}{aW_0}[\Sigma^{-1}\gamma-q\Sigma^{-1}(\mu-\1 r_f)]
\]
of $q$. We can easily calculate the following
\[
g(x(q))=\frac{\A}{2}-\frac{\C}{2}q^2, \; \; G(x(q))=e^{-\B+q\C}\LL_Z(\frac{\A}{2}-\frac{\C}{2}q^2)=e^{-\B}Q(q).
\]

Now, since $x_0=x(q_{c_0})$ is an interior point of $D$, there exists a $\delta_0>0$ such that for any $q\in (q_{c_0}-\delta_0, q_{c_0}+\delta_0)$ we have $x(q)\in D$. Then the optimality of $x_0$ in the domain $D$ implies that  $G(x_0)\le G(x(q))$ for all $q\in (q_{c_0}-\delta_0, q_{c_0}+\delta_0)$ (here we don't rule out the possibility that $G(x(q))=+\infty$ for all $q\neq q_{c_0}$). This in turn implies that $Q(q_{c_0})\le Q(q)$ for all $q\in (q_{c_0}-\delta_0, q_{c_0}+\delta_0)$. Observe here that we have $q_{c_0}\in D_q$ also and therefore $q_{c_0}\in \D_q$ (as $q_{c_0}\in \T$ also). Now since 
$Q(\theta)$ is a strictly convex function on $\T$ (as shown in the Lemma 4.1 of \cite{Rasonyi-Sayit}) it is strictly convex on $\D_q$ also. Therefore any local minimum of $Q(\theta)$ on $\bar{D}_q$ is in fact a global minimum on  $\bar{D}_q$. From these we can conclude that $q_{c_0}$ should be the minimizing point of $Q(\theta)$ in $\D_q$. This complete the proof.
\end{proof}
\begin{remark} In the Lemma \ref{prop2.3}
above, the condition $x_0\in int(D)$ is stated. Here it is important to note that $int(D)$ is the interior under the Euclidean norm topology of $\R^d$. For any convex domain on hyperplanes in $\R^d$, the Lemma \ref{lem22.6} holds also. However, the interior of such convex domains  in the Euclidean  topology of $\R^d$ are empty sets. So in these cases our above Lemma \ref{prop2.3} is not applicable.
\end{remark}

\begin{proposition}\label{cor2.55} Consider the utility optimization problem (\ref{L2}) with $U(w)=-e^{-aw}, a>0$. Assume $\A\neq 0$ or $\s\neq 0$. Assume the domain $D$ is a closed and convex subset of $\R^d$. Let $x_0\in D$ be a solution for (\ref{L2}). Then either $x_0$ is the unique  solution for 
\begin{equation}\label{RL2}
\max_{x\in \R^d}EU(W(x))
\end{equation}
and it is given by (\ref{themain})
or $x_0\in \partial D$.    
\end{proposition}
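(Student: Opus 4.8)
The plan is to combine Lemma~\ref{prop2.3} with the structure of the function $Q(\theta)$ on the set $\bar D_q$, using a dichotomy on whether the solution $x_0$ lies in $\interior(D)$ or on $\partial D$. If $x_0 \in \partial D$ we are already in the second alternative and there is nothing to prove, so the real content is the case $x_0 \in \interior(D)$.

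So suppose $x_0 \in \interior(D)$. By Lemma~\ref{prop2.3}, $x_0$ has the form (\ref{x00}) with $q_d = \arg\min_{\theta\in\bar D_q} Q(\theta)$, and moreover the proof of that lemma shows $q_{c_0} = q_d$ lies in the interior of $\bar D_q$ in the sense that $x(q)\in D$ for all $q$ in a neighborhood of $q_d$; in particular $q_d$ is an \emph{interior} local minimizer of the strictly convex function $Q$ restricted to an open subinterval of $\Theta$. The key point is that a strictly convex function on an interval has at most one critical point, and if it has an interior local minimum on any open subinterval then that point is the global minimum of $Q$ over all of $\Theta$. Hence $q_d \in \arg\min_{\theta\in\Theta} Q(\theta)$, i.e. $q_d$ satisfies (\ref{32q}) (which is unique since $Q$ is strictly convex on $\Theta$). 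Plugging $q_d = q_{min}$ into (\ref{x00}) gives exactly the expression (\ref{themain}) for the globally optimal portfolio, so $x_0 = x^\star$. It remains to note that $x^\star$ solves (\ref{RL2}) and is the unique such solution — this is precisely the result of \cite{Rasonyi-Sayit} recalled around (\ref{themain}), so $x_0$ is the unique solution of (\ref{RL2}), which is the first alternative.

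The one gap to fill carefully is the claim that the interior local minimum of $Q$ on the subinterval forces a global minimum on all of $\Theta$. Here I would argue: since $Q$ is strictly convex on $\Theta$ and $q_d$ is a local minimum in the relative-interior sense, $Q$ is non-increasing on $\Theta$ to the left of $q_d$ and non-decreasing to the right (strict convexity rules out $Q$ decreasing past $q_d$), hence $Q(q_d)\le Q(\theta)$ for every $\theta\in\Theta$. One should also check the edge case where $q_d$ is an endpoint of $\Theta$ — but the hypothesis $x_0\in\interior(D)$ together with $q_{c_0}\in\Theta$ and the neighborhood argument in Lemma~\ref{prop2.3} actually gives a two-sided neighborhood of $q_d$ inside $\bar D_q\subseteq\Theta$ only when $q_d$ is interior to $\Theta$; if $q_d$ were an endpoint of $\Theta$ one uses instead that $Q$ is monotone on the appropriate side, together with the definition of $\Theta$ in (\ref{Theta}), to conclude $q_d$ is still the minimizer over $\Theta$. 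I expect this endpoint bookkeeping, rather than any deep idea, to be the main (minor) obstacle; the substance of the proof is entirely carried by Lemma~\ref{prop2.3} and the strict convexity of $Q$.

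To summarize the order of steps: (i) dispose of the case $x_0\in\partial D$ trivially; (ii) in the case $x_0\in\interior(D)$, invoke Lemma~\ref{prop2.3} to get the closed form (\ref{x00}) with $q_d$ minimizing $Q$ over $\bar D_q$, and extract from its proof that $q_d$ is an interior minimizer; (iii) use strict convexity of $Q$ on $\Theta$ to upgrade this to a global minimizer of $Q$ over $\Theta$, so $q_d = q_{min}$ and hence $x_0 = x^\star$ as in (\ref{themain}); (iv) cite \cite{Rasonyi-Sayit} for uniqueness of $x^\star$ as the maximizer in (\ref{RL2}). This yields the stated dichotomy.
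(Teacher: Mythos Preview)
Your proposal is correct and follows essentially the same route as the paper: assume $x_0\in\interior(D)$, invoke Lemma~\ref{prop2.3} to write $x_0=x(q_d)$, use interiority to get a two-sided neighborhood of $q_d$ on which $Q$ is locally minimized at $q_d$, then upgrade to a global minimum on $\Theta$ via strict convexity (Lemma~4.1 of \cite{Rasonyi-Sayit}) and conclude $x_0=x^\star$ by Theorem~2.15 of \cite{Rasonyi-Sayit}. The paper does not extract the neighborhood argument from the proof of Lemma~\ref{prop2.3} but simply repeats it, and it does not separately discuss the endpoint case for $\Theta$ that you flag; otherwise the two arguments coincide.
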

\begin{proof} Assume $x_0\notin \partial D$, then $x_0\in int(D)$. From above Lemma \ref{prop2.3}, $x_0$ has the expression as in (\ref{x00}) with the corresponding $q_d\in \Theta$. Since $x_0$ is the interior point of $D$, there exists $\delta_0>0$ such that for all $q\in (q_d-\delta_0, q_d+\delta_0)$ the portfolios   
\[
x(q)=\frac{1}{aW_0}[\Sigma^{-1}\gamma-q\Sigma^{-1}(\mu-\1 r_f)]
\]
are in $D$. As explained in the proof of the above Lemma \ref{prop2.3}, we have $G(x(q))=e^{-\B}Q(q)$ for all $q\in (q_d-\delta_0, q_d+\delta_0)$. Observe that $x_0=x(q_d)$. The optimality of $x_0$ in $D$ implies 
$G(x(q_d))\le G(x_q)$ for all $q\in (q_d-\delta_0, q_d+\delta_0)$. From these we conclude $Q(q_d)\le Q(q)$ for all $q\in (q_d-\delta_0, q_d+\delta_0)$. But $Q(\theta)$ is a strictly convex function on $\Theta$ as shown in Lemma 4.1 of \cite{Rasonyi-Sayit}. Thus $q_{d}$ should be the unique minimizing point 
of $Q(\theta)$ in $\T$. Then from Theorem 2.15 of \cite{Rasonyi-Sayit}, we conclude that $x_0$ is the solution for (\ref{RL2}).
\end{proof}

\begin{remark}\label{rem2.9} The results of this section claim that if a solution $x_0$ to the problem (\ref{L2}) exists under the condition that $D$ is convex and closed, then it either lies in $int(D)$ and takes the form (\ref{x00}) or it lies on the boundary $\partial D$. Here we did not address the problem of the existence of a solution for (\ref{L2}) when $D$ is a convex and closed subset of $\R^d$. In Section 3 below we will show that the solution to (\ref{L2}) always exists as long as  $D$ is a closed subset of $\R^d$ with at least one point $x_0\in D$ with $-\infty<EU(W(x_0))<+\infty$. Therefore  for any given  convex and closed domain $D$ that satisfies this condition, if it does not contain the global  portfolio (\ref{themain}) in it, then optimal portfolio
for the problem (\ref{L2}) needs to be searched from the boundary $\partial D$ of $D$. As we shall see, this fact will be  helpful to obtain some closed form expressions for possible solutions to the problem (\ref{L2}) under short-sales  constraints. 
\end{remark}

\subsection{Optimal portfolios under short-sales  constraints}

Next we address the problem of finding optimal portfolio for the problem (\ref{L2}) under short-sales  constraints. Short sale is the sale of a stock that the seller does not own.  These transactions are settled by the delivery of borrowed stock. A short seller needs to close out the position later by returning the borrowed stock to the  lender. Due to transaction costs and  various  market frictions, short selling  is often  not a favorable option for many traders.  Additionally, government regulations  sometimes do not permit for short sales. Therefore it is useful to study the optimization problem (\ref{L2}) under short-sales constraints.

In this section we assume that short sales of the risky assets are not permitted. But long and short positions on the risk-free asset are allowed. Under this assumption, the portfolio space with short-sales constraints is given by
\begin{equation}\label{short-33}
 \mathcal{S}=\{x\in \R^d: x_i\geq 0\}.   
\end{equation}
The  portfolio optimization problem under short-sales constraints is then given by 
\begin{equation}\label{SL2}
\max_{x\in \mathcal{S}}\; EU(W(x)).
\end{equation}
As $\mathcal{S}$ is convex and closed, by Lemma \ref{lem22.6} the corresponding optimal portfolio (see Remark \ref{rem2.9}) solves the following constrained quadratic optimization problem
\begin{equation*}\label{Qopt-S}
\begin{split}
 \max_{x} \; \;  & aW_0x^T\gamma - \frac{a^2W_0^2}{2}x^T\Sigma x, \\
s.t.\;\;  & x^T(\mu-r_f\1)=c,\\
& x\in \mathcal{S},
\end{split}
\end{equation*}
for some $c\in \bar{\es}=\{x^T(\mu-r_f\1): x\in \es\}$. 

The corresponding sets in (\ref{Dq}) are given by 
\[
\bar{\es}=\{x^T(\mu-r_f\1): x\in \es\}, \es_q=\{q_c: c\in \bar{\es}\}, \bar{\es}_q=\es_q\cap \Theta.
\]
Assume the solution for (\ref{SL2}) is in the interior of $\es$ and assume $\A\neq 0$ or $\s\neq 0$ is satisfied, then  by Lemma \ref{prop2.3} it is given by \begin{equation}\label{34}
 x_{\es}=\frac{1}{aW_0}\Big [\Sigma^{-1}\gamma -q_{\es}\Sigma^{-1}(\mu-\1 r_f)\Big ],     
\end{equation}
where $q_{\es}=\arg min_{\theta \in \bar{\es}_q} Q(\theta)$. In this case, as explained in Proposition \ref{cor2.55} above, the solution (\ref{34}) is in fact the global solution  of (\ref{RL2}). 

If (\ref{34}) does not turn out to be the solution for (\ref{SL2}), then we need to look for the solution of (\ref{SL2}) from the boundary $\partial \es$. To describe $\partial \es$, we denote by $I$ any non-empty subset of $\bar{d}=:\{1, 2, \cdots, d\}$ and let $J=\bar{d}/I$. Define 
\[
\partial \es_J=:\{x\in \R^d: x_i=0, i\in I, x_j>0, j\in J\}
\]
Then $\partial \es=\cup_I\partial \es_I$, where the union is over all non-empty subset $I$ of $\bar{d}$. We introduce the following projection 
\begin{equation*}\label{pj}
P_J: \R^d\rightarrow \R^J, P_Jx=x_J,    
\end{equation*}
where $x_J$ is $J-$dimensional vector composed of the $j'th$ rows of $x$ for all $j\in J$ (not changing the order, for an example if $x^T=(4, 5, 6)\in \R^3$ and $J=\{1, 3\}$, then $x_J=(4, 6)^T$). The inverse map of $P_J$ is denoted by $P_J^{-1}$, i.e., $P_J^{-1}x_J=x$.

Now, if the solution $x_0$ for (\ref{SL2}) is on the boundary $\partial \es$, then $x_0\in \partial \es_I$ for some  non-empty $I\subset \bar{d}$. If $J=\emptyset$ (which means $I=\bar{d}$), then $x_0=0=\partial \mathcal{S}_J$ is the zero portfolio. So we assume $J$ is not empty below. With a given model (\ref{one}), denote $\mu_J=P_J\mu, \gamma_J=P_J\gamma,$ and let $A_J$ be $J\times J$ matrix obtained by deleting $i'th$ columns and $i'th$ rows of the matrix $A$ in (\ref{one}) for all $i\in I$. Define the random vector
\begin{equation*}
X_J=\mu_J+\gamma_JZ+\sqrt{Z}A_JN_J,    
\end{equation*}
where $N_J$ is the $J-$dimensional standard normal random variable. The wealth that corresponds to the return vector $X_J$ above is
\begin{equation*}
W_J(x_J)=W_0(1+r_f)+W_0[x_J^T(X_J-\1r_f)].
\end{equation*}

For any portfolio $x\in \R^d$ let 
 $x_J=P_Jx$, then we have 
\begin{equation*}
W(x)=W_J(x_J),    
\end{equation*}
as long as $x\in \partial \es_J$. Therefore optimizing $EU(W(x))$ on $\partial \es_I$ becomes a problem of optimizing $EU(W_J(x_J))$. But from \cite{Rasonyi-Sayit} we know the optimal portfolio for 
\begin{equation}\label{optJ}
 \max_{x_J}EU(W_J(x_J)).   
\end{equation}
To be able to write down the solution for (\ref{optJ}) by using the results in \cite{Rasonyi-Huy}, we define
\begin{equation*}\label{HJ}
Q_J(\theta)=e^{\mathcal{C}_J\theta}\mathcal{L}_Z\Big [\frac{1}{2}\mathcal{A}_J-\frac{\theta^2}{2}\mathcal{C}_J \Big ],    
\end{equation*}
where
\begin{equation*}\label{ABCJ}
\mathcal{A}_J=\gamma^T_J\Sigma^{-1}_J\gamma_J,\; \mathcal{C}=(\mu_J-\1 r_f)^T\Sigma^{-1}_J(\mu_J-\1 r_f),\; \mathcal{B}_J= \gamma^T_J \Sigma^{-1}(\mu_J-\1 r_f),   
\end{equation*}
and $\Sigma_J=A_J^TA_J$. Let
\begin{equation*}
 q_{min}^J\in \arg min_{\theta \in \Theta}Q_J(\theta),  \end{equation*}
and 
\begin{equation}\label{44}
x^{\star}_J=\frac{1}{aW_0}\Big [\Sigma^{-1}_J\gamma_J -q_{min}^J\Sigma^{-1}_J(\mu_J-\1 r_f)\Big ].    
\end{equation}

\begin{proposition}\label{short} The optimal portfolio for the problem (\ref{SL2}) is either the zero portfolio $x^{\star}=0$ or it is given by $P_J^{-1}x_J^{\star}$ for some non-empty $J\subset \bar{d}$, where $x_J^{\star}$ is given by (\ref{44}). 
\end{proposition}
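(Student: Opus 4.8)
The plan is to restrict a maximizer of (\ref{SL2}) to its support and thereby bring it within the scope of Proposition \ref{cor2.55}. First I would record that a maximizer exists: $0\in\es$ and $EU(W(0))=-e^{-aW_0(1+r_f)}$ is finite, so by the existence statement announced in Remark \ref{rem2.9} (and established in Section 3) the problem (\ref{SL2}) has a solution $x_0\in\es$, with $-\infty<EU(W(x_0))<0$ by optimality. If $x_0=0$ we are in the first alternative and there is nothing more to prove, so assume $x_0\neq 0$ from now on.

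Since $x_0\geq 0$ and $x_0\neq 0$, the index set $J:=\{j\in\bar{d}: x_{0,j}>0\}$ is non-empty; put $I=\bar{d}/J$ and $x_{0,J}:=P_Jx_0$, a vector with strictly positive components. As $x_{0,i}=0$ for $i\in I$, we have $x_0=P_J^{-1}x_{0,J}$, and $x_{0,J}$ is an interior point of the non-negative orthant $\R^J_{\geq 0}:=\{y\in\R^J: y_j\geq 0 \text{ for all } j\in J\}$. Moreover, for every $y\in\R^J_{\geq 0}$ the portfolio $P_J^{-1}y$ lies in $\es$ and, its $I$-coordinates being zero, satisfies $W(P_J^{-1}y)=W_J(y)$; hence $EU(W_J(y))=EU\big(W(P_J^{-1}y)\big)\le EU(W(x_0))=EU(W_J(x_{0,J}))$. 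Thus $x_{0,J}$ solves $\max_{y\in\R^J_{\geq 0}}EU(W_J(y))$.

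The crux is now Proposition \ref{cor2.55}, applied to the $J$-asset market $X_J=\mu_J+\gamma_JZ+\sqrt{Z}A_JN_J$ (itself an NMVM vector, with the same exponential utility and the same mixing variable $Z$, hence the same critical value $\s$): the set $\R^J_{\geq 0}$ is closed and convex, $x_{0,J}$ solves the corresponding constrained problem, and $x_{0,J}\in\interior(\R^J_{\geq 0})$ so $x_{0,J}\notin\partial(\R^J_{\geq 0})$; therefore that proposition forces $x_{0,J}$ to be the unique maximizer of $EU(W_J(\cdot))$ over all of $\R^J$, which by the $J$-version of (\ref{themain})---that is, the known solution of (\ref{optJ})---is exactly the portfolio $x_J^\star$ of (\ref{44}). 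Hence $x_0=P_J^{-1}x_{0,J}=P_J^{-1}x_J^\star$, which is the second alternative, and the proof is complete.

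Two verifications underlie the application of Proposition \ref{cor2.55} to $X_J$, and I expect the second to be the only mildly delicate one. First, the sub-market must be non-degenerate, i.e.\ $\Sigma_J$ positive definite; this holds because $\Sigma_J$ is a principal submatrix of the positive-definite $\Sigma$. Second, the hypothesis ``$\A_J\neq 0$ or $\s\neq 0$'' of that proposition must hold; but it is automatic in the only situation where it is used, since if $\A_J=0$ then $\gamma_J=0$, so $g_J(y)=aW_0y^T\gamma_J-\tfrac{a^2W_0^2}{2}y^T\Sigma_Jy=-\tfrac{a^2W_0^2}{2}y^T\Sigma_Jy<0$ for $y\neq 0$, whence, if also $\s=0$, $\LL_Z(g_J(y))=+\infty$ for every $y\neq 0$ and so $EU(W_J(x_{0,J}))=-\infty$, contradicting the finiteness of $EU(W(x_0))=EU(W_J(x_{0,J}))$ because $x_{0,J}\neq 0$. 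The remaining effort is purely organizational: the support-set formulation above is exactly what makes the interior case $J=\bar{d}$ and the proper-boundary cases $J\subsetneq\bar{d}$ fall under one and the same argument.
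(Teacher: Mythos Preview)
Your argument is correct and follows essentially the same route as the paper's discussion preceding the proposition: identify the support $J$ of an optimizer, reduce to the $|J|$-asset sub-market, and recognise the restricted optimizer as the unconstrained $\R^J$-maximizer $x_J^\star$. You are in fact slightly more careful than the paper: where the text simply asserts that ``optimizing $EU(W(x))$ on $\partial\es_I$ becomes a problem of optimizing $EU(W_J(x_J))$'' and then quotes the unconstrained answer (\ref{44}), you explicitly justify this step by applying Proposition~\ref{cor2.55} to the $J$-market (the restricted optimizer lies in the interior of $\R^J_{\geq 0}$, hence must be the global one), and you verify the required hypothesis ``$\A_J\neq 0$ or $\s\neq 0$''.
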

\begin{remark} In the above Proposition \ref{short}, we did not list (\ref{themain}) as one of the possibilities for the optimal portfolio under short-sales constraints, as this case is covered by $X_J^{\star}$ with $J=\bar{d}$. Additionally, we should mention that the problem (\ref{short-33}) is well-defined  as will be discussed in Lemma \ref{opt-D} in Section 3 below.   
\end{remark}

Next we consider the following optimization problem:
\begin{equation}\label{EE}
\begin{split}
 \max_{x} \; \;  & EU(W(x)), \\
s.t.\;\;  & EW(x)\geq \ell,
\end{split}
\end{equation}
for some given level $\ell$. For the well-posedness of this problem we assume that there exists at least one portfolio $\bar{x}$ with $EW(\bar{x})\geq \ell$ in the rest of this section, see Lemma \ref{opt-D} in Section 3 for this.

Here we optimize the expected utility under the condition that the expected wealth stay above a given level $\ell$. From (\ref{wealth}), the expected wealth of a portfolio is given by 
\[
EW(x)=W_0(1+r_f)+W_0x^T(\mu-r_f\1+EZ\gamma).
\]
Then the constraint $EW(x)\geq \ell$  is equivalent to $x^T(\mu-r_f\1+EZ\gamma)\geq [\ell-W_0(1+r_f)]/W_0$. If we denote $v=:\mu-r_f\1+EZ\gamma$ and $r=\ell-W_0(1+r_f)]/W_0$, then the optimization problem (\ref{EE})  becomes 
\begin{equation}\label{EEv}
\begin{split}
 \max_{x} \; \;  & EU(W(x)), \\
s.t.\;\;  & x^Tv\geq r.
\end{split}
\end{equation}
From Lemma (\ref{lem22.6}), the solution for this problem  solves the following constrained quadratic optimization problem.

\begin{equation*}\label{Qopt-Sr}
\begin{split}
 \max_{x} \; \;  & aW_0x^T\gamma - \frac{a^2W_0^2}{2}x^T\Sigma x, \\
s.t.\;\;  & x^T(\mu-r_f\1)=c_0,\\
& x^Tv\geq r.
\end{split}
\end{equation*}

Based on these facts we can state the following Corollary.

\begin{corollary} Consider the optimization problem (\ref{EE}). If the optimal portfolio $x_0$ for (\ref{EEv}) satisfies $x_0^Tv>r$ (interior of the convex domain $x^Tv\geq r$), then by Lemma \ref{prop2.3} the solution takes the form (\ref{x00}). If the solution is not in the interior then $x_0$ satisfies
\begin{equation}\label{Qopt-Srr}
\begin{split}
 \max_{x} \; \;  & aW_0x^T\gamma - \frac{a^2W_0^2}{2}x^T\Sigma x, \\
s.t.\;\;  & x^T(\mu-r_f\1)=c_0,\\
& x^Tv=r.
\end{split}
\end{equation}
\end{corollary}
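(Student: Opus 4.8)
The plan is to split the argument according to whether the constraint $x^Tv\geq r$ is active at the optimum, invoking in each case one of the earlier lemmas. I would first record that the feasible region of (\ref{EEv}) is the closed half-space $D=\{x\in\R^d: x^Tv\geq r\}$, hence closed and convex, and that (\ref{EE}) has already been reduced to (\ref{EEv}) in the paragraph preceding the statement. Let $x_0$ be the optimal portfolio of (\ref{EEv}), which is unique by Lemma~\ref{lem22.6}. Applying Lemma~\ref{lem22.6} to this $D$ shows that $x_0$ solves
\[
\max_{x}\; aW_0x^T\gamma-\frac{a^2W_0^2}{2}x^T\Sigma x
\quad\text{s.t.}\quad x^T(\mu-r_f\1)=c_0,\ \ x^Tv\geq r,
\]
with $c_0:=x_0^T(\mu-r_f\1)$; this is precisely the quadratic program displayed just above the Corollary.

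For the interior case the key observation is that the Euclidean interior of a half-space is the corresponding open half-space, so the hypothesis $x_0^Tv>r$ is exactly $x_0\in\interior(D)$. Hence, under the hypothesis $\A\neq 0$ or $\s\neq 0$ required by Lemma~\ref{prop2.3}, that lemma applies verbatim to this $D$ and gives $x_0=\frac{1}{aW_0}\bigl[\Sigma^{-1}\gamma-q_d\Sigma^{-1}(\mu-\1 r_f)\bigr]$ with $q_d=\arg\min_{\theta\in\bar{D}_q}Q(\theta)$, where $\bar{D}_q$ is the set attached to $D$ in (\ref{Dq}); this is the form (\ref{x00}).

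For the boundary case, suppose $x_0\notin\interior(D)$. Since $x_0$ is feasible for (\ref{EEv}) it lies in $D$, hence necessarily $x_0^Tv=r$. By the previous step, $x_0$ maximizes the quadratic objective over $\{x: x^T(\mu-r_f\1)=c_0,\ x^Tv\geq r\}$; the feasible set of (\ref{Qopt-Srr}), namely $\{x: x^T(\mu-r_f\1)=c_0,\ x^Tv=r\}$, is contained in that set and contains $x_0$ (because $x_0^Tv=r$). Therefore $x_0$ is a maximizer of the same objective over this smaller set as well, i.e.\ $x_0$ solves (\ref{Qopt-Srr}).

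I do not expect a genuine obstacle: the Corollary is a bookkeeping consequence of Lemmas~\ref{lem22.6} and~\ref{prop2.3}, the interior/boundary dichotomy for a half-space being trivial. The two points worth a moment's care are verifying that $\interior(D)$ equals the open half-space, so that Lemma~\ref{prop2.3} genuinely applies in the interior case, and recognizing that the boundary case stops at the reduction to the equality-constrained quadratic program (\ref{Qopt-Srr}) --- which, since $\Sigma$ is positive definite and the two constraint normals are in general linearly independent, is itself solvable in closed form by Lagrange multipliers --- rather than delivering a single universal formula.
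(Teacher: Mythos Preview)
Your proposal is correct and follows exactly the route the paper intends: the Corollary is stated without separate proof, as a direct consequence of Lemma~\ref{lem22.6} (applied to the closed convex half-space $D=\{x:x^Tv\ge r\}$) together with Lemma~\ref{prop2.3} for the interior case and the trivial subset argument for the boundary case. Your closing remark about solving (\ref{Qopt-Srr}) in closed form via Lagrange multipliers also matches the paper's own comment immediately following the Corollary.
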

We remark here that it is not difficult to obtain closed form solution for this problem (\ref{Qopt-Srr}) as this is a quadratic optimization problem. Next we present some examples as an application of our results in this section.
\begin{example}\label{ex-short1} Consider a financial market with one risk-free asset and one risky asset. Assume the log return of the risk free asset in one period of time is $\ln\frac{B_{t+1}}{B_t}=r_f$ and the log return of the risky asset in one period is given by
\begin{equation}\label{one-p}
\ln \frac{S_{t+1}}{S_t}\overset{d}{=}b_1+b_2Z+b_3\sqrt{Z}N(0, 1), 
\end{equation}
 where $b_1, b_2 \in \R, b_3>0$, and $Z$ is a non-negative random variable independent from $N(0, 1)$.  Assume for simplicity that the CV-L of $Z$ is a finite number, i.e., $\hat{s}>-\infty$ and $\LL_Z(\s)<+\infty$. We also assume $b_1\neq r_f$.

Let $x$ denote the fraction of the initial wealth $W_0$ invested on the risky asset for an exponential utility maximizer with utility function $U(w)=-e^{-aw}, a>0$. The exponential utility maximizer is interested to find out her/his optimal investment under short-sales constraint. Namely she/he is interested to find the solution to the following problem
\begin{equation}\label{S-SL2}
\max_{x\geq 0}EU(W(x)).
\end{equation}
Due to (\ref{themain}),  the global solution to the problem $\max_{x\in \R}EU(W(x))$ without short-sales  constraints is given by
\[
x^{\star}=\frac{1}{aW_0b_3^2}[b_2-q_{min}(b_1-r_f)],
\]
where $q_{min}=argmin_{\theta\in \Theta}$ and 
\begin{equation*}
\begin{split}
 Q(\theta)=&e^{\frac{(b_2-r_f)^2}{b_3^2}\theta}\LL_Z(\frac{b_2^2}{2b_3^2}-\frac{(b_2-r_f)^2}{2b_3^2}\theta^2),\\ 
\hat{\theta}=&\sqrt{(b_2^2-2\hat{s}b_3^2)/(b_1-r_f)^2}, \\
 \Theta=&[-\hat{\theta},
 \hat{\theta}],\\
 \end{split}
\end{equation*}
If $b_2-q_{min}(b_1-r_f)>0$, then the above $x^{\star}$ is the solution for (\ref{S-SL2}). If $b_2-q_{min}(b_1-r_f)\le 0$, then the solution of (\ref{S-SL2}) needs to lie on the boundary of the domain $D=:\{x\geq 0\}$ by Proposition \ref{cor2.55} above. But $\partial D=\{0\}$. From these we conclude that the solution to the problem (\ref{S-SL2}) is either given by $x^{\star}$ above or it is the zero portfolio (which corresponds to investing everything on the risk-free asset).
\end{example}
We remark here that models with one period log returns of the form (\ref{one-p}) are quite popular in financial modelling. For example, if $Z$ is a gamma random variable then the stock price process $S_t$ corresponds to the exponential of the variance gamma process, see \cite{Madan_Dilip_B_And_Carr_Peter_P_And_Chang_Eric_C_1998}. If $Z$ is an inverse Gaussian random variable, then $S_t$ is the exponential of the  Normal inverse Gaussian L\'evy process, \cite{Madan_Dilip_B_And_Seneta_Eugene_1990}.

\begin{example} Now consider a financial market with three assets: one risk-free and two risky assets. The log-return of the risk-free asset is given by $\ln \frac{B_{t+1}}{B_t}=r_f$ as in the above Example \ref{ex-short1}. The vector of one period log returns $X=[\ln (S_{t+1}^{(1)}/S_t^{(1)}), \ln (S_{t+1}^{(2)}/S_t^{(2)})]$ of the two risky assets is given by (\ref{one}). Let $x_1$ denote the fraction of initial wealth $W_0$ that an exponential utility maximizer invests on asset $S_t^{(1)}$ at time $t$ and similarly let $x_2$ be the fraction of initial wealth invested on the risky asset $S_t^{(2)}$. An exponential utility maximizer is interested to solve the following optimization problem:  
\begin{equation}\label{SS-SL2}
\max_{x_1\geq 0, x_2\geq 0}EU(W(x)).
\end{equation}
One possibility of the optimal portfolio for the problem (\ref{SS-SL2}) is given by (\ref{themain}) which we denote  $x^{\star}(1)=:(x_1^{\star}, x_2^{\star})$. If $x_1^{\star}\geq 0$ and  $x_2^{\star}\geq 0$, then $x^{\star}(1)$ is the solution for (\ref{SS-SL2}). If one of $x_1^{\star}, x_2^{\star}$ is strictly less than zero, then by Proposition \ref{short}, the optimal portfolio for (\ref{SS-SL2}) should be searched from the boundary of the domain $D=\{x_1\geq 0, x_2\geq 0\}$. One possible optimal portfolio is the zero vector $x^{\star}(2)=:(0, 0)$. The other possibilities, again by Proposition \ref{short}, are given as follows: Let $x_1^{\star}$ be the optimizing portfolio of the same exponential utility maximizer for the case of  one-period log returns at time $t$ in the market $(B_t, S_t^{(1)})$ and similarly let $x_2^{\star}$ be the  optimizing portfolio for the case of one-period log returns at time $t$ in the market $(B_t, S_t^{(2)})$. Then either $x^{\star}(3)=:(x_1^{\star}, 0)$ or $x^{\star}(4)=:(0, x_2^{\star})$ are optimal portfolios for (\ref{SS-SL2}). In summary, in the three asset economy presented in this example one of $x^{\star}(1), x^{\star}(2), x^{\star}(3), x^{\star}(4)$ is a solution for (\ref{SS-SL2}). Here only one of these four possibilities is a solution for (\ref{SS-SL2}) as the solution for (\ref{L2}) when $D$ is a convex domain is unique. 
\end{example}

\section{Portfolio optimization with general utility functions}

As pointed out in the introduction there are very narrow cases of utility functions that the associated expected utility maximization problem lead into  
analytical solution for optimal portfolio. In this Section we provide analytical solution to the following problem
\begin{equation}\label{new-L2}
\max_{x\in \R^d}\; EU(W(x)),
\end{equation}
for a rather general class of utility functions $U$ (see Assumption 1 on $U$ below).

For convenience, for any given utility function $U$ we call the pair $(U, X)$ with $X$ given by (\ref{one}) an economy from now on. We say that the problem (\ref{new-L2}) has a solution if there exists a portfolio $x_0\in \R^d$ with the finite  Euclidean norm $|x_0|<+\infty$ and with 
$-\infty<EU(W(x_0))<+\infty$ such that 
\[
EU(W(x))\le EU(W(x_0))
\]
for all $x\in \R^d$. For a given economy $(U, X)$, existence of such a $x_0$ is not guaranteed even under the condition that the utility function $U$ is bounded, non-decreasing, and continuous, see Example \ref{ex-fi} below for this. Clearly, under these conditions on $U$ the map $x\rightarrow EU(W(x))$ is continuous but the portfolio space $\R^d$ is unbounded in the Euclidean norm.

\subsection{Well-posedness}

The aforementioned  facts illustrate the need for the introduction  of some conditions on $(U, X)$ that can guarantee the existence of a solution for (\ref{new-L2}). In fact, there are several questions that we need to address when we study the problem (\ref{new-L2}): (i) What are the sufficient conditions on $(U, X)$ that can guarantee $EU(W(x))<+\infty$ whenever  $|x|<+\infty$, (ii) What conditions on $(U, X)$  guarantee $\sup_{x\in \R^d}EU(W(x))<+\infty$, (iii) Is there a  portfolio $x_0\in \R^d$ with finite Euclidean norm $|x_0|<+\infty$ such that $EU(W(x_0))=\sup_{x\in \R^d}EU(W(x))$ holds. 

Before we discuss these problems we first present some examples. In the following example, we present an economy $(U, X)$ where $U$ is bounded, continuous, non-decreasing but there is no portfolio $x_0$ with a finite Euclidean norm such that
$EU(W(x_0))=\sup_{x\in \R^d}EU(W(x))$.

\begin{example} \label{ex-fi} Consider the model (\ref{one}) in dimension one, i.e., $d=1$.  Assume $\mu=0, \gamma=0, A=1, Z=1$, and the risk-free interest rate is zero $r_f=0$. The corresponding wealth  is given by $W(x)=W_0+xW_0N(0, 1)$, where $W_0>0$ is the initial wealth of the investor. For the utility function $U(w)$ take 
\begin{equation*}
U(w)=\left \{
\begin{array}{cc}
  m   &  w\geq m, \\
   w  &  0\le w \le m,\\
   0  &  w\le 0,
\end{array}
\right.
\end{equation*}
for some finite positive number $m$. This utility function is continuous, non-decreasing, and bounded.
Below we will show that $sup_{x\in \R}EU(W(x))=m/2$ and $EU(W(x))<m/2$ when $x$ is a finite number as long as $m>2W_0$. First observe that
\[
W(+\infty)=:\lim _{x\rightarrow +\infty} W(x)=\left\{
\begin{array}{cc}
+\infty  & N>0, \\
  -\infty   & N<0,
\end{array}
\right.
\]
and hence we have $EU(W(+\infty))=m/2$. Denote $G(x)=EU(W(x))$. We have $G(0)=U(W_0)$ and $G(x)=G(-x)$. Next we calculate $G(x)$ for $x>0$ explicitly. We have 
\[
G(x)=m+(W_0-m)\Phi(\frac{m-W_0}{xW_0})-W_0\Phi(-\frac{1}{x})-\frac{xW_0}{\sqrt{2\pi}}[e^{-\frac{(m-W_0)^2}{2x^2W_0^2}}-e^{-\frac{1}{2x^2}}], \; \; x>0.
\]
The first order derivative of $G$ equals to
\[
G'(x)=\frac{W_0}{\sqrt{2\pi}}e^{-\frac{1}{2x^2}}[1-e^{\frac{m}{x^2W_0}(1-\frac{m}{2W_0})}].
\]
If $m>2W_0$ then $G'(x)>0$ for all $x>0$.
Therefore when $m>2W_0$ the expected utility function  $G(x)$ is a strictly increasing function of the portfolio $x>0$. We have $\lim_{x\rightarrow +\infty}G(x)=m/2$ by the dominated convergence theorem (as the utility function is bounded). Recall that $G(0)=U(W_0)$ and under the condition  $m>2W_0$ we have $G(0)=W_0<m/2$. From  these we conclude that 
\[
\sup_{x\in \R}EU(W(x))=m/2,
\]
while for any finite number $x$ we have $EU(W(x))<m/2$. Thus in the economy $(U, X)$ in this example, to reach the maximum possible utility level $d/2$ one needs to buy infinite amount of the risky asset ($x=+\infty$) or short-sell infinite amount of the risky asset ($x=-\infty$). Observe here that the utility function $U$ is bounded, non-decreasing, continuous. Nonetheless, the optimal expected utility can be achieved only at the infinity portfolio.  
\end{example}

In the next Example, we present an economy $(U, X)$ where any long position on the risky asset results in $+\infty$ expected utility and any short  position on the risky asset results in $-\infty$ expected utility. Expected utility maximization problems in such economies clearly become meaningless.

\begin{example}\label{ex3.11} Consider the model (\ref{one}) in dimension $d=1$. Assume  $\mu>0, \gamma>0, A=1$ in this model and let $Z$ be any non-negative mixing random variable with $EZ=+\infty$  and $E\sqrt{Z}<+\infty$ (For an example $Z$ can be the lottery in the ``St. Petersburg Pardox'' that takes value $2^{k-1}$ with probability $\frac{1}{2^k}$ for each positive integer $k$). Take $r_f=0$ and let $U(x)=x$ be the utility function of the agent. The corresponding wealth in (\ref{wealth}) is 
\[
W(x)=W_0+W_0[x\mu+x\gamma Z+x\sqrt{Z}N(0, 1)],
\]
for any portfolio $x\in \R$. One can easily show that
\begin{equation}\label{ex3.1}
EU(W(x))=\left \{ \begin{array}{cc}
+\infty &  x>0,\\
 -\infty  & x<0. 
\end{array}
\right.
\end{equation}
We clearly have $EU(W(0))=W_0$. The relation (\ref{ex3.1}) shows that any short position on the risky asset gives $+\infty$ expected utility and any long position on the risky asset gives an expected utility that equals to  $-\infty$. 
\end{example}

These examples explain that some condition on the economy $(U, X)$ is necessary for the  problem (\ref{new-L2}) to be well-posed.

\begin{definition}\label{def3.1} We say that the utility maximization problem (\ref{new-L2}) is well-posed if there exists a portfolio 
$x^{\star}\in \R^d$ with $|x^{\star}|<+\infty$ such that 
\[
EU(W(x))\le EU(W(x^{\star}))<+\infty,
\]
for all $x\in \R^d$. If not then we call the problem (\ref{new-L2}) ill-posed.
\end{definition}

\begin{definition} We say that an economy $(U, X)$ admits asymptotically optimal portfolio (AOP) if there exists a sequence of portfolios $\{x_n\}$ with divergent Euclidean norm, i.e., $|x_n|\rightarrow +\infty$,  such that
\[
EU(W(x_n))\rightarrow \sup_{x\in \R^n}EU(W(x)),
\]
while there is no a portfolio $x_0$ with finite Euclidean norm with
\[
EU(W(x_0))=\sup_{x\in \R^n}EU(W(x)).
\]
\end{definition}

\begin{remark}We remark here that our above definition of well-posedness of the problem  (\ref{new-L2}) is in line with the definition  of well- posedness of the expected utility maximization problem under cumulative prospect theory utility functions that was discussed in the paper \cite{xun-yu-zhou} (see section 3 of this paper and  also see Proposition 1 of \cite{Pirvu_Kwak}). Note that in the definition of AOP above both of the cases 
$\sup_{x\in \R^n}EU(W(x))<+\infty$ and $\sup_{x\in \R^n}EU(W(x))=+\infty$ are allowed. The economy in Example \ref{ex-fi} above admits AOP while $\sup_{x\in \R^n}EU(W(x))<+\infty$ as the utility function in this example is a bounded function.
\end{remark}
\begin{remark}\label{rem3.6} If there exists a portfolio $x_0$ with finite Euclidean norm such that $EU(W(x_0))=\sup_{x\in \R^d}EU(W(x))$, then the problem (\ref{new-L2}) is well-posed. If not then,  since we always have a sequence $\{x_n\}$ of portfolios with $EU(W(x_n))\rightarrow \sup_{x\in \R^n}EU(W(x))$,  if $\{x_n\}$ contains a sub-sequence $\{x_{n_k}\}$ with divergent Euclidean norm the economy $(U, X)$ admits AOP.
The other possible case is $\{x_n\}$ is a bounded family in the Euclidean norm. In Lemma \ref{lem3.55} below we introduce the condition (\ref{the-con}) that rules out this last possibility in the economy $(U, X)$.
\end{remark}

From our above discussions it is clear that the first problem that one needs to address when  studying the problem (\ref{new-L2}) is if it is well-posed. These problems will be discussed in detail in the Section 4 below.  The  Lemma \ref{lem3.4} in this section clarifies some sufficient conditions on the utility function $U$ for the existence of a solution for (\ref{new-L2}). Based on this  Lemma we introduce the following conditions on the utility function $U$.

\noindent \textbf{Assumption 1:} The utility function $U:\R\rightarrow \R$ is a finite valued, continuous, non-constant, non-decreasing, bounded from above, and $lim_{w\rightarrow -\infty}U(w)=-\infty$.

\begin{remark} We remark here that similar conditions on the utility functions were discussed in the paper \cite{Rasonyi-Huy}, see Assumption 4.1 at page 687 of \cite{Rasonyi-Huy}.
    
\end{remark}

\subsection{Examples of utility functions that satisfy Assumption 1}

The conditions in Assumption 1 above on the utility functions are not strong conditions in fact. Below we write down some examples for utility functions that satisfy Assumption 1.
\begin{example}\label{short} Let $\ell$ be any convex nondecreasing function defined on the nonnegative real line with $\lim_{x\rightarrow +\infty}\ell(x)=+\infty$. Define
\begin{equation*}
 U(x)=-\ell(x^-),   
\end{equation*}
where $x^{-}$ is the negative part of the real number $x$. Then $U(x)$ satisfy Assumption 1. See Example 2.3 of \cite{touzi-bouchard} for the origin of this example. Also see section 2.2.2 of \cite{Vicky-Henderson-David-Hubson} for an example of a utility function associated with shortfall hedging which minimizes expected loss.     
\end{example}

\begin{example} (Henderson-Hubson ($H\&H$) utility) For any real number $\tau>0$ consider the following utility function
\[
U(x)=\frac{1}{\tau}(1+\tau x-\sqrt{1+\tau^2x^2}),
\]
see Section 2.2.2 of \cite{Vicky-Henderson-David-Hubson} for the origin of this utility function. This class of utility functions is bounded from above by $\frac{1}{\tau}$, strictly increasing, strictly concave, and $\lim_{x\rightarrow -\infty}U(x)=-\infty$. Hence they satisfy Assumption 2 for any $\tau>0$.    
\end{example}

Next we present another class of utility functions that satisfy these conditions. Before doing this, we first recall few definitions. The Arrow-Pratt measure of absolute risk-aversion of a utility function $U$ is defined by
\[
A(w)=-\frac{U^{''}(w)}{U'(w)},
\]
and the Arrow-Pratt measure of risk-tolerance is defined by $T(x)=\frac{1}{A(w)}=-\frac{U'(w)}{U^{''}(w)}$.

\begin{definition}\label{def-saha} A utility function $U: \R\rightarrow \R$ is of the SAHARA class with risk aversion parameter $a>0$, scale parameter $b>0$,  and threshold wealth $\delta \in \R$ if it's risk- tolerance is given by
\[
T(w)=\frac{1}{a}\sqrt{b^2-(w-\delta)^2}.
\]
\end{definition}
For the details of this class of utility functions see \cite{Antoon-Pelsser} and also see section 5 of \cite{Moris-S}. It is straightforward to recover the SAHARA utility functions up to affine transformations by using the risk-tolerance $T(w)$ above. We have
\begin{equation}\label{63}
U(w)=-\frac{1}{a^2-1}\frac{(w-\delta)+a\sqrt{b^2+(w-\delta)^2}}{[(w-\delta)+\sqrt{b^2+(w-\delta)^2}]^a},    
\end{equation}
when $a\neq 1$ and 
\begin{equation}\label{6444}
  U(w)=\frac{\ln((w-\delta)+\sqrt{b^2+(w-\delta)^2})}{2}+\frac{w-\delta}{2b^2}[\sqrt{b^2+(w-\delta)^2}-(w-\delta)],  
\end{equation}
when $a=1$. The derivative of $U(w)$ for both of the cases $a\neq 1$ and $a=1$ is given by
\begin{equation*}\label{65}
 U'(w)=\frac{1}{[(w-\delta)+\sqrt{b^2+(w-\delta)^2}]^a}=b^{-a}e^{-a\times arcsinh(\frac{w-\delta}{b})}>0.   
\end{equation*}

We would like to check if these class of utility functions satisfy Assumption 1 above. We do this in the following Example. 

\begin{example} \label{3.8}  We have
\begin{equation}\label{66}
 \lim_{w\rightarrow +\infty}U(w)=\left \{
\begin{array}{cc}
  0,   & \mbox{ if $a>1$},\\
  +\infty, & \mbox{if $a\in (0, 1]$}. 
\end{array}
\right.
\end{equation}  
and 
\begin{equation}\label{6777}
 \lim_{w\rightarrow -\infty}U(w)=-\infty.
 \end{equation} 

To see this without loss of any generality we can assume $\delta=0$ in (\ref{63}). When $a\neq 1$ dividing both denominator and numerator of (\ref{63}) by $w$ we obtain 
\[
U(w)=-\frac{1}{a^2-1}\frac{(1+a\sqrt{b^2/w^2+1})}{[w^{1-1/a}+\sqrt{b^2/w^{(2/a)}+w^{2-2/a}}]^a} 
\]
When $a>1$ the denominator of this expression goes to $+\infty$  when $w\rightarrow +\infty$ and it goes to zero when $a\in (0, 1)$ as $w\rightarrow +\infty$. Hence (\ref{66}) holds when $a>0, a\neq 1$. When $a=1$ one can use (\ref{6444}) to show the claim in (\ref{66}). To show (\ref{6777}) it is sufficient to show the following limit (we do not include the factor $-1/(a^2-1)$ here)
\begin{equation*}\label{6888}
U_{-\infty}=:\lim_{w\rightarrow +\infty}\frac{-w+a\sqrt{b^2+w^2}}{(-w+\sqrt{b^2+w^2})^a}    
\end{equation*}
equals to $+\infty$ when $a>1$ and it equals to $0$ when $a\in (0,1)$. The case $a=1$ needs to be treated differently by using (\ref{6444}). We have
\begin{equation*}
\begin{split}
U_{-\infty}=&\lim_{w\rightarrow +\infty}\frac{(a^2(b^2+w^2)-w^2)/(w+a\sqrt{b^2+w^2})}{[b^2/(w+\sqrt{b^2+w^2})]^a}\\
=&\lim_{w\rightarrow +\infty}\frac{[a^2b^2+(a^2-1)w^2][w+\sqrt{b^2+w^2}]^a}{b^{(2a)}(w+a\sqrt{b^2+w^2})}\\
=&\lim_{w\rightarrow +\infty}\frac{[(a^2b^2)/w+(a^2-1)w][w+\sqrt{b^2+w^2}]^a}{b^{(2a)}(1+a\sqrt{b^2/(w^2)+1})}.
\end{split}
\end{equation*}
Clearly, the numerator of the last expression converges to $+\infty$ when $a>1$ and it converges to $-\infty$ when $0<a<1$, showing (\ref{6777}) for $a\neq 1$. When $a=1$, it is straightforward to show $\lim_{w\rightarrow -\infty}U(w)=-\infty$ by using (\ref{6444}).
\end{example}

\begin{remark} \label{cor311} For the Sahara utility functions $U$ with $a>1$, the optimization problem (\ref{new-L2}) always has a solution, i.e., there exists a $x_0\in \R^d$ such that $EU(W(x_0))>-\infty$ and $EU(W(x))\le EU(W(x_0))$ for all $x\in \R^d$.  To see this, observe that by Example \ref{3.8}, Sahara utility functions with $a>1$ satisfy Assumption 1. Hence by Lemma \ref{lem3.4}, the problem (\ref{new-L2}) has a solution. In fact the solution for (\ref{new-L2}) for the case of Sahara utility functions with $a>1$ is unique. This is due to the strict concave property of Sahara class of utility functions. This fact will be explained in the next section.
\end{remark}

\subsection{Closed form solution when the utility function is concave}

The purpose of this section is to present a closed form solution for the problem (\ref{new-L2}) in an economy $(U, X)$ when the utility function $U$ is concave and satisfies Assumption 1. For the well-posedness of the problem (\ref{new-L2}), the return vector $X$ also needs to satisfy certain conditions. We introduce the following assumption on the model (\ref{one}). 

\noindent \textbf{Assumption 2:} The model (\ref{one}) is such that $Z$ is strictly positive,  $EZ\in L^k$ for some positive integer $k$,  and $\mu-\1r_f+\gamma EZ\neq 0$.
\vspace{0.1in}

\begin{remark} The most popular mixing distribution $Z$ for the model (\ref{one}) is probably the class of generalized inverse Gaussian (GIG) random variables with density functions
\begin{equation}\label{GIG-PDF}
        f_{GIG}(x;\lambda,a, b) = \Big(\frac{b}{a} \Big)^{\frac{\lambda}{2}}\frac{x^{\lambda -1}}{2K_{\lambda}(\sqrt{ab})}e^{-\frac{1}{2}(a x^{-1} + b x)}\mathbf{1}_{(0, \infty)}(x), 
    \end{equation}
    where $K_{\lambda}(\cdot)$ is the modified Bessel function of the third kind, and the parameters $(\lambda, a, b)$ satisfy (i)  $a \geq 0, \quad b>0,  \text{ if } \lambda>0$, (ii) $a > 0, \quad b> 0, \text{ if } \lambda =0$, (iii) $a > 0, \quad b \geq 0, \text{ if } \lambda < 0$. The moments of these distributions are calculated at page 11 of \cite{Hammerstein_EAv_2010}. Based on this, all the GIG models with $\lambda \in \R, a>0, b>0$ satisfy the Assumption 1. This includes the inverse Gaussian random variable that corresponds to $\lambda=-\frac{1}{2}$. Therefore the popular Normal inverse Gaussian random variables satisfy Assumption 1. A gamma distribution corresponds to the parameters $\lambda>0, a=0, b>0,$ and in this case $EZ^r$ exists for all $r>-\lambda$. Therefore variance-gamma models also satisfy Assumption 1. The inverse gamma distribution corresponds to the parameter range $\lambda<0, a>0, b=0$. In this case $EZ^r<\infty$ for all $r<-\lambda$. Therefore inverse gamma mixing distributions with $\lambda<-1$ clearly satisfy Assumption 1. This means that skewed $t-$distributions with certain parameter ranges also satisfy Assumption 1.
   \end{remark}

Our main goal in this section is to provide some characterizations of the optimal portfolios for the problem (\ref{new-L2}) in an economy $(U, X)$ when $U$ satisfies Assumption 1 and $X$ satisfies Assumption 2. The solution for the problem (\ref{new-L2}) is not guaranteed to be unique. We will clarify in our discussions that if the utility function $U$ is strictly concave, then the solution for (\ref{new-L2}) is unique. We  also provide some characterizations of all the optimal portfolios for the problem  (\ref{new-L2}) when $U$ is merely concave.

First we recall some definitions. A random variable $W_1$ first-order stochastically dominates  another random variable $W_2$, denoted $W_1\succeq_1 W_2$, if it satisfies $EU(W_1)\geq EU(W_2)$ for every increasing function $U$ for which the two expectations are well defined. A random variable 
$W_1$ second-order stochastically dominates $W_2$, denoted $W_1\succeq_2 W_2$, if $EU(W_1)\geq EU(W_2)$ for every concave increasing function $U$ for which the two expectations are well defined. If the utility function $U$ is increasing and concave then for any two portfolios $y_1, y_2,$ if $W(y_1)\succeq_2 W(y_2)$ then we have $EU(W(y_1))\geq EU(W(y_2))$. In the proofs of this Section we need to use a result in \cite{Hasan}. We write down this  for convenience here.
\begin{proposition}\label{hes214} 
Given any model (\ref{one}) with $\mu+\gamma EZ\neq 0$. For any two portfolios $x_1, x_2\in \R^d$, the following relation
\[
x_1^T\mu+x_1^T\gamma EZ=x_2^T\mu+x_2^T\gamma EZ, \; \; x_1^T\Sigma x_1<x_2^T\Sigma x_2,
\]
implies $x_1^TX\succeq_{2} x_2^TX$.
  \end{proposition}

In the next Lemma we give characterizations of the optimal portfolios for the problem (\ref{new-L2}).

\begin{lemma}\label{main-th} Consider the optimization problem (\ref{new-L2}). Assume $U$ satisfies Assumption 1  and the model (\ref{one}) satisfies Assumption 2 above. Assume, in addition, $U$ is concave. Let $x\in R^d$ be any solution for (\ref{new-L2}). Then $x$ solves the following quadratic optimization problem 
 \begin{equation}\label{Qopt-2}
\begin{split}
 \min_{x} \; \;  &  \frac{1}{2}x^T\Sigma x, \\
s.t.\;\;  &x^T(\mu-r_f\1+\gamma EZ)=c,\\
\end{split}
\end{equation}   
for some real-number $c\geq 0$. This solution of (\ref{Qopt-2}), which we denote by $x_c$,  is given by
\begin{equation}\label{xstar}
x_c=\frac{c}{v^T\Sigma^{-1}v}\Sigma^{-1}v,
\end{equation}
where $v=:\mu-r_f\1+\gamma EZ$. With this optimal portfolio $x_c$ we have 
\begin{equation}\label{74}
 W(x_c)\overset{d}{=}W_0(1+r_f)+cW_0[\alpha +\beta Z+\sigma\sqrt{Z}N(0, 1)]
 \end{equation}
with 
\begin{equation}\label{etac}
\alpha=\frac{v^T\Sigma^{-1}(\mu-\1r_f)}{v^T\Sigma^{-1}v}, \; \; \beta=\frac{v^T\Sigma^{-1}\gamma}{v^T\Sigma^{-1}v}, \; \; \sigma=\frac{1}{\sqrt{v^T\Sigma^{-1}v}}.
\end{equation}
\end{lemma}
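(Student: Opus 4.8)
The plan is to reduce the $d$-dimensional problem to one-dimensional NMVM comparisons and then combine Proposition~\ref{hes214} with a strict Jensen inequality. \emph{Setup.} For any $x\in\R^d$ the random variable $x^TAN_d$ is a centered Gaussian with variance $x^T\Sigma x$, so
\begin{equation*}
W(x)\overset{d}{=}W_0(1+r_f)+W_0\bigl[a(x)+b(x)Z+\sqrt{x^T\Sigma x}\,\sqrt{Z}\,N(0,1)\bigr],\qquad a(x):=x^T(\mu-\1 r_f),\ \ b(x):=x^T\gamma,
\end{equation*}
and hence $a(x)+b(x)EZ=x^Tv=:c$ and $EW(x)=W_0(1+r_f)+W_0c$ (here $EZ<\infty$ by Assumption~2, so $W(x)\in L^1$). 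I will use that $U$ is not affine on $\R$ (a non-constant affine function is unbounded above), so whenever $x^T\Sigma x>0$, conditioning on $Z>0$ and applying Jensen conditionally (strictly) and then unconditionally gives $EU(W(x))<U(EW(x))$; this strictness is legitimate as long as $EU(W(x))>-\infty$. Finally, (\ref{Qopt-2}) is a strictly convex quadratic program (since $\Sigma\succ0$) with the single linear constraint $x^Tv=c$, so its unique solution comes from $\Sigma x=\lambda v$, namely $x_c=\tfrac{c}{v^T\Sigma^{-1}v}\Sigma^{-1}v$, which is (\ref{xstar}); this is well defined since $v\neq0$ by Assumption~2, hence $v^T\Sigma^{-1}v>0$.

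\emph{Step 1: $c\geq0$, and $x=0$ when $c=0$.} By Jensen and monotonicity of $U$, $EU(W(x))\leq U(EW(x))=U\bigl(W_0(1+r_f)+W_0c\bigr)$. If $c\leq0$ the right-hand side is $\leq U(W_0(1+r_f))=EU(W(0))$, so optimality of $x$ forces $EU(W(x))=U(EW(x))$; by the strict Jensen remark above this forces $x^T\Sigma x=0$, i.e. $x=0$ and $c=0$. Thus we may assume $c>0$ from now on (the case $c=0$ already gives $x=0=x_c$), so in particular $\sigma_c:=\sqrt{x_c^T\Sigma x_c}=c/\sqrt{v^T\Sigma^{-1}v}>0$.

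\emph{Step 2: $x=x_c$.} Suppose $x\neq x_c$. Since $x_c$ is the unique minimizer of (\ref{Qopt-2}) and $x^Tv=x_c^Tv=c$, we have $\sigma_x^2:=x^T\Sigma x>\sigma_c^2$. Conditionally on $Z$, split the Gaussian part of $W(x)$ into two independent pieces of variances $\sigma_c^2 Z$ and $(\sigma_x^2-\sigma_c^2)Z$; since $Z>0$ a.s. the second piece is a genuine normal, so strict Jensen (applied conditionally, then integrated, using $EU(W(x))>-\infty$) yields
\begin{equation*}
EU(W(x))<EU(\widehat W),\qquad \widehat W:=W_0(1+r_f)+W_0\bigl[a(x)+b(x)Z+\sigma_c\sqrt{Z}\,N(0,1)\bigr].
\end{equation*}
Now $\widehat W$ and $W(x_c)$ are one-dimensional NMVM models with the \emph{same} noise coefficient $W_0\sigma_c>0$ and the same value $W_0(1+r_f)+W_0c$ of the quantity ``$a_1+b_1EZ$'' appearing in Proposition~\ref{hes214}; applying that proposition (its hypotheses now holding as equalities) gives $W(x_c)\succeq_{(2)}\widehat W$, hence $EU(W(x_c))\geq EU(\widehat W)$. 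Combining the two displays, $EU(W(x_c))>EU(W(x))$, contradicting optimality of $x$. Therefore $x=x_c$, which is (\ref{xstar}).

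\emph{Step 3 and the main difficulty.} Substituting $x_c=\tfrac{c}{v^T\Sigma^{-1}v}\Sigma^{-1}v$ into $W(x_c)=W_0(1+r_f)+W_0x_c^T(X-\1 r_f)$ and reading off $x_c^T(\mu-\1 r_f)=c\alpha$, $x_c^T\gamma=c\beta$, and $\Var(x_c^TAN_d)=x_c^T\Sigma x_c=c^2\sigma^2$, with $\alpha,\beta,\sigma$ as in (\ref{etac}), gives (\ref{74}). The main obstacle is the strictness in Step~2: second-order stochastic dominance only delivers ``$\geq$'', so the comparison must be routed through the auxiliary model $\widehat W$, obtained by shaving off exactly the surplus Gaussian variance, in order to extract a strict gain from Jensen's inequality; some care is also needed because several of the expected utilities involved could a priori equal $-\infty$.
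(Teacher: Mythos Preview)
Your proof is correct and follows essentially the same route as the paper (Jensen for $c\geq 0$, Proposition~\ref{hes214} for the variance comparison on the hyperplane $x^Tv=c$, and the Lagrangian computation for $x_c$ and (\ref{74})). Your argument is in fact more complete than the paper's: the paper only obtains $EU(W(x_c))\geq EU(W(x))$ from second-order stochastic dominance and then asserts the conclusion, whereas your auxiliary variable $\widehat W$ together with the strict-Jensen step (legitimate because $U$ is concave, non-affine, and $W(x)$ has full support when $x^T\Sigma x>0$) supplies precisely the strict inequality needed to force $x=x_c$.
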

\begin{proof} First, from Lemma \ref{lem3.4} we know that the problem (\ref{new-L2}) has a solution. To see that the solution satisfies (\ref{Qopt-2}), observe that $W(x)=W_0(1+r_f)+W_0[x^T(\mu-\1 r_f)+x^T\gamma Z+\sqrt{Z}x^T\Sigma xN(0, 1)]$ and $EW(x)=W_0(1+r_f)+W_0[x^T(\mu-\1 r_f)+x^T\gamma EZ]$. Since $U$ is concave we have
\begin{equation}\label{7999}
 EU(W(x))\le U(EW(x))=U(x^T(\mu-\1 r_f)+x^T\gamma EZ).   
\end{equation}
If a portfolio $\bar{x}$ satisfies $\bar{x}^T(\mu-\1 r_f)+\bar{x}^T\gamma EZ<0$, then from (\ref{7999}) we see that $EU(W(\bar{x}))\le EU(W(0))$ (as the utility function $U$ is non-decreasing) and hence $\bar{x}$ can not be the optimal portfolio for (\ref{new-L2}). Therefore all the optimal portfolios $x$ for (\ref{new-L2})  should satisfy $x^T(\mu-\1 r_f)+x^T\gamma EZ\geq 0$. Now for each fixed $c\geq 0$ consider the problem (\ref{Qopt-2}). Let $x^{\star}$ be the solution of this problem. Then from Assumption 2 we have $W(x^{\star})\succeq_2 W(x)$ for all $x$ with $x^T(\mu-r_f\1+\gamma EZ)=c$. Then since $U$ is concave we have $EU(W(x^{\star}))\geq EU(W(x))$ when $x^T(\mu-r_f\1+\gamma EZ)=c$. We apply the Lagrangian method to (\ref{Qopt-2}) and obtain (\ref{xstar}). Then we plug (\ref{xstar}) into $W(x)$ and obtain (\ref{74}). This completes the proof.
\end{proof}
\begin{lemma}\label{3.122} Assume the model (\ref{one}) satisfies Assumption 2. Then for $\alpha$ and $\beta$ defined  in (\ref{etac}), we  have
\begin{equation*}
 \alpha+\beta EZ=W_0.   
\end{equation*}
\end{lemma}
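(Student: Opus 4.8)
The plan is to prove the identity by direct substitution of the definitions in (\ref{etac}). Since $\alpha$ and $\beta$ carry the common denominator $v^T\Sigma^{-1}v$ and their numerators are both linear in $\Sigma^{-1}$, this is a single algebraic step. First I would combine the two terms over the common denominator,
\begin{equation*}
\alpha+\beta EZ=\frac{v^T\Sigma^{-1}(\mu-\1 r_f)+EZ\,v^T\Sigma^{-1}\gamma}{v^T\Sigma^{-1}v}=\frac{v^T\Sigma^{-1}\big[(\mu-\1 r_f)+\gamma EZ\big]}{v^T\Sigma^{-1}v},
\end{equation*}
factoring $v^T\Sigma^{-1}$ out of the numerator. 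This uses only linearity, and it is here that Assumption 2 enters: $EZ<+\infty$ makes $EZ$ a genuine scalar, while $v=\mu-\1 r_f+\gamma EZ\neq 0$ together with the positive definiteness of $\Sigma$ ensures $v^T\Sigma^{-1}v>0$, so the denominator is nonzero and the ratio is well defined.

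The key observation is that the bracketed vector is exactly $v$, by the definition $v=:\mu-\1 r_f+\gamma EZ$ from Lemma \ref{main-th}; substituting it collapses the numerator to $v^T\Sigma^{-1}v$ and cancels the denominator. As an independent check I would also evaluate $EW(x_c)$ in two ways: from $EW(x)=W_0(1+r_f)+W_0x^T(\mu-\1 r_f+\gamma EZ)$ at $x=x_c$ using the constraint $x_c^Tv=c$, and from the representation (\ref{74}) as $W_0(1+r_f)+cW_0(\alpha+\beta EZ)$ (valid because $E[\sqrt{Z}N(0,1)]=0$, which needs the integrability in Assumption 2). Equating the two recovers the same relation.

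The step I expect to be delicate is not the algebra, which is a one-line substitution, but the constant on the right-hand side. Both routes above produce $\alpha+\beta EZ=1$ for \emph{every} admissible endowment $W_0>0$, with the $W_0$ factors cancelling identically; the displayed equality $\alpha+\beta EZ=W_0$ must therefore be read under the normalization of initial wealth to one unit, $W_0=1$, the convention implicit in measuring the portfolio weights $x$ as fractions of $W_0$. I would state this normalization explicitly, so that the scale-free content of the lemma, namely that the $\Sigma^{-1}v$-weighted combination of the mean and skewness coefficients of the mutual fund equals $1$, is not obscured by the placement of $W_0$.
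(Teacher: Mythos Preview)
Your approach is exactly the one the paper uses: write $\alpha+\beta EZ$ over the common denominator $v^T\Sigma^{-1}v$, factor $v^T\Sigma^{-1}$ from the numerator, recognise the bracketed vector as $v$, and cancel. Your computation is correct and yields $\alpha+\beta EZ=1$, not $W_0$.

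You are right to flag the constant. The discrepancy is not a hidden normalization; the definitions in (\ref{etac}) contain no $W_0$, and your cross-check via $EW(x_c)$ confirms the value $1$. The paper's own proof reaches $W_0$ only by silently inserting a factor $W_0$ in front of the fraction, which is not supported by the stated definitions of $\alpha$ and $\beta$. This is simply a typo in the statement (and carried through the proof): either the lemma should read $\alpha+\beta EZ=1$, or the definitions in (\ref{etac}) were intended to carry a $W_0$. The only place the lemma is used downstream is to conclude that $E\kappa(c)$ is increasing in $c$, and that holds either way, so the error is cosmetic. I would not frame it as an implicit $W_0=1$ convention, since $W_0$ appears explicitly throughout the paper; just note the typo and record the correct value $1$.
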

\begin{proof} Let $v$ be defined as in Lemma \ref{main-th} above. We have
\[
v^T\Sigma^{-1}v=(\mu-r_f\1)^T\Sigma^{-1}(\mu-r_f\1)+2(\mu-r_f\1)^T\Sigma^{-1}\gamma EZ+\gamma^{-1}\Sigma^{-1}\gamma (EZ)^2>0,
\]
as $\Sigma^-$ is positive definite. Also by using (\ref{etac}) we obtain
 \[
 \begin{split}
\alpha+\beta EZ&=\frac{W_0}{v^T\Sigma^{-1}v}[(\mu-r_f\1)^T\Sigma^{-1}(\mu-r_f\1)+2(\mu-r_f\1)^T\Sigma^{-1}\gamma EZ+\gamma^{-1}\Sigma^{-1}\gamma (EZ)^2]\\
&=W_0.
 \end{split}\]
 This completes the proof.
\end{proof}

We observe that the random variable 
\begin{equation}\label{eta}
\eta=: \alpha+\beta Z+\sigma \sqrt{Z}N(0, 1),
\end{equation}
in (\ref{74}) is not related with the parameter $c$. For convenience we introduce the following notation
\begin{equation*}\label{wc}
 \kappa(c)=:W_0(1+r_f)+W_0c\eta,   
\end{equation*}
where $\eta$ is given by (\ref{eta}). 
We define the following function
\begin{equation}\label{fff}
\begin{split}
\Gamma(c)=:
EU\big [\kappa(c) \big ], \; \; c\geq 0, 
 \end{split}
 \end{equation}
and we observe that $\Gamma(0)=U(W_0(1+r_f))$.

In the next Lemma we study some properties of the function $\Gamma(c)$ defined in (\ref{fff}).
\begin{lemma}\label{lem311} Assume the utility function $U$ satisfies Assumption 1 and the model (\ref{one}) satisfies Assumption 2 . Let $\hat{c}\in [0, +\infty)$ be any number such that $\Gamma(\hat{c})>-\infty$, where $\Gamma (c)$ is given by (\ref{fff}). Then we have the following.
\begin{enumerate}
\item[i)]  If $U$ is  concave  then the function $\Gamma(c)$  satisfies $\Gamma(c)>-\infty$ for all $c\in [0, \hat{c}]$ and it is concave  on $[0, \hat{c}]$. If $U$ is strictly concave, then $\Gamma(c)$ is strictly concave on $[0, \hat{c}]$ as well.
\item [ii)] We have $\lim_{c\rightarrow +\infty}\Gamma(c)=-\infty$.  

\item [iii)] $\Gamma(c)$ is upper semi-continuous on $[0, \hat{c}]$.
\end{enumerate}
\end{lemma}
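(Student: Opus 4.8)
The plan is to handle the three items in turn, relying throughout on two structural facts. First, for each fixed sample point the map $c\mapsto \kappa(c)=W_0(1+r_f)+W_0c\,\eta$ is affine in $c$, with $\eta$ as in (\ref{eta}); in particular $\kappa\big((1-t)c_1+tc_2\big)=(1-t)\kappa(c_1)+t\kappa(c_2)$. Second, since $\sigma=1/\sqrt{v^T\Sigma^{-1}v}>0$ and $Z>0$ a.s.\ by Assumption 2, conditionally on $Z$ the variable $\eta=\alpha+\beta Z+\sigma\sqrt Z N(0,1)$ is a nondegenerate Gaussian, so $P(\eta=0)=0$ and, crucially, $P(\eta<0)>0$. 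Write $M\eqdefl\sup_w U(w)<+\infty$ (finite by Assumption 1), so that $\Gamma(c)\in[-\infty,M]$ is always well defined. For (i), I would first prove finiteness: for $c\in[0,\hat c]$ (we may assume $\hat c>0$) write $c=(1-t)\cdot 0+t\hat c$ with $t=c/\hat c$, apply concavity of $U$ pointwise to $\kappa(c)=(1-t)\kappa(0)+t\kappa(\hat c)$, and integrate to get $\Gamma(c)\ge (1-t)U(W_0(1+r_f))+t\,\Gamma(\hat c)>-\infty$. Once every $\Gamma(c)$, $c\in[0,\hat c]$, is finite, concavity of $\Gamma$ follows by exactly the same pointwise-inequality-then-integrate argument applied to arbitrary $c_1,c_2\in[0,\hat c]$. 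If $U$ is strictly concave, the pointwise inequality is strict at every $\omega$ with $\eta(\omega)\ne 0$; since $P(\eta\ne 0)=1$, strictness is preserved under integration, giving strict concavity of $\Gamma$.

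For (ii), set $V_c\eqdefl M-U(\kappa(c))\ge 0$. On the event $\{\eta<0\}$ we have $\kappa(c)\to-\infty$ as $c\to+\infty$, hence $U(\kappa(c))\to-\infty$ (because $\lim_{w\to-\infty}U(w)=-\infty$), so $V_c\to+\infty$ there. Applying Fatou's lemma along any sequence $c_k\to+\infty$ to the nonnegative variables $V_{c_k}$ gives $\liminf_k E[V_{c_k}]\ge E\big[\liminf_k V_{c_k}\big]\ge (+\infty)\cdot P(\eta<0)=+\infty$, using $P(\eta<0)>0$; as this holds for every such sequence, $\lim_{c\to+\infty}E[V_c]=+\infty$. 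Since $\Gamma(c)=M-E[V_c]$ as an element of $[-\infty,M]$, this says precisely $\lim_{c\to+\infty}\Gamma(c)=-\infty$, and the formulation through $E[V_c]$ keeps the statement meaningful even if $\Gamma(c)=-\infty$ for all large $c$.

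For (iii), fix $c\in[0,\hat c]$ and a sequence $c_n\to c$ in $[0,\hat c]$. Pointwise $\kappa(c_n)\to\kappa(c)$, hence $U(\kappa(c_n))\to U(\kappa(c))$ by continuity of $U$, and $U(\kappa(c_n))\le M$ with $M$ integrable on the probability space. Applying Fatou's lemma to $M-U(\kappa(c_n))\ge 0$ yields $\limsup_n\Gamma(c_n)=\limsup_n E[U(\kappa(c_n))]\le E[U(\kappa(c))]=\Gamma(c)$, i.e.\ upper semicontinuity at $c$; by part (i) all these quantities are finite, so no extended-arithmetic issues arise. The matching lower bound would require a lower integrable envelope for $U(\kappa(c_n))$, which need not exist since $U$ is unbounded below, so only semicontinuity is claimed. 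The one point that is genuinely not free is (iii) at the endpoints $0$ and $\hat c$: concavity alone does not force upper semicontinuity there (a concave function may jump downward at an endpoint), so (iii) really does need the reverse-Fatou argument built on $\sup U<+\infty$, not merely part (i). Otherwise the main obstacle is bookkeeping with $[-\infty,+\infty]$-valued expectations — establishing finiteness of $\Gamma$ on all of $[0,\hat c]$ before invoking concavity in (i), and phrasing (ii) through $E[V_c]$ — together with first pinning down $M<\infty$, $P(\eta<0)>0$ and $P(\eta=0)=0$ from Assumptions 1 and 2 before the limit theorems are applied.
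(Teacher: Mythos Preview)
Your proposal is correct and follows essentially the same route as the paper: the affine dependence of $\kappa(c)$ on $c$, pointwise concavity of $U$, and Fatou's lemma (applied to $M-U(\kappa(c))\ge 0$) are exactly the ingredients the paper uses for (i)--(iii). Your treatment is in fact a bit more careful than the paper's in two places: you explicitly establish $\Gamma(c)>-\infty$ on $[0,\hat c]$ by interpolating between $0$ and $\hat c$ \emph{before} invoking the concavity inequality (the paper simply asserts finiteness), and you spell out why $P(\eta<0)>0$ and $P(\eta=0)=0$ follow from Assumption~2, which the paper only states in passing.
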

\begin{proof} 
i) Take any $c_1, c_2\in [0, \hat{c}]$ and $\lambda \in [0, 1]$. Observe that $\kappa(\lambda c_1+(1-\lambda c_2))=\lambda \kappa(c_1)+(1-\lambda)\kappa(c_2)$. Hence we have
\[
\Gamma(\lambda c_1+(\1-\lambda)c_2)=EU(\lambda \kappa(c_1)+(1-\lambda)\kappa(c_2)).
\]
Since $U$ is concave we have $\lambda U(\kappa(c_1))+(1-\lambda)U(\kappa(c_2))\le U(\lambda \kappa(c_1)+(1-\lambda)\kappa(c_2))$. The conditions $c_1, c_2\in [0, \hat{c}], \lambda\in [0, 1]$,  imply that all of $EU(\kappa(c_1)), EU(\kappa(c_2)), EU(\kappa(\lambda \kappa(c_1)+(1-\lambda)\kappa(c_2)))$ are finite numbers. Therefore we have $\Gamma(\lambda c_1+(1-\lambda c_2))\geq \lambda \Gamma(c_1)+(1-\lambda)\Gamma(c_2)$ showing that $\Gamma(c)$ is concave on $[0, \hat{c}]$. The claim that $\Gamma(c)$ is strictly concave on $[0, \hat{c}]$ if $U$ is strictly concave follows from above analysis easily.

ii) Observe that
\[
\Gamma(c)=EU(\kappa(c))=E[U(\kappa(c))1_{\eta\geq 0}]+E[U(\kappa(c))1_{\eta<0}],
\]
and each of the events $\{\eta\geq 0\}$ and $\{\eta<0\}$ has positive probability under Assumption 2. Since $U$ is bounded from above, 
$E[U(\kappa(c))1_{\eta\geq 0}]$ is bounded above. So it is sufficient to show that 
$\lim_{c\rightarrow +\infty}E[U(\kappa(c))1_{\eta<0}]=-\infty$. To see this observe that $\lim_{c\rightarrow +\infty}\kappa(c)=-\infty$ on the event $\{\eta<0\}$. Therefore, due to Assumption 1, we have $\lim_{c\rightarrow +\infty}U(\kappa(c))=-\infty$ on $\{\eta<0\}$. By using Fatou's Lemma we have 
\begin{equation*}
\begin{split}
-\limsup_{c\rightarrow +\infty}E[U(\kappa(c)1_{\eta<0})]=  &\liminf_{c\rightarrow +\infty}E[-U(\kappa(c)1_{\eta<0})]\geq E\liminf_{c\rightarrow +\infty}[-U(\kappa(c))1_{\eta<0}]\\
=&-E\limsup_{c\rightarrow +\infty}[U(\kappa(c))1_{\eta<0}]=+\infty.
\end{split}
\end{equation*}
This shows that $\lim_{c\rightarrow +\infty}E[U(\kappa(c)1_{\eta<0})]=-\infty$. From this the claim in ii) follows.

iii)  Take any $c_0\in [0, \hat{c}]$. If $c_0=0$ the limit $\lim_{c\rightarrow c_0}$ is understood from the right-hand-side and if $c_0=\hat{c}$ the limit $\lim_{c\rightarrow c_0}$ is understood from the left-hand-side in the below discussions. First observe that $\lim_{c\rightarrow c_0}\kappa(c)=\kappa(c_0)$ almost surely. Since $U$ is continuous by assumption $\lim_{c\rightarrow c_0}U(\kappa(c))=U(\kappa(c_0))$ almost surely. By Fatou's Lemma we have 
\begin{equation*}
\begin{split}
-\limsup_{c\rightarrow c_0}\Gamma(c)=&\liminf_{c\rightarrow c_0}E[-U(\kappa(c))]\geq E\liminf_{c\rightarrow c_0}[-U(\kappa(c))]\\
=&-E[U(\kappa(c_0))]=\Gamma(c_0),
\end{split}
\end{equation*}
which shows $\limsup_{c\rightarrow c_0}\Gamma(c)\le \Gamma(c_0)$.
\end{proof}
\begin{remark} The upper semi-continuity of $\Gamma(c)$ shows that $\Gamma(c)$ has global maximum on any compact subset of $[0, +\infty]$. Hence the problem (\ref{F-nu}) is well-defined. Recall that the class of Sahara utility functions with parameters $a>1$, satisfy $\lim_{w\rightarrow -\infty}U(w)=-\infty$ as shown in Lemma \ref{3.8} above. Thus for Sahara utility functions with $a>0$, the corresponding $\Gamma(c)$ satisfy $\Gamma(0)=U(W_0(1+r_f))$ (a finite value) and $\lim_{w\rightarrow +\infty}\Gamma(c)=-\infty$. Also  $\Gamma$ is strictly concave on $[0, \bar{c}]$ for any $\bar{c}$ with $\Gamma(\bar{c})>-\infty$, as Sahara utility functions are strictly concave. 
\end{remark}

 The next result establishes a relation of the function in (\ref{fff}) with the optimization problem (\ref{new-L2}) above. Especially this Proposition shows that the solution for the optimization problem (\ref{new-L2}) is always unique when $U$ is strictly concave. 
\begin{theorem}\label{3.155} Assume the utility function $U$ satisfies Assumption 1 and the model (\ref{one}) satisfies Assumption 2. Assume, in addition, $U$ is concave. Then a portfolio $x^{\star}$ is an optimal portfolio for (\ref{new-L2}) if and only if 
\begin{equation}\label{main-xstar}
x^{\star}=\frac{c^{\star}}{v^T\Sigma^{-1}v}\Sigma^{-1}v,
\end{equation}
where $v=:\mu-r_f\1+\gamma EZ$ and 
\begin{equation}\label{F-nu}
 c^{\star}\in \arg max_{c\in [0, +\infty)}\Gamma(c),     
\end{equation}
with $\Gamma(c)$ given in (\ref{fff}). If $U$ is strictly concave then the solution for (\ref{new-L2}) is unique.
\end{theorem}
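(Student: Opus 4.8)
The plan is to read the theorem off the two structural facts already established: Lemma~\ref{main-th}, which forces every optimizer of (\ref{new-L2}) onto the one-parameter curve $\{x_c\}_{c\ge 0}$ given by (\ref{xstar}), and Lemma~\ref{lem311}, which supplies the concavity, the coercivity $\lim_{c\to+\infty}\Gamma(c)=-\infty$, and the upper semicontinuity of $\Gamma$. The first step I would carry out is the reduction $\sup_{x\in\R^d}EU(W(x))=\sup_{c\ge 0}\Gamma(c)$. The inequality ``$\ge$'' is immediate: each $x_c$ is feasible and $EU(W(x_c))=\Gamma(c)$ by (\ref{74}). For ``$\le$'', fix $x\in\R^d$ and set $v=\mu-r_f\1+\gamma EZ$; if $x^Tv<0$, then Jensen's inequality together with monotonicity of $U$ (the computation behind (\ref{7999}) in the proof of Lemma~\ref{main-th}) gives $EU(W(x))\le EU(W(0))=\Gamma(0)$, and if $c:=x^Tv\ge 0$, then Proposition~\ref{hes214} gives $W(x_c)\succeq_2 W(x)$, hence $EU(W(x))\le EU(W(x_c))=\Gamma(c)$ by concavity and monotonicity of $U$. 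In either case $EU(W(x))\le\sup_{c\ge 0}\Gamma(c)$.

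For the ``only if'' direction I would argue as follows. Let $x^\star$ be an optimal portfolio. By Lemma~\ref{main-th} it solves the quadratic problem (\ref{Qopt-2}) for some $c^\star\ge 0$, and the unique solution of that problem is $x^\star=x_{c^\star}=\frac{c^\star}{v^T\Sigma^{-1}v}\Sigma^{-1}v$, i.e.\ (\ref{main-xstar}); moreover $EU(W(x^\star))=\Gamma(c^\star)$ by (\ref{74}). Since $x^\star$ attains $\sup_{x}EU(W(x))$, which by the reduction above equals $\sup_{c\ge 0}\Gamma(c)$, we get $\Gamma(c^\star)=\sup_{c\ge 0}\Gamma(c)$, that is $c^\star\in\arg\max_{c\in[0,+\infty)}\Gamma(c)$, which is (\ref{F-nu}). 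This supremum is finite because $EU(W(x^\star))$ is, by Assumption~1 and the existence statement in Lemma~\ref{lem3.4}.

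For the converse I would note that if $x^\star=x_{c^\star}$ with $c^\star\in\arg\max_{c\ge 0}\Gamma(c)$, then $EU(W(x^\star))=\Gamma(c^\star)=\sup_{c\ge 0}\Gamma(c)=\sup_{x\in\R^d}EU(W(x))$, so $x^\star$ is optimal; here the set $\arg\max_{c\ge 0}\Gamma(c)$ is non-empty because Lemma~\ref{lem3.4} produces an optimal $x_0$, which by the ``only if'' part has the form $x_{c_0}$ with $c_0$ a maximizer of $\Gamma$ (or, directly, combine $\Gamma(0)=U(W_0(1+r_f))>-\infty$, coercivity from Lemma~\ref{lem311}(ii), and upper semicontinuity from Lemma~\ref{lem311}(iii) on a large enough compact $[0,\hat c]$). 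Finally, if $U$ is strictly concave, Lemma~\ref{lem311}(i) makes $\Gamma$ strictly concave on every interval $[0,\hat c]$ on which it is finite, so $\Gamma$ has at most one maximizer on $[0,+\infty)$; since $c\mapsto x_c=\frac{c}{v^T\Sigma^{-1}v}\Sigma^{-1}v$ is injective (as $v\neq 0$ by Assumption~2 and $\Sigma$ is invertible, whence $\Sigma^{-1}v\neq 0$), the optimal portfolio is then unique.

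The step I expect to demand the most care is the reduction together with the simultaneous identification of optimizers: one must partition $\R^d$ by the sign of $x^Tv$, invoke the second-order stochastic dominance of $W(x_c)$ over every competitor on the hyperplane $\{x^Tv=c\}$, and check that $\arg\max_{c\ge 0}\Gamma(c)$ is non-empty so the characterization is not vacuous. Once this is in place, both implications and the uniqueness claim follow by assembling Lemmas~\ref{main-th}, \ref{lem311}, and \ref{lem3.4}.
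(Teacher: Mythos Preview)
Your proof is correct and follows essentially the same approach as the paper: the same reduction via Lemma~\ref{main-th}, the same case split on the sign of $x^Tv$ combined with Proposition~\ref{hes214} (second-order stochastic dominance on the hyperplane) and Jensen's inequality, and the same identification $EU(W(x_c))=\Gamma(c)$ from (\ref{74}). Your organization is slightly cleaner in that you isolate the reduction $\sup_x EU(W(x))=\sup_{c\ge 0}\Gamma(c)$ as a standalone step, and you supply the uniqueness argument (strict concavity of $\Gamma$ plus injectivity of $c\mapsto x_c$) which the paper's proof actually omits.
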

\begin{proof} Assume $x^{\star}$ is an optimal portfolio for (\ref{new-L2}). Then by Lemma \ref{main-th}, $x^{\star}$ solves (\ref{Qopt-2}) for some $c=\bar{c}\geq 0$. The solution of (\ref{Qopt-2}) with $c$ replaced by $\bar{c}$ is given by
\[
x^{\star}=x_{\bar{c}}=\frac{\bar{c}}{v^T\Sigma^{-1}v}\Sigma^{-1}v.
\]
From (\ref{74}) we have
\[
EU(W(x^{\star}))=\Gamma(\bar{c}).
\]
Now any portfolio $x_c$ of the form (\ref{xstar}) with any $c\geq 0$ satisfies $EU(W(X_c))=\Gamma(c)$ again due to (\ref{74}). The optimality of $x^{\star}$ then gives 
$\Gamma(\bar{c})\geq \Gamma(c)$ for any $c\geq 0$. This shows that $\bar{c}\in \arg max_{c\in [0, +\infty)}\Gamma(c)$.

Now assume $x^{\star}$ is given by (\ref{main-xstar}) with $c^{\star}\geq 0$ given by (\ref{F-nu}). Then from (\ref{74}) we have $\Gamma(c^{\star})=EU(W(x^{\star}))$. For any other arbitrarily fixed portfolio $x_0\in \R^d$, denote $c_0=:x^T_0v$. Let $\bar{x}$ be the solution of (\ref{Qopt-2}) with $c$ replaced by $c_0$. Then we have $EW(x_0)=EW(\bar{x})$ while $\bar{x}^T\Sigma^{-1} \bar{x}\le \bb x_0^T\Sigma^{-1} x_0$. Hence from Assumption 2 we have $W(\bar{x})\succeq_2 W(x_0)$ implying $EU(W(\bar{x}))\geq EU(W(x_0))$. Now if $c_0<0$, then $EW(\bar{x})=W_0(1+r_f)+W_0c_0<W_0(1+r_f)$ and since $U$ is concave and non-decreasing we have $EU(W(\bar{x}))\le U(EW(\bar{x}))\le U(W_0(1+r_f))=EU(W(0))$. So any portfolio  $x_0$ with $c_0=x_0^Tv<0$ can not be optimal. Now take any portfolio $x_0$ with $c_0=x_0^Tv\geq 0$. Define 
\[
\bar{x}=:\frac{c_0}{v^T\Sigma^{-1}v}\Sigma^{-1}v.
\]
Then $\bar{x}$ solves the problem (\ref{Qopt-2}) with
$c$ replaced by $c_0$. Hence by Assumption 2 we have $W(\bar{x})\succeq_2 W(x_0)$. At the same time, from (\ref{74}) we have $\Gamma(c_0)=EU(W(\bar{x}))$. From these we conclude 
\[
EU(W(x^{\star}))=\Gamma(c^{\star})\geq \Gamma(c_0)=EU(W(\bar{x}))\geq EU(W(x_0)).
\]
 as $c^{\star}$ satisfies (\ref{F-nu}). Hence $x^{\star}$ is the optimizing portfolio for the problem (\ref{new-L2}).
\end{proof}
\begin{remark} As Theorem \ref{3.155} shows the optimal portfolio for the optimization problem (\ref{new-L2}) is related with the maximum value of the function $\Gamma(c)$ on $[0, +\infty)$.
For any two $c_1\geq 0, c_2\geq 0$ with $c_1\geq c_2$, the random variable $w(c_1)$ has higher mean than the random variable $w(c_2)$, i.e.,  $Ew(c_1)=W_0(1+r_f)+c_1\eta +c_1\kappa EZ \geq W_0(1+r_f)+c_2\eta +c_2 \kappa EZ=Ew(c_2)$ due to Lemma \ref{3.122}. But at the same time $c_1\sigma>c_2\sigma$, where $\sigma$ is given by  (\ref{etac}). Hence we can not claim, by using the Assumption 2 above, that  $w(c_1)\succeq_2 w(c_2)$ which would have implied $\Gamma(c_1)\geq \Gamma(c_2)$ as $U$ is concave.  
\end{remark}

\begin{example}\label{log_normal} Consider an economy with $d+1$ assets. The risk-free asset dynamics is $dB_t=r_fB_tdt$ and the remaining $d$ assets follow multi-dimensional Black-Scholes model:
\[
\frac{dS_t^{(i)}}{S_t^{(i)}}=\beta_idt+\sigma_idW_t^{(i)}, i=1, 2, \cdots, d,
\]
where $\beta_i$ represents the drift rate of stock $i$, $W_t^{(i)}$ is standard Brownian motion, and $\sigma_i$ is the volatility of the stock $i$. The Wiener processes $W_t^{(i)}$ and   $W_t^{(j)}$  are correlated with correlation coefficient $\rho_{ij}$. Denoting $\mu=(\mu_i)_{1\le i\le d}$ where $\mu_i=\beta_i-\frac{1}{2}\sigma_i^2, i=1, 2, \cdots, d$, the one-period log-returns of the $d$ risky assets, which we denote by $X$,  in this economy satisfies 
\[
X\overset{d}{=}\mu+\Sigma^{\frac{1}{2}}N_d,
\]
where $\Sigma=(\sigma_i\sigma_j\rho_{i,j})$ is the co-variance matrix (which we assume strictly positive definite) and $N_d$ is a $d-$dimensional standard Normal random vector. Now take any utility function $U$ that satisfies the Assumption 1 and consider the problem (\ref{new-L2}). From Theorem
\ref{3.155}, the corresponding optimal portfolios take the following form
\[
x^{\star}=\frac{c^{\star}}{v^T\Sigma^{-1} v}\Sigma^{-1}v,
\]
where $v=\mu-r_f\1$ and $c^{\star}=\arg max _{c\in [0, +\infty)]}\Gamma(c)$ with
\[
\Gamma(c)=EU[W_0(1+r_f)+W_0c+W_0c N(0, 1)].
\]
As Lemma \ref{lem311} shows the function $\Gamma(c)$ is concave, upper semi-continuous, and $\lim_{c\rightarrow +\infty}\Gamma(c)=-\infty$. These conditions imply that $c^{\star}$ exists and finite. We remark here that the conclusion of this example follows easily by using the stochastic dominance relation among Normal random variables as our utility functions are concave. Here we merely presented this simple example as a direct application of our Theorem \ref{3.155}. 

\end{example}

\begin{example} Following the idea of Example \ref{short}, take $\ell(x)=x^{q}, q>1, x\geq 0,$  and define $U(x)=-(x^-)^q$. Then the function $\Gamma $ in (\ref{fff}) is given by
\[
\Gamma(c)=-E[(w(c))^-]^q, \; c\geq 0.
\]
By Lemma \ref{lem311} this function is a concave function and $\lim_{c\rightarrow +\infty}\Gamma(c)=-\infty$. Since the model satisfies Assumption 2,   $Z$ is in $L^k$ for some $k\geq 1$ and  hence  $w(c)$ is integrable for each $c\geq 0$. These do not guarantee that $(w(c))^-\in L^q$ for a given $q>1$ and in the case of $(w(c))^-\notin L^q$ for some $c\in [0, +\infty)$  we have $\Gamma(c)=-\infty$ for such $c$. By Theorem \ref{3.155}, the corresponding expected utility maximizing portfolios are given by (\ref{main-xstar}). The $c^{\star}$ here are the maximizing points (if they are not unique) of $\Gamma$ on the non-negative real line and they need to be found by using numerical procedures.    
\end{example}

\subsection{Two-fund separation}

The two-fund separation theorem, introduced by \citet{Tobin58}, is a cornerstone of modern portfolio theory. It states that an investor with a quadratic utility function should divide their initial wealth allocation into two distinct steps. First, they should identify the tangency portfolio -- the combination of risky assets that maximizes the Sharpe ratio. Then, they should decide on the optimal mix between this tangency portfolio and the risk-free asset, depending on the investor’s attitude toward risk
\footnote{Confusingly, the two-fund separation theorem (with a risk-free asset) is often referred to as “the one-fund theorem” in the literature (e.g., \citet{Rockafellar.etal06}). According to \citet{Ross78},  who defined “(strong/weak) $k$-fund separability”, the two-fund separation theorem (where all risk averse investors choose portfolios made up of investment in a fund consisting of only the risk-free asset and the tangency portfolio of risky assets) corresponds to {\it one-fund separability}.}.

Originally, Tobin’s two-fund separation theorem, together with Markowitz's mean-variance analysis, was formulated on the assumptions that asset returns are normally distributed and investors have quadratic utility functions. Over the decades, great efforts have been made to relax these assumptions. \citet{Cass.Stiglitz70} extended Tobin’s work by showing that the separation property holds for more general utility functions, specifically those that exhibit constant relative risk aversion (CRRA). \citet{Owen-Rabinovitch-1983} demonstrated that the two-fund separation property holds for a broader class of distributions beyond the normal distribution, specifically elliptical distributions.

%\vspace{0.2cm}

Moreover, since empirical studies indicate that asset returns are skewed in addition to fat-tailedness, suitable classes of probability distributions to capture such ``stylized facts'' have been sought, together with mild assumptions about investors’ preferences, see, e.g., \citet{Mencia.Sentana09}, or more recently, \citet{Birge_Chavez_2016, Birge-Bedoya},  \citet{Vanduffel.Yao17},  \citet{Bernard.etal21}, to name a few.

Our paper contributes to this line of research by extending the two-fund separation theorem from quadratic utility functions to a broad class of utility functions, albeit at the cost of limiting return vectors to follow a particular class of distributions. As mentioned earlier, our work shares an essentially common distributional setup with \citet{Birge_Chavez_2016, Birge-Bedoya},  \citet{Vanduffel.Yao17}, namely NMVM models.

%%%%%%%%%%%%%%%%%%%%%%%%%%%%%%%%%

Now let $\delta$ denote the proportion of initial wealth $W_0$ invested in the riskless asset and let $b_i, 1\le i\le d,$ denote the proportion of the remainder $W_0(1-\delta)$ invested in the $i-$th risky asset. Here $\delta$ is allowed to take negative values which corresponds to holding short positions on the risk-free asset. Then the corresponding wealth is given by
\begin{equation*}
\begin{split}
\Bar{W}(\delta, \bar{b})=&W_0[1+\delta r_f+(1-\delta)\bar{b}^TX],\\
=&W_0[1+r_f+(1-\delta)\bar{b}^T(X-\1r_f).
\end{split}
\end{equation*}
where $\bar{b}=(\bar{b}_1, \bar{b}_2, \cdots, \bar{b}_d)^T$. We clearly have $W(x)=\bar{W}(\delta, \bar{b})$ as long as $\sum_{i=1}^dx_i\neq 0$ and $\bar{b}=x/(\sum_{i=1}^dx_i), \; \delta=1-\sum_{i=1}^dx_i$. Observe that a portfolio $x$ with $\sum_{i=1}^dx_i=0$ corresponds to $\delta=1$,  investing all the initial wealth $W_0$ on the risk-free asset.

\begin{proposition}\label{t-fund} Consider the optimization problem (\ref{new-L2}) for a given economy $(U, X)$. Assume the return vector $X$ satisfies Assumption 2 and the utility function $U$ satisfies Assumption 1 and it is strictly concave. Denote by
$(b_i)_{1\le i\le d}$ the components of the vector $\Sigma^{-1}v$ where  $v=\mu-r_f\1+\gamma EZ$. If $b_0=:\sum_{i=1}^db_i\neq 0$ let $\bar{b}=\frac{1}{b_0}\Sigma^{-1}v$. Then the optimal investment strategy in the economy $(U, X)$ is a combination of the risk-free asset $r_f$ and the mutual fund $\bar{b}$.
\end{proposition}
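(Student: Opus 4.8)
The plan is to deduce the statement from the closed form in Theorem~\ref{3.155} and then rewrite that optimal portfolio in the $(\delta,\bar b)$ parametrization introduced just before the statement. First I would apply Theorem~\ref{3.155}: since $U$ is strictly concave and satisfies Assumption~1 and $X$ satisfies Assumption~2, the problem (\ref{new-L2}) has a unique solution, given by
\[
x^{\star}=\frac{c^{\star}}{v^{T}\Sigma^{-1}v}\,\Sigma^{-1}v,\qquad v=\mu-r_f\1+\gamma EZ,
\]
with $c^{\star}\ge 0$ as in (\ref{F-nu}); here $c^{\star}$ is itself unique because $\Gamma$ is strictly concave on the interval on which it is finite, is upper semi-continuous, satisfies $\Gamma(0)=U(W_0(1+r_f))>-\infty$, and $\lim_{c\to+\infty}\Gamma(c)=-\infty$ (Lemma~\ref{lem311}), while $v^{T}\Sigma^{-1}v>0$ as noted in the proof of Lemma~\ref{3.122}. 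The decisive point is that $x^{\star}$ is a nonnegative scalar multiple of the single vector $\Sigma^{-1}v$, which depends only on $\mu,\gamma,\Sigma,r_f$ and $EZ$, not on $U$.

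Next I would perform the change of variables. Set $b=\Sigma^{-1}v$ with components $(b_i)$, $b_0=\sum_{i=1}^{d}b_i$, and write $\lambda^{\star}=c^{\star}/(v^{T}\Sigma^{-1}v)\ge 0$, so that $x^{\star}=\lambda^{\star}b$. If $\lambda^{\star}=0$ (equivalently $c^{\star}=0$), then $x^{\star}=0$, i.e.\ the optimal strategy places all wealth in the riskless asset ($\delta=1$), which is trivially a combination of $r_f$ and $\bar b$. If $\lambda^{\star}>0$, then $\sum_{i=1}^{d}x^{\star}_i=\lambda^{\star}b_0\neq 0$ because $b_0\neq 0$ by hypothesis, so the identity $W(x^{\star})=\bar W(\delta^{\star},\bar b^{\star})$ recorded before the statement applies with
\[
\bar b^{\star}=\frac{x^{\star}}{\sum_{i=1}^{d}x^{\star}_i}=\frac{\lambda^{\star}b}{\lambda^{\star}b_0}=\frac{1}{b_0}\,\Sigma^{-1}v=\bar b,\qquad \delta^{\star}=1-\lambda^{\star}b_0.
\]
Hence the unique optimizer invests the fraction $\delta^{\star}$ of $W_0$ in the riskless asset and the remaining $W_0(1-\delta^{\star})$ in the risky mutual fund $\bar b$; since $\bar b$ does not depend on $U$, every strictly concave investor satisfying Assumption~1 holds the same risky fund, which is the asserted two-fund separation.

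I do not expect a substantive obstacle: once Theorem~\ref{3.155} is in hand the argument is essentially bookkeeping, and the only places that need care are (i) checking $v^{T}\Sigma^{-1}v>0$ and $b_0\neq 0$ so the normalizations are legitimate, and (ii) handling the degenerate cases in which $x^{\star}=0$ (equivalently $\sum_i x^{\star}_i=0$), where the strategy collapses to the pure riskless holding $\delta=1$. For completeness I would also remark that the normalized fund $\bar b\propto\Sigma^{-1}v$ is precisely the skewness-induced tangency portfolio $\mathcal T_{skew}$ referred to in the introduction, obtained from the classical tangency formula by replacing the excess-return vector $\mu-\1 r_f$ with its skewness-adjusted counterpart $v=\mu-\1 r_f+\gamma EZ$.
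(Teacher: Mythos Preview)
Your proposal is correct and follows essentially the same route as the paper: invoke Theorem~\ref{3.155} to obtain the unique optimizer $x^{\star}=\frac{c^{\star}}{v^{T}\Sigma^{-1}v}\Sigma^{-1}v$, then split into the cases $\sum_i x^{\star}_i=0$ (pure riskless holding, $\delta^{\star}=1$) and $\sum_i x^{\star}_i\neq 0$, in the latter case normalizing to recover $\bar b=\frac{1}{b_0}\Sigma^{-1}v$ and $\delta^{\star}=1-\sum_i x^{\star}_i$. Your write-up is in fact slightly more careful than the paper's, explicitly checking $v^{T}\Sigma^{-1}v>0$ and the uniqueness of $c^{\star}$ via Lemma~\ref{lem311}.
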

\begin{proof} From Theorem \ref{3.155}, the optimal portfolio is given by $x^{\star}=\frac{c^{\star}}{v^T\Sigma^{-1}v}\Sigma^{-1}v$ and it is unique. If $\sum_{i=1}^dx_i^{\star}=0$, which can happen if $c^{\star}=0$ for example, then $\delta^{\star}=1-\sum_{i=1}^dx_i^{\star}=1$ and in this case the optimal portfolio is to invest all the initial wealth $W_0$ on the risk-free asset. If $\sum_{i=1}^dx_i^{\star}\neq 0$ then clearly $b_0\neq 0$. In this case we define $\bar{b}=\frac{1}{b_0}\Sigma^{-1}v$. Hence when $\sum_{i=1}^dx_i^{\star}\neq 0$ the utility maximizing optimal strategy in the economy is to invest $\delta^{\star}=1-\sum_{i=1}^dx_i^{\star}$ proportion of the initial wealth $W_0$ into the risk-free asset and the proportion $1-\delta^{\star}$ of the initial wealth on the mutual fund $\bar{b}$. Here this proportion $\delta^{\star}$ depends on $c^{\star}$ and hence on the initial wealth $W_0$ and the utility function $U$ that the individual employs (assume that the model (\ref{one}) is fixed). 
\end{proof}

\begin{example} Consider the optimization problem (\ref{new-L2}) with the SAHARA utility function $U$ with parameter $a>0$. See Example \ref{3.8} for the details of this utility function. Let $X$ be any given model (\ref{one}) with $Z$ being any non-trivial non-negative random variable. Then according to Proposition \ref{t-fund}, the optimal strategy for the investor is to divide his wealth between the risk-free asset and the mutual fund $\bar{b}$ defined in the Proposition \ref{t-fund}.
\end{example}

\section{Numerical illustrations}

In this Section we numerically illustrate our results in the earlier Sections. First recall that for any square integrable random vector $Y$ with mean vector $m=EY$ and covariance matrix $\Omega=Cov(Y)$, the associated tangent portfolio takes the following form
\begin{equation}\label{tangent}
\mathcal{T}_{mv}=\frac{1}{\1^T\Omega^{-1}(m-\mathbf{1}r_f)}\Omega^{-1}(m-\mathbf{1}r_f).    
\end{equation}
To make comparisons between tangent portfolios  (\ref{tangent}) and optimal portfolios $x^{\star}$ in our Theorem \ref{3.155} we first normalize $x^{\star}$ so that the components of the normalized one adds up to one. We have \[
\1^Tx^{\star}=\frac{c^{\star}}{v^T\Sigma^{-1}v}\1^T\Sigma^{-1}v, \;\;\; \bar{x}^{\star}=:\frac{x^{\star}}{\1^Tx^{\star}}=\frac{1}{\1^T\Sigma^{-1}v}\Sigma^{-1}v.
\]
Clear $\bar{x}^{\star}$ is the tangent portfolio that corresponds to a return vector $\bar{Y}$ with mean $E\bar{Y}=\mu+\gamma EZ$ and covariance matrix $Cov(\bar{Y})=\Sigma$ (recall hear $\Sigma=A^TA$). We call this  ``skewness-induced tangent portfolio''  and we denote it by $\mathcal{T}_{skew}$. We remark here that, in comparison, the mean-variance frontier tangent portfolio of the vector $Y=X$ takes the form (\ref{tangent}) with $m=\mu+\gamma EZ$ and $\Omega=Var (Z)\gamma^T\gamma+EZ \Sigma$.

By using the tangent portfolio $\mathcal{T}_{skew}$, the optimal portfolio $x^{\star}$ can be written as 
\[
x^{\star}=\frac{\1^T\Sigma^{-1}v}{v^T\Sigma^{-1}v}c^{\star}\mathcal{T}_{skew},
\] 
where $c^{\star}$ is computed by (\ref{F-nu}).  The constant $\frac{\1^T\Sigma^{-1}v}{v^T\Sigma^{-1}v}c^{\star}$ depends on the utility function $U$, the return vector $X$, the risk-free interest rate $r_f$, and the initial wealth $W_0$. We use the notation $\lambda_{U}=\lambda(U, r_f, X, W_0)$ to denote it, i.e., $\lambda_{U}=\frac{\1^T\Sigma^{-1}v}{v^T\Sigma^{-1}v}c^{\star}$. 

The financial interpretation of the portfolio $x^{\star}$ is that an expected utility maximizer with utility function $U$ invests the proportion $\lambda_U$ of the initial wealth $W_0$ on the tangent portfolio $\mathcal{T}_{skew}$ and the remaining proportion $1-\lambda_U$ on the risk-free asset with return $r_f$ as optimal strategy. The components of the tangent portfolio $\mathcal{T}_{skew}$ represent the proportions that needs to be invested optimally on each risky asset. Clearly the tangent portfolio $\mathcal{T}_{skew}$ is independent from the utility function $U$ and the initial wealth $W_0$ and it is determined by the return vector $X$ and the risk-free rate $r_f$. Hence the effect of the utility function $U$ and the initial wealth $W_0$ on the optimal investment is reflected on the proportion $\lambda_U$ that one needs to invest on the risky assets only. 

From the above discussions it becomes clear that our numerical tests on our Theorem \ref{3.155} boils down to computing $\lambda_U$ for different utility functions and different models (\ref{one}). For our numerical tests we consider skewed $t-$distributions which are given by the model (\ref{one}) with $Z\sim InverseGamma( \delta /2, \delta /2), \delta >2$. With this model we have 
\begin{equation}\label{skew}
v=\mu+\frac{\delta}{\delta-2}\gamma-\1 r_f, \; \;  \mathcal{T}_{skew}=\frac{x^{\star}}{\1^Tx^{\star}}=\frac{1}{\1^T\Sigma^{-1}v}\Sigma^{-1}v.
\end{equation}
The vector $\mathcal{T}_{skew}$ already tells us the optimal proportion that needs to be invested on each risky assets. Therefore it remains to calculate $\lambda_U$. It depends on the utility function of the investor. For our numerical tests we consider the following family of SAHARA utility functions

\begin{equation*}
U(a, b)=U(w; a, b)=-\frac{1}{a^2-1}\frac{w+a\sqrt{b^2+w^2}}{[w+\sqrt{b^2+w^2}]^a}, \;\;\;    a>1, b>0.
\end{equation*}
To reflect the dependence on the parameters of these type of utility functions we denote $\lambda_{U(a, b)}$. Recall that $\lambda_{U(a, b)}=\frac{c^{\star}(a, b)}{v^T\Sigma^{-1}v}\1^T\Sigma^{-1}v$ with 
\[
c^{\star}(a, b)=\arg max_{c\in [0, \infty)} \Gamma(c),
\]
where the function $\Gamma(c)$ is given by (\ref{fff}). 

For our model (\ref{one}), we estimate its parameters from the real work data. We consider the daily log-returns of \texttt{Apple Inc. (AAPL)}, \texttt{Microsoft Corporation (MSFT)}, \texttt{Alphabet Inc. (GOOGL)}, and \texttt{Amazon.com, Inc. (AMZN)} from January 1, 2020 to January 1, 2021. Below table is a summary of these 4 stocks.
\begin{table}[H]
\centering
\caption{Descriptive statistics of stock returns}
\label{tab:stock-stats}
\begin{tabular}{lcccc}
\hline \hline
Stock & Mean & Variance & Skewness & Kurtosis \\
\hline 
AAPL  & 0.002293 & 0.000866 & -0.301542 & 6.818200 \\
AMZN  & 0.002143 & 0.000586 & -0.018665 & 4.222902 \\
GOOGL & 0.000981 & 0.000592 & -0.473868 & 7.205632 \\
MSFT  & 0.001334 & 0.000770 & -0.418345 & 9.770342 \\
\hline \hline
\end{tabular}
\end{table}

The annual risk-free rate is taken to be $1.25\%$. We apply the Expectation–Maximization (EM) algorithm to estimate the parameters of the $Skew$-$t$ distribution, where we used the $R$ Package "\texttt{fitHeavyTail}", and the results are shown as:

\begin{gather*}
    \delta = 3.228143, \\[6pt]
    \mu =
    \begin{bmatrix}
        0.00321155 & -0.00042093 & 0.00231314 & 0.00124911
    \end{bmatrix}, \\[6pt]
    \gamma =
    \begin{bmatrix}
        -0.00039805 & 0.00011115 & -0.00005774 & 0.00000366
    \end{bmatrix},
\end{gather*}
and
\begin{equation*}
    \Sigma =
    \begin{bmatrix}
        0.00037775 & 0.00023791 & 0.00023987 & 0.00028738 \\
        0.00023791 & 0.00028480 & 0.00019535 & 0.00023228 \\
        0.00023987 & 0.00019535 & 0.00025751 & 0.00024117 \\
        0.00028738 & 0.00023228 & 0.00024117 & 0.00031692
    \end{bmatrix}.
\end{equation*}

We check that these data satisfies $\mu+\gamma EZ-\1 r_r\neq 0$. Based on these data we calculate the skewness-induced tangent portfolio (\ref{skew}) and the mean-std tangent portfolio (\ref{tangent}) (here $m=\mu+\gamma EZ$ and $\Omega=Var(Z)\gamma^T\gamma +EZ \Sigma$) as
\begin{equation}
\begin{split}
\mathcal{T}_{\mbox{skew}}=& 
\begin{bmatrix}
     1.05174647 & 2.08253672 & -1.33955922 & -0.79472397
\end{bmatrix},\\
% Sum is 0.9999999999999993
\mathcal{T}_{mv}^{\small{nmvm}}= &
\begin{bmatrix}
     0.94092756 & 2.31151985 & -1.52934055 & -0.72310686 
\end{bmatrix}.\\
% Sum is 1.0
\end{split}
\end{equation}
It is interesting to observe here that the skewness-induced tangent portfolio $\mathcal{T}_{\mbox{skew}}$ is quite different from the mean-variance tangent portfolio $\mathcal{T}_{mv}^{\small{nmvm}}$. We recall here that $\mathcal{T}_{\mbox{skew}}$ is the optimal portfolio for any risk-averse expected utility maximizer and $\mathcal{T}_{mv}^{\small{nmvm}}$ is the optimal portfolio for the investors who care only on the mean and variance of their investment returns. The values of $\lambda_{U(a, b)}$ for different combinations of the parameters $(a, b)$ are listed in the following table when $W_0=5$. Figure 1 gives the graph of $\lambda_{U(a, b)}$ as a function of the parameters $(a, b)$.

\begin{table}[H]
\centering
\caption{Values of $\lambda_{U{(a, b)}}$ for different $(a, b)$ pairs.}
\begin{tabular}{c|ccccc}
\hline \hline
$W_0=5$ & $b=0.5$ & $b=1.0$ & $b=1.5$ & $b=2.0$ & $b=2.5$ \\ \hline 
$a = 1.5$ & 1.5241 & 1.5473 & 1.5841 & 1.6367 & 1.6999 \\
$a = 2.0$ & 1.1465 & 1.1621 & 1.1904 & 1.2295 & 1.2758 \\
$a = 2.5$ & 0.9172 & 0.9307 & 0.9520 & 0.9841 & 1.0203 \\
$a = 3.0$ & 0.7653 & 0.7759 & 0.7958 & 0.8196 & 0.8503 \\
$a = 3.5$ & 0.6569 & 0.6632 & 0.6815 & 0.7026 & 0.7284 \\ \hline \hline
\end{tabular}
\end{table}

\begin{figure}[H]
    \centering
    \includegraphics[width=0.9\linewidth]{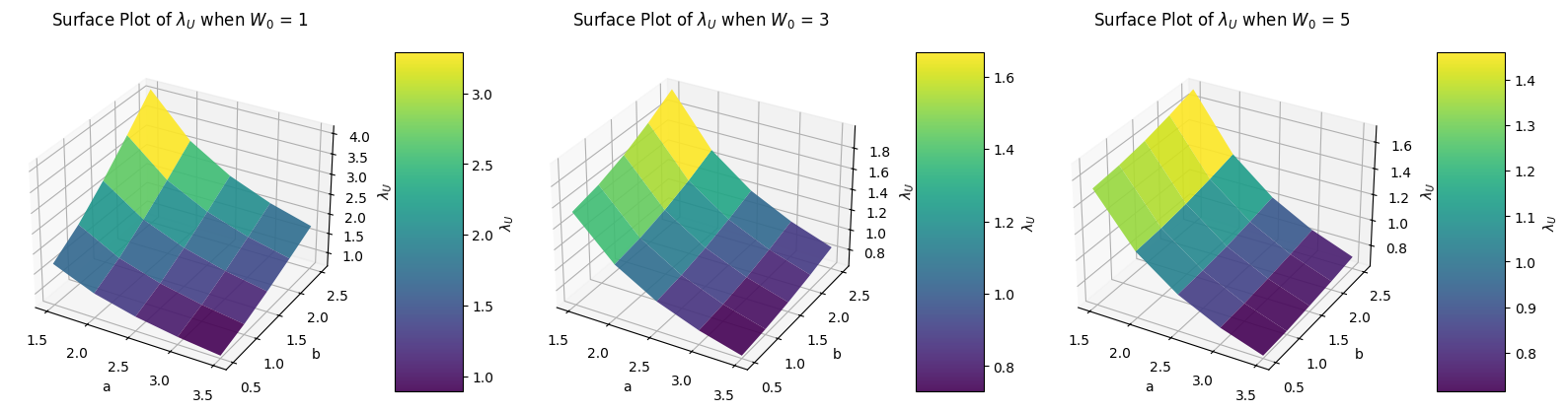}
    \caption{Surface in the three dimensional space $(a, b, \lambda_{U(a, b)})$ under the $Skew$-$t$ distribution}
    \label{fig_surface_stocks_parameters}
\end{figure}

Now we estimate the parameters of the log-normal model, as a testing to our example \ref{log_normal}, based on the same data for the above mentioned four stocks. Namely $X$ satisfies 
\[
X \overset{d}{=} \hat{\mu} + \hat{\Sigma}^{\frac{1}{2}} N_d,
\]
with
\[
\hat{\mu} =
\begin{bmatrix}
0.00229349 & 0.00214277 & 0.00098126 & 0.00133359
\end{bmatrix},
\]
and
\[
\hat{\Sigma} =
\begin{bmatrix}
0.00086637 & 0.00050060 & 0.00054323 & 0.00068924 \\
0.00050060 & 0.00058552 & 0.00039930 & 0.00049770 \\
0.00054323 & 0.00039930 & 0.00059199 & 0.00058189 \\
0.00068924 & 0.00049770 & 0.00058189 & 0.00076986
\end{bmatrix}.
\]

Based on these data, we calculate the tangent portfolio as
\[
\mathcal{T}_{mv}^{normal}=
\begin{bmatrix}
1.12344127 & 1.30546066 & -0.43472208 & -0.99417985
\end{bmatrix}.
\]
% Sum is 1.0
Clearly in this case the skewness-induced tangent portfolio $\hat{\mathcal{T}}_{\mathrm{skew}}$ is the same with $\mathcal{T}_{mv}^{normal}$. The values of $\hat{\lambda}_{U(a,b)}$ for different combinations of parameters $(a,b)$ are listed in the following table.
\begin{table}[H]
\centering
\caption{Values of $\hat{\lambda}_{U{(a, b)}}$ with $W_0 = 5$: log-normal case}
\begin{tabular}{c|ccccc}
\hline \hline
$W_0=5$ & $b=0.5$ & $b=1.0$ & $b=1.5$ & $b=2.0$ & $b=2.5$ \\ \hline 
$a = 1.5$ & 2.1245 & 2.1403 & 2.1691 & 2.2037 & 2.2513 \\
$a = 2.0$ & 1.5971 & 1.6101 & 1.6288 & 1.6540 & 1.6920 \\
$a = 2.5$ & 1.2794 & 1.2872 & 1.3016 & 1.3293 & 1.3547 \\
$a = 3.0$ & 1.0647 & 1.0743 & 1.0869 & 1.1063 & 1.1280 \\
$a = 3.5$ & 0.9141 & 0.9196 & 0.9328 & 0.9452 & 0.9671 \\\hline \hline
\end{tabular}
\end{table}
\begin{figure}[H]
    \centering
    \includegraphics[width=0.9\linewidth]{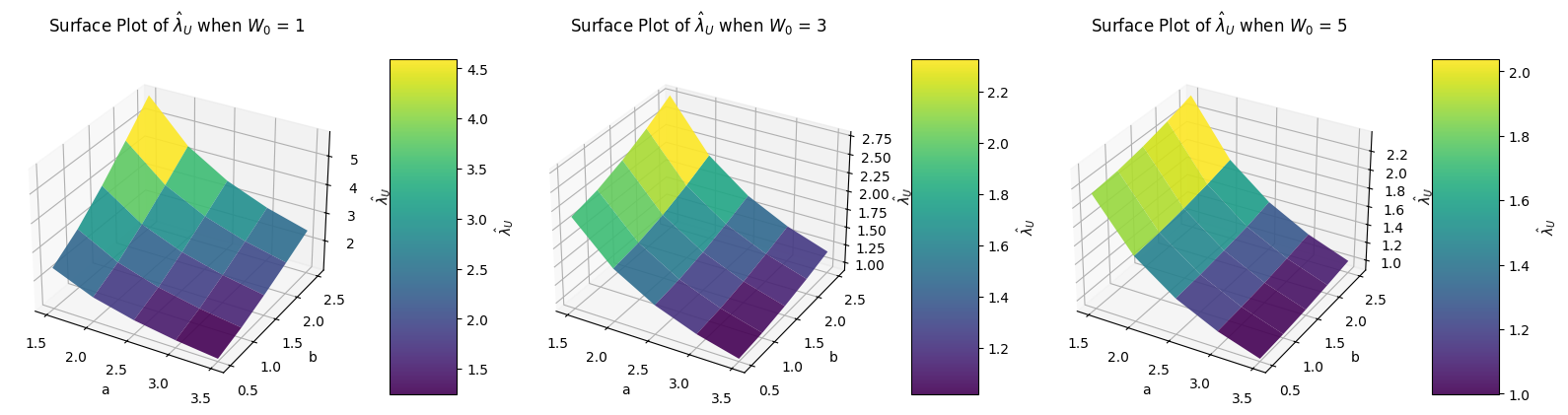}
    \caption{Surface in the three dimensional space $(a, b, \lambda_{U(a, b)})$ under the log-normal distribution}
    \label{fig_surface_stocks_parameters}
\end{figure}

Next we conduct testing on our Theorem  \ref{3.155}. To this end, we set the distribution parameters as mentioned above. The annual risk-free rate is set at $1.25\%$, with an initial wealth of $W_0 = 10$. The parameters for the SAHARA utility function are $a=1.5$, $b=5$. To assess the sensitivity of our theorem, we conducted a uniformly sampling of 100,000 instances from the interval \([-2 \times x_i, 4 \times x_i]\) for values of \(x_i \geq 0\), and from the interval \([4 \times x_i, -2 \times x_i]\) for values of \(x_i < 0\). Subsequently, we calculated the expected utility for each sample and identified the maximum value within that parameter setting.

\begin{figure}[H]
    \centering
    \includegraphics[width=1\linewidth]{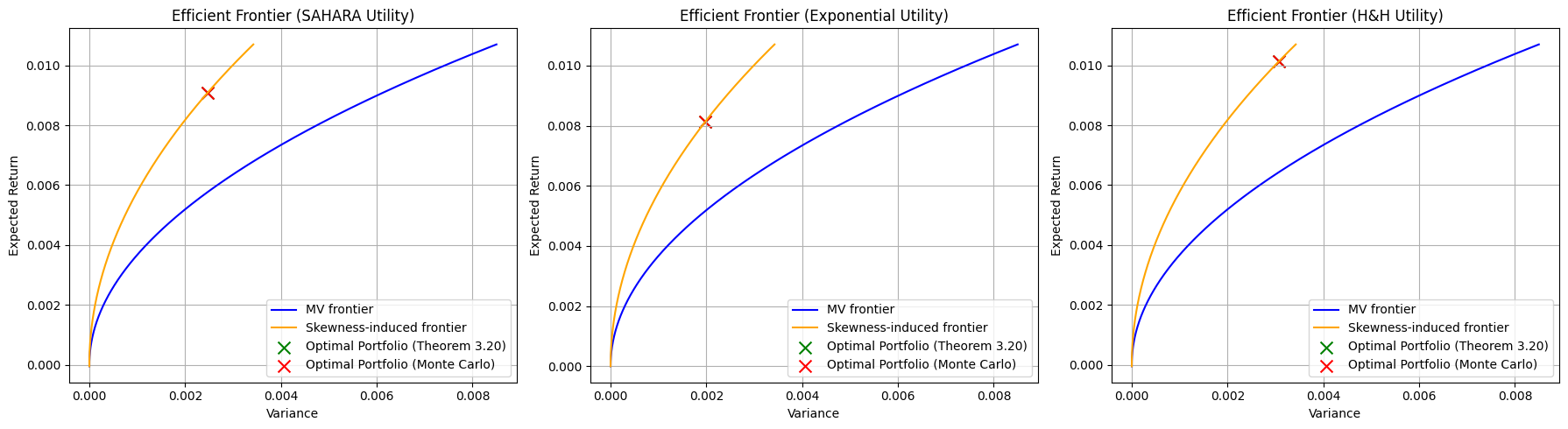}
    \caption{MV frontier (blue), Skewness-induced frontier (yellow), Optimal portfolio}
    \label{EF_Utility}
\end{figure}

The above Figure 3 presents the standard mean-variance frontier curve (cobalt blue curve) and the skewness-induced frontier curve (yellow curve). The mean and variance of the optimal portfolio (using $\bar{Y}$ defined above as return vector) in Theorem \ref{3.122} are found by using both (\ref{main-xstar}) and the Monte-Carlo methods. The corresponding points clearly lie on the skewness-induced frontier and they are extremely close as can be seen from the graphs. In the following Table 4 we test the accuracy of  the optimal portfolio (\ref{main-xstar}) by comparing it to the optimal portfolio obtained by using Monte-Carlo methods.  As it can be seen easily, these portfolios are extremely close to each other also.

\begin{table}[H]
\caption{Optimal portfolios under different utility functions}
\label{Table_All}
\centering
\begin{tabular}{c|cc|cc|cc}
\hline\hline
\multirow{2}{*}{Utility} 
 & \multicolumn{2}{c|}{\textbf{SAHARA}} 
 & \multicolumn{2}{c|}{\textbf{Exponential}} 
 & \multicolumn{2}{c}{\textbf{Henderson--Hubson}} \\
 & \multicolumn{2}{c|}{$a=1.5,\, b=5$} 
 & \multicolumn{2}{c|}{$\alpha=0.15$} 
 & \multicolumn{2}{c}{$\tau=0.1$} \\ \hline
Method 
 & Theorem \ref{3.155} & MC 
 & Theorem \ref{3.155} & MC 
 & Theorem \ref{3.155} & MC \\ \hline
Stock 1 & 1.787827 & 1.787767 & 1.597969 & 1.597988 & 1.993506 & 1.993621 \\
Stock 2 & 3.540032 & 3.539912 & 3.164098 & 3.164136 & 3.947291 & 3.947519 \\
Stock 3 & -2.277070 & -2.277147 & -2.035257 & -2.035233 & -2.539033 & -2.538887 \\
Stock 4 & -1.350924 & -1.350969 & -1.207462 & -1.207448 & -1.506339 & -1.506252 \\ \hline
ED & \multicolumn{2}{c|}{0.000161} & \multicolumn{2}{c|}{0.000050} & \multicolumn{2}{c}{0.000306} \\
CE & \multicolumn{2}{c|}{10.128456} & \multicolumn{2}{c|}{10.059688} & \multicolumn{2}{c}{9.950811} \\ \hline\hline
\end{tabular}
\vspace{1ex}
\begin{minipage}{0.9\linewidth}
\footnotesize
\textit{Notes:} ED denotes the Euclidean Distance between the portfolio weights obtained from the Theorem \ref{3.155} and the Monte Carlo (MC) method.  
CE denotes the Certainty Equivalent, which is the certain wealth level yielding the same expected utility as the optimal portfolio.
\end{minipage}
\end{table}

\section{Appendix: Proof of well posedness} 
Here we show that when the utility  function  $U$ satisfies  Assumption 1 and the model (\ref{one}) satisfies Assumption 2, the problem (\ref{new-L2}) is well-posed. To make our discussions convenient, we 
transform the portfolio space $R^d$ by a transformation that will be introduced below.
First note  that , with (\ref{one}), we have $X-\1 r_f=(\mu-\1 r_f)+\gamma Z+\sqrt{Z}AN_d$. We introduce a linear one-to-one transformation $\mathcal{T}:\R^d\rightarrow \R^d$, that maps $x\in \R^d$ into $y\in \R^d$ as  $y^T=x^TA$. We denote by $A_1^c, A_2^c, \cdots, A_d^c$ the column vectors of $A$ and express both $\mu-\1r_f$ and $\gamma$ as linear combinations of  $A_1^c, A_2^c, \cdots, A_d^c$, i.e., $\mu-\1 r_f=\sum_{i=1}^d\mu_i^{0}A_i^c,\; \gamma=\sum_{i=1}^d\gamma_i^{0}A_i^c$.
We denote by $\mu_0$ and $\gamma_0$ the column vectors of the coefficients of the above linear transformation, i.e., $\mu_0=(\mu_1^0, \mu_2^0, \cdots, \mu_d^0)^T, \;  \gamma_0=(\gamma_1^0, \gamma_2^0, \cdots, \gamma_d^0)^T$. Then for any portfolio $x$ we have $x^T(X-\1 r_f)\overset{d}{=}y^T\mu_0+y^T\gamma_0Z+|y|\sqrt{Z}N(0, 1)$, where $y^T=x^TA$ and $|\cdot|$ denotes the Euclidean norm of vectors. We have $y^T\mu_0=|y||\mu_0|Cos(y, \mu_0)$ and $y^T\gamma_0=|y||\gamma_0|Cos(y, \gamma_0)$, where $Cos(y, \mu_0)=(\mu_0 \cdot y)/ |\mu_0||y|$ and $Cos(y, \gamma_0)=(\gamma_0 \cdot y)/|\gamma_0||y|$ denote  the cosines of the angles between the vectors $y$ and $\mu_0$ and $y$ and $\gamma_0$ respectively. From now on we denote $\phi_y=Cos[(\gamma_0, y)]\;\;  \mbox{and} \;\;  \psi_y=Cos[(\mu_0, y)]$ for notational convenience. Observe that
$W(x)\overset{d}{=} W_0(1+r_f)+W_0[y^T\mu_0+y^T\gamma_0Z+|y|\sqrt{Z}N(0, 1)]$, whenever $x$ and $y$ are related by $y^T=x^TA$. For convenience, we also introduce the following notation $W(y)=:W_0(1+r_f)+W_0|y|\Big [|\mu_0|\psi_y+|\gamma_0|\phi_y Z+\sqrt{Z}N(0, 1)\Big ]$ and with this we have $W(x)\overset{d}{=}W(y)$ as long as $y^T=x^TA$. Clearly finding the solutions of (\ref{new-L2}) is equal to finding the solutions of 
$\max_{y\in \R^d}\; EU(W(y))$. 

Next we discuss the existence of a solution for (\ref{new-L2}). We assume that, as a minimal requirement, the utility function $U:\R\rightarrow \R$ is finite valued and non-decreasing. First we need to introduce some condition on the pair $(U, X)$ so that $EU(W(x))<+\infty$, whenever the Euclidean norm $|x|$ of the portfolio $x$ is finite, i.e. $|x|<+\infty$. For a similar discussion see Proposition 1 of \cite{Pirvu_Kwak}, where the expected utility under the cumulative prospect theory utility function was shown to be finite for any portfolio with a finite Euclidean norm when the return vector follows a skewed student  $t$-distribution. To this end, for any $\delta=(\delta_1, \delta_2, \delta_3)$ with $\delta_i\geq 0, 1\le i\le 3,$ we define $
X_{\delta}=\delta_1+\delta_2Z+\delta_3\sqrt{Z}|N(0, 1)|$, where $Z$ is the mixing distribution in (\ref{one}) and $N(0, 1)$ is any standard normal random variable independent from $Z$. Clearly, 
$X_{\delta}, \delta\in \R_3^{+},$ are non-negative random variables. We first write down the following simple Lemma.

\begin{lemma}\label{lem3.55} Consider an economy $(U, X)$ with $U:\R\rightarrow \R$ finite valued and non-decreasing. If 
\begin{equation}\label{the-con}
EU(X_{\delta})<+\infty    
\end{equation}
for all $\delta \in \R_3^{+}$, then $EU(W(x))<+\infty$ for any portfolio $x\in \R^d$ with $|x|<+\infty$. Hence if the condition (\ref{the-con}) holds and at the same time   
\begin{equation}\label{58888}
\sup_{x\in \R^n}EU(W(x))=+\infty,
\end{equation}   
happens then the economy $(U, X)$ admits AOP, i.e., there exists a sequence of portfolios $x_n$ with $|x_n|\rightarrow +\infty$ such that $EU(W(x_n))\rightarrow +\infty$ while there is no any portfolio with finite Euclidean norm that gives $+\infty$ expected utility.
\end{lemma}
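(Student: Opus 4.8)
The plan is to push everything through the one-dimensional representation $W(x)\overset{d}{=}W(y)$ with $y^{T}=x^{T}A$ recorded just above the lemma, and to dominate $W(y)$ pathwise by one of the reference variables $X_{\delta}$. Since $\Sigma=AA^{T}$ is positive definite, $A$ is invertible, so $|x|<+\infty$ is equivalent to $|y|<+\infty$; fix such an $x$ and its image $y$. On the probability space carrying $W(y)$, using $\psi_{y}\le 1$, $\phi_{y}\le 1$, $Z\ge 0$ and $N(0,1)\le |N(0,1)|$, I would write the pointwise bound
\[
W(y)\ \le\ W_{0}(1+r_{f})+W_{0}|y|\big[\,|\mu_{0}|+|\gamma_{0}|Z+\sqrt{Z}\,|N(0,1)|\,\big]\ =\ X_{\delta'},
\]
where $\delta'=\big(\,|W_{0}(1+r_{f})|+W_{0}|y|\,|\mu_{0}|,\ W_{0}|y|\,|\gamma_{0}|,\ W_{0}|y|\,\big)\in\R_{3}^{+}$ (the first coordinate is made nonnegative by bounding $W_{0}(1+r_{f})\le|W_{0}(1+r_{f})|$). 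As $U$ is non-decreasing, $U(W(y))\le U(X_{\delta'})$ almost surely, hence $EU(W(x))=EU(W(y))\le EU(X_{\delta'})<+\infty$ by hypothesis (\ref{the-con}). A short check makes the middle inequality meaningful: $U$ non-decreasing and finite-valued gives $U(X_{\delta'})\ge U(\delta'_{1})$, a finite constant, so finiteness of $EU(X_{\delta'})$ forces $E[U(X_{\delta'})^{+}]<+\infty$, whence $E[U(W(y))^{+}]<+\infty$ and $EU(W(y))$ is a well-defined element of $[-\infty,+\infty)$.

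For the second assertion, the statement just proved is exactly the second half of the definition of AOP: no portfolio of finite Euclidean norm delivers $+\infty$ expected utility. It remains to exhibit a maximizing sequence of divergent norm. Taking any $\{x_{n}\}$ with $EU(W(x_{n}))\to\sup_{x\in\R^{d}}EU(W(x))=+\infty$ (available by (\ref{58888})), I claim $|x_{n}|\to+\infty$. If not, a subsequence $\{x_{n_{k}}\}$ would lie in some ball $\{|x|\le M\}$; then $|y|=|x^{T}A|$ stays bounded on that ball, and because $X_{\delta}$ is non-decreasing in each coordinate of $\delta$ (as $Z\ge 0$ and $|N(0,1)|\ge 0$), the bound above can be taken uniform there, $EU(W(x))\le EU(X_{\delta''})<+\infty$ for a single $\delta''\in\R_{3}^{+}$ — contradicting $EU(W(x_{n_{k}}))\to+\infty$. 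Hence $|x_{n}|\to+\infty$ and $\{x_{n}\}$ witnesses AOP for $(U,X)$.

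The argument is essentially routine. The two points I would be careful about are: keeping the crude estimate genuinely pointwise on the space carrying $W(y)$, so that it transfers correctly through $W(x)\overset{d}{=}W(y)$; and the measure-theoretic bookkeeping that makes $EU(W(\cdot))$ a well-defined extended real, so that monotonicity of the integral may be invoked — both handled by $U$ being non-decreasing and finite-valued together with $X_{\delta'}\ge\delta'_{1}\ge 0$.
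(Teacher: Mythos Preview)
Your proof is correct and follows essentially the same route as the paper: pass to the $y$-coordinate representation, bound $W(y)$ pointwise from above by a random variable of the form $X_{\delta}$ via $\psi_y,\phi_y\le 1$ and $N\le|N|$, then invoke monotonicity of $U$ and hypothesis (\ref{the-con}); for the AOP part, both you and the paper argue that a bounded (sub)sequence with $EU(W(x_{n_k}))\to+\infty$ would be uniformly dominated by a single $X_{\delta}$, contradicting finiteness. Your write-up is slightly more careful than the paper's on two counts --- you explicitly verify that $EU(W(y))$ is a well-defined element of $[-\infty,+\infty)$ before taking the inequality in expectation, and you note the pointwise monotonicity of $X_\delta$ in $\delta$ to get the uniform bound over the ball --- but the underlying argument is the same.
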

\begin{proof} First note that $EU(W(x))=EU(W(y))$. We have $|y|=x^T\Sigma x$. Since $\Sigma$ is positive definite, there exists a constant $K>0$ such that $|y|=x^T\Sigma x\le K|x|$ for all $x\in \R^d$. For any positive number $m\geq 0$
define $D_m=\{y\in \R^d: |y|\le m \},\; \tilde{W}_m=W_0(1+r_f)+W_0m \left [|\mu_0|+|\gamma_0|Z+\sqrt{Z}|N|\right ]$. Also for any $y\in \R^d$  define $\tilde{W}_y=W_0(1+r_f)+W_0|y| [|\mu_0|+|\gamma_0|Z+\sqrt{Z}|N|]$. Observe that $W(y)\le \tilde{W}_y$ almost surely. Since the utility function $U$ is non-decreasing we have $EU(W(y))\le EU(\tilde{W}(y))$. On the domain $D_m$ we have $\tilde{W}_y\le \tilde{W}_m$ for all $m\geq 0$. Again since $U$ is non-decreasing we have $EU(\tilde{W}_y)\le EU(\tilde{W}_m)$ on the domain $D_m$ for each $m\geq 0$. The stated condition in the Lemma implies that $EU(\tilde{W}_m)<+\infty$ for each $m\geq 0$. From these we conclude that $EU(W(y))<+\infty$ on $D_m$ for each $m\geq 0$. Now for any portfolio $x\in \R^d$ with finite $m_0=:|x|$ we have $|y|\le m_0K$. Thus $y\in D_{m}$ with $m=Km_0$. This implies that $EU(W(x))=EU(W(y))<+\infty$. 
 
 To see the second part of the claim in the Lemma, note that for any portfolio $x$ with finite Euclidean norm we always have $EU(W(x))<+\infty$ under the condition (\ref{the-con}) as shown above. Therefore it is sufficient to rule out the possibility of the existence of a sequence $\{x_n\}$ of portfolios with uniformly bounded Euclidean norm  such that $EU(W(x_n))\rightarrow +\infty$. By the way of contrary assume that there is such sequence 
 $\{x_n\}$. Then it has a convergent sub-sequence 
 $x_{n_k}$ with $EU(W(x_{n_k}))\rightarrow +\infty$
 (such a sub-sequence exists as $\{|x_n|\}$ is bounded). Then all of  
 $\{x_{n_k}^T\mu\}, \{x_{n_k}^T\gamma\}, \{x_{n_k}^T\Sigma x_{n_k}\}$ are bounded families. Hence their exists $\delta_1>0, \delta_2>0, \delta_3>0$ such that $W(x_{n_k})\le X_{\delta}$ almost surely for all positive integer $k$. The utility function is non-decreasing hence we have 
 $EU(W(x_{n_k}))\le EU(X_{\delta})<+\infty$, a contradiction.
 \end{proof}

\begin{lemma}\label{prop3.6} Consider an economy $(U, X)$ with $U: \R\rightarrow \R$ finite-valued and non-decreasing. Assume $U$ is concave and the mixing distribution $Z$ in (\ref{one}) has finite first moment. Then for any portfolio $x$ with $|x|<+\infty$ we have $EU(W(x))<+\infty$. Hence under the condition that $U$ is concave and $Z\in L^1$ in the model (\ref{one}), if
\[
\sup_{x\in \R^n}EU(W(x))=+\infty,
\]
then the economy $(U, X)$ admits AOP.
\end{lemma}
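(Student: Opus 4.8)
The plan is to deduce everything from Jensen's inequality, using concavity of $U$ together with the integrability that the assumption $Z\in L^1$ provides.

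First I would establish the relevant moment bounds. From (\ref{one}) we have $X=\mu+\gamma Z+\sqrt{Z}AN_d$, so for each component $E|X_i|\le|\mu_i|+|\gamma_i|EZ+\sum_j|A_{ij}|\,E[\sqrt{Z}\,|N_j|]$. Since $Z\in L^1$ we have $EZ<+\infty$, and by Jensen's inequality applied to the concave map $t\mapsto\sqrt{t}$ we get $E\sqrt{Z}\le\sqrt{EZ}<+\infty$; as $N_d$ is independent of $Z$ with $E|N_j|<\infty$, it follows that $E|X_i|<+\infty$ for every $i$. Hence, by (\ref{wealth}), $E|W(x)|<+\infty$ for any portfolio $x$ with $|x|<+\infty$, and in particular $EW(x)=W_0(1+r_f)+W_0[x^T(\mu-\1 r_f)+x^T\gamma EZ]$ is a finite real number.

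Next I would apply Jensen's inequality to the concave function $U$. Since $U$ is concave it lies below a supporting affine function $w\mapsto a+bw$, so $U(W(x))^+\le a^++|b|\,|W(x)|$ is integrable; this guarantees that $EU(W(x))$ is well-defined in $[-\infty,+\infty)$ and that Jensen's inequality $EU(W(x))\le U(EW(x))$ is applicable. Because $U$ is finite valued and $EW(x)\in\R$, the right-hand side is a finite number, so $EU(W(x))<+\infty$ for every $x$ with $|x|<+\infty$. This proves the first claim.

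For the second claim I would simply verify the hypothesis (\ref{the-con}) of Lemma \ref{lem3.55} and invoke that lemma. For $\delta=(\delta_1,\delta_2,\delta_3)$ with $\delta_i\ge0$, the random variable $X_{\delta}=\delta_1+\delta_2 Z+\delta_3\sqrt{Z}\,|N(0,1)|$ satisfies $EX_{\delta}=\delta_1+\delta_2 EZ+\delta_3 E\sqrt{Z}\,E|N(0,1)|<+\infty$ by the bounds already obtained, and Jensen's inequality for the concave $U$ gives $EU(X_{\delta})\le U(EX_{\delta})<+\infty$ since $U$ is finite valued; thus (\ref{the-con}) holds for all admissible $\delta$. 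Lemma \ref{lem3.55} then yields that whenever $\sup_{x\in\R^d}EU(W(x))=+\infty$ the economy $(U,X)$ admits AOP. I do not expect a genuine obstacle here: the only points requiring care are checking that $EU(W(x))$ is a well-defined element of $[-\infty,+\infty)$ before invoking Jensen's inequality, which is handled by the supporting-line bound on the concave $U$, and that $E\sqrt{Z}<+\infty$, which follows from $Z\in L^1$ via Jensen.
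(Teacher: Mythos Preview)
Your proposal is correct and follows essentially the same route as the paper: verify condition (\ref{the-con}) via Jensen's inequality (using $E\sqrt{Z}<\infty$ from $Z\in L^1$) and invoke Lemma \ref{lem3.55}. The only difference is that the paper applies Lemma \ref{lem3.55} for \emph{both} conclusions at once, whereas you first prove $EU(W(x))<+\infty$ directly by Jensen on $W(x)$ and only afterwards check (\ref{the-con}) for the AOP statement; your direct argument is slightly redundant (though more self-contained and, with the supporting-line remark, actually more careful about the well-definedness of $EU(W(x))$ than the paper).
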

\begin{proof} By Lemma \ref{lem3.55} we need to show (\ref{the-con}). Since $U$ is concave we have
\[
EU(X_{\delta})\le U(EX_{\delta})=U(\delta_1+\delta_2EZ+\delta_3E\sqrt{Z}E|N|).
\]
Since $Z\in L^1$ we have $\sqrt{Z}\in L^2\subset L^1$. Therefore $\delta_1+\delta_2EZ+\delta_3E\sqrt{Z}E|N|$ is a finite number for each fixed $\delta$. Then since $U$ is finite valued we clearly have $U(\delta_1+\delta_2EZ+\delta_3E\sqrt{Z}E|N|)<+\infty$. 
Then from Lemma \ref{lem3.55} we know that for any portfolio $x$ with finite Euclidean norm  we have 
$EU(W(x))<+\infty$. Thus the remaining claims in the Lemma holds.     
\end{proof}
Clearly AOP are costly since  with AOP one has to invest infinite amount on the risky assets. Hence an economy $(U, X)$ that admits AOP is impractical for an expected utility maximizer as financial resources are always limited. This illustrates the need for introducing  some sufficient conditions on $(U, X)$ that are necessary for the exclusion of AOP. Below we discuss this problem.

In the rest of this section, we discuss some conditions on the utility function $U$ that can rule out the possibility (\ref{58888}). Note first that for the trivial portfolio $x=0\in \R^d$ (investing everything on the risk-free asset) we have $EU(W(0))=U(W_0(1+r_f))>-\infty$. Therefore we always have $\sup_{x\in \R^d}EU(W(x))>-\infty$ as long as $U$ is a finite valued utility function. Recall that with the transformation $y=x^TA$ introduced in the introduction we have $EU(W(x))=EU(W(y))$ as long as $y=x^TA$. Therefore in the rest of this section we work at the ``$y-$coordinate system'' and study some conditions on $U$ that can rule out (\ref{58888}). To this end, we define the following sets first $D_{\pi}=:D(\pi_1, \pi_2)=:\{y\in \R^n: \phi_y=\pi_1, \psi_y=\pi_2\}$  for any $\pi=(\pi_1, \pi_2)\in I=:[-1, 1]\times [-1, 1]$. For convenience we introduce the following notations:
$\xi_{\pi}=:|\mu_0|\pi_1+|\gamma_0|\pi_2 Z+\sqrt{Z}N(0, 1), \; \; \pi  \in I$. Observe that for each fixed $\pi$, the random variable $\xi_{\pi}$ has support in $(-\infty, +\infty)$ as long as $Z$ is not trivial, i.e., $P(Z>0)>0$. With this notation we have $W(y)=W_0(1+r_f)+W_0|y|\xi_{\pi}$,  when $y\in D_{\pi}$. From this it is easy to see that for each fixed $\pi\in I$ we have $\lim_{|y| \rightarrow 0^+}W(y)\overset{a.s.}{\rightarrow} \xi_0,   \; \; \lim_{|y| \rightarrow +\infty}
W(y)\overset{a.s.}{\rightarrow} \xi_{\infty}^{\pi}$ when $y\in D_{\pi}$, 
where $\xi_0=W_0(1+r_f)$ is a constant and $\xi_{\infty}^{\pi}$ equals to $+\infty$ if $\xi_{\pi}> 0$ and equals to $-\infty$ if $\xi_{\pi}\le 0$. Since $\xi_{\pi}$ has full support as long as $P(Z>0)> 0$, we have  $P(\xi_{\pi}>0)>0$ and  $P(\xi_{\pi}\le 0)>0$ for each fixed $\pi \in I$. Hence the limit random variable $\xi_{\infty}^{\pi}$ above is non-trivial.
\begin{lemma} \label{lem3.4} Consider the optimization problem 
\begin{equation}\label{wyyy}
\max_{y\in \R^d}\; EU(W(y)).
\end{equation}
\begin{enumerate}
\item [i)] If $U$ is finite valued,  bounded from below,  and $\lim_{w\rightarrow +\infty}U(w)=+\infty$, then 
 \begin{equation*}
\sup_{y\in \R^d}\; EU(W(y))=+\infty.
\end{equation*}  

\item [ii)] If $U$ is finite valued, continuous, bounded from above,  and $\lim_{w\rightarrow -\infty}U(w)=-\infty$, then (\ref{wyyy}) is well defined.
\end{enumerate}
\end{lemma}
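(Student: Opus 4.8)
For part (i), the plan is to exhibit an explicit one-parameter family of portfolios along which the expected utility blows up. Work in the $y$-coordinate system. Fix a direction $\pi = (\pi_1, \pi_2)$ for which the random variable $\xi_\pi = |\mu_0|\pi_1 + |\gamma_0|\pi_2 Z + \sqrt{Z}N(0,1)$ satisfies $P(\xi_\pi > 0) > 0$; as noted in the excerpt, this holds for every $\pi \in I$ since $Z$ is non-trivial and hence $\xi_\pi$ has full support. Take $y_n \in D_\pi$ with $|y_n| \to +\infty$. On the event $\{\xi_\pi > 0\}$ we have $W(y_n) = W_0(1+r_f) + W_0|y_n|\xi_\pi \to +\infty$ almost surely, so $U(W(y_n)) \to +\infty$ on this event by the hypothesis $\lim_{w\to+\infty}U(w) = +\infty$; since $U$ is bounded below by some constant $-M$, Fatou's lemma applied to $U(W(y_n)) + M \geq 0$ gives
\[
\liminf_{n\to\infty} E[U(W(y_n))] \geq E\big[\liminf_{n\to\infty}U(W(y_n))\big] = +\infty,
\]
where the last equality uses that the integrand is $+\infty$ on a set of positive probability (and bounded below elsewhere). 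This forces $\sup_{y} EU(W(y)) = +\infty$. The only subtlety is confirming that one can actually choose such $y_n$ inside $D_\pi$ with divergent norm, which is immediate since $D_\pi$ is a cone (scaling $y$ leaves $\phi_y, \psi_y$ unchanged).

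For part (ii), the plan is the standard "coercivity plus upper semicontinuity on a compactification" argument, but carried out carefully because the portfolio space is unbounded. First, since $U$ is bounded above, $\sup_y EU(W(y)) =: S < +\infty$, and since $EU(W(0)) = U(W_0(1+r_f))$ is finite, $S > -\infty$. Next I would establish a coercivity statement: as $|y| \to \infty$ along \emph{any} sequence, $\limsup EU(W(y)) $ stays bounded away from $S$, or more precisely that a maximizing sequence cannot have a subsequence with divergent norm. The key estimate is that for $y \in D_\pi$, $W(y) = W_0(1+r_f) + W_0|y|\xi_\pi$, and since $P(\xi_\pi < 0) > 0$ uniformly — in fact one needs a uniform-in-$\pi$ lower bound $\inf_{\pi \in I} P(\xi_\pi \leq -\epsilon) > 0$ for some $\epsilon > 0$, which follows by a compactness argument on the compact set $I$ together with continuity of $\pi \mapsto P(\xi_\pi \leq -\epsilon)$ — we get $W(y) \to -\infty$ on that event, hence $U(W(y)) \to -\infty$ there by $\lim_{w\to-\infty}U(w) = -\infty$, while $U$ stays bounded above elsewhere; Fatou (applied to $M - U(W(y_n)) \geq 0$) then yields $\limsup_n EU(W(y_n)) = -\infty < S$ for any sequence with $|y_n| \to \infty$. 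Therefore every maximizing sequence is norm-bounded, so it has a convergent subsequence $y_{n_k} \to y^\star$; upper semicontinuity of $y \mapsto EU(W(y))$ (which follows from continuity of $U$, a.s. convergence $W(y_{n_k}) \to W(y^\star)$, the uniform upper bound on $U$, and reverse Fatou) gives $EU(W(y^\star)) \geq \limsup_k EU(W(y_{n_k})) = S$, so $y^\star$ attains the supremum and $EU(W(y^\star)) = S < +\infty$. Pulling back through $y = x^T A$ gives a maximizer in the original coordinates.

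The main obstacle is the uniform-in-direction lower bound in part (ii): one must rule out that, as $|y|\to\infty$ along a sequence whose direction $\pi_n$ also varies, the "downside probability" $P(\xi_{\pi_n} \le -\epsilon)$ degenerates to $0$. I would handle this by noting that $\pi \mapsto \xi_\pi$ is continuous in distribution on the compact square $I$, that each $\xi_\pi$ has a density (being a normal mean-variance mixture with $N(0,1)$ and $P(Z>0)>0$) so $\pi \mapsto P(\xi_\pi \le -\epsilon)$ is continuous and strictly positive for suitable small $\epsilon$, and hence attains a positive minimum over $I$. This uniform lower bound is exactly what converts the pointwise divergence $W(y)\to-\infty$ into the required uniform coercivity $\limsup EU(W(y_n)) = -\infty$. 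Everything else — Fatou in both directions, extracting convergent subsequences, upper semicontinuity — is routine once this uniformity is in hand.
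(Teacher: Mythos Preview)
Your proposal is correct and follows essentially the same route as the paper: part~(i) is identical (fix a direction, let $|y|\to\infty$, Fatou on the event $\{\xi_\pi>0\}$), and part~(ii) is the same coercivity-plus-upper-semicontinuity scheme. The only difference is in how you handle varying directions in the coercivity step: the paper simply extracts a subsequence along which $(\phi_{y_n},\psi_{y_n})\to\pi^0$, so that $\xi_{\pi^n}\to\xi_{\pi^0}$ a.s.\ and one can apply Fatou on the \emph{fixed} event $\{\xi_{\pi^0}<0\}$; you instead go through a uniform lower bound $\inf_{\pi\in I}P(\xi_\pi\le-\epsilon)>0$.

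One small imprecision: your Fatou application for coercivity does not quite work as written, because the events $\{\xi_{\pi_n}\le-\epsilon\}$ vary with $n$, so for a fixed $\omega$ you cannot assert $\liminf_n\bigl(M-U(W(y_n))\bigr)=+\infty$. Your uniform probability bound \emph{does} give the conclusion, but via a direct estimate rather than Fatou: on $\{\xi_{\pi_n}\le-\epsilon\}$ one has $W(y_n)\le W_0(1+r_f)-W_0|y_n|\epsilon$, hence $U(W(y_n))\le -K$ once $|y_n|$ is large, so $EU(W(y_n))\le -K\delta+M\to-\infty$. Alternatively, adopt the paper's subsequence-in-direction trick, which makes the Fatou step legitimate on a fixed limit event. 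Either repair is routine.
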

\begin{proof} Assume $U$ is bounded from below and $\lim_{w\rightarrow +\infty}U(w)=+\infty$. For each fixed $\pi\in I$, when $y\in D_{\pi}$ we have
\begin{equation}\label{bd-u-d}
EU(W(y))=E[U(W_0(1+r_f)+W_0|y|\xi_{\pi})1_{\xi_{\pi}>0}]
+E[U(W_0(1+r_f)+W_0|y|\xi_{\pi})1_{\xi_{\pi}<0}].
\end{equation}
On the event $\{\xi_{\pi}>0\}$, when $|y|\rightarrow +\infty$ we have $W_0(1+r_f)+W_0|y|\xi_{\pi}\rightarrow +\infty$ almost surely. Therefore $U(W_0(1+r_f)+W_0|y|\xi_{\pi})\rightarrow +\infty$ almost surely also on $\{\xi_{\pi}>0\}$ as $\lim_{w\rightarrow +\infty}U(w)=+\infty$. Since $U$ is bounded from below, an application of Fatou's lemma gives
\[
\liminf_{|y|\rightarrow +\infty}E[U(W_0(1+r_f)+W_0|y|\xi_{\pi})1_{\xi_{\pi}>0}]\geq E\liminf_{|y|\rightarrow +\infty}[U(W_0(1+r_f)+W_0|y|\xi_{\pi})1_{\xi_{\pi}>0}]=+\infty.
\]
At the same time the second term on the right-hand-side of (\ref{bd-u-d}) is bounded from bellow. Hence we can conclude that 
$\liminf_{|y|\rightarrow +\infty}EU(W(y))\rightarrow +\infty$. 

Now assume $U$ is bounded from above by a real number $M$ and $\lim_{w\rightarrow -\infty}U(w)=-\infty$. We have $EU(W(y))\le M$ for all $y\in \R^n$. Denote $B=\sup_{y\in \R^d}EU(W(y))\le M$. Since $EU(W(0))=U(W_0(1+r_f))>-\infty$, we have $B>-\infty$. First we rule out the following case: there exists a sequence $y_n$ with 
$|y_n|\rightarrow +\infty$ such that
$\lim_{n\rightarrow +\infty}EU(W(y_n))=\sup_{y\in \R^d}EU(W(y))<+\infty$. Assume by contradiction that such  a sequence $\{y_n\}$ exists. Denote $\phi_{y_n}=Cos[(\gamma_0, y_n)]$ and $\psi_{y_n}=Cos[(\mu_0, y_n)]$.
Since $\phi_{y_n}$ and $\psi_{y_n}$ take values in $[-1, 1]$ we can assume that $y_n$ has a sub-sequence such that 
both $\phi_{y_n}$ and $\psi_{y_n}$ converges. Without loss of any generality we can assume $\phi_{y_n}\rightarrow \pi_1^0\in [-1, 1]$ and $\psi_{y_n}\rightarrow \pi_2^0\in [-1, 1]$. We define 
$\xi_{\pi^n}=|\mu_0|\phi_{y_n}+|\gamma_0|\psi_{y_n}Z+\sqrt{Z}N(0, 1)$ and $\xi_{\pi^0}=|\mu_0|\pi_1^0+|\gamma_0|\pi_2^0Z+\sqrt{Z}N(0, 1)$. Observe that $\xi_{\pi^n}\rightarrow \xi_{\pi^0}$ almost surely and $P(\xi_{\pi^0}>0)>0$. We have 
\begin{equation}\label{73-new}
\begin{split}
EU(W(y_n))=&E[U(W_0(1+r_f)+W_0|y_n|\xi_{\pi^n})1_{\xi_{\pi^n}>0}]\\
+&E[U(W_0(1+r_f)+W_0|y_n|\xi_{\pi_n})1_{\xi_{\pi^n}<0}].\\
\end{split}
\end{equation}
The first term on the right-hand-side of (\ref{73-new}) is bounded from above as $U$ is bounded from above. The second term on the right-hand-side of (\ref{73-new}) can be shown to converge to $-\infty$ when $|y_n|\rightarrow +\infty$ (by using Fatou's Lemma) as $U$ is continuous and $\lim_{w\rightarrow -\infty}U(w)=-\infty$ by the assumption on $U$. Hence we can conclude that $EU(W(y_n))\rightarrow -\infty$ as $|y_n|\rightarrow +\infty$. But $sup_{y\in \R^d}EU(W(y))\geq EU(W(0))=U(W_0(1+r_f))>-\infty$. Thus $\lim_{n\rightarrow +\infty}EU(W(y_n))=\sup_{y\in \R^d}EU(W(y))<+\infty$ can not happen.

Now, by the definition of $B$ there exists a sequence $\bar{y}_n$ such that $\lim_{n\rightarrow +\infty}EU(W(\bar{y}_n))=B$. By the above analysis the sequence $\bar{y}_n$ can't have a sub-sequence, which we denote by itself $\bar{y}_n$ for the sake of notational simplicity, such that $|\bar{y}_n|\rightarrow +\infty$. Therefore $\{|\bar{y}_n|\}$ is a bounded sequence. Hence we can conclude that there exists a vector $\bar{y}_0\in \R^d$ and a sub-sequence of $\{\bar{y}_n\}$, which we denote by itself again, such that $\bar{y}_n\rightarrow \bar{y}_0$. Then $W(\bar{y}_n)\rightarrow W(\bar{y}_0)$ almost surely and since $U$ is continuous we have $U(W(\bar{y}_n))\rightarrow U(W(\bar{y}_0))$ almost surely. Since $U$ is bounded from above, the family $\{-U(W(\bar{y}_n))\}$ is bounded from below. By Fatou's lemma we have
\[
\liminf_nE[-U(W(\bar{y}_n))]\geq E\liminf_n[-U(W(\bar{y}_n))]=-EU(W(\bar{y}_0)).
\]
From this we conclude that $\limsup_nE[U(W(\bar{y}_n))]\le EU(W(\bar{y}_0))$, which implies $EU(W(\bar{y}_0))=\sup_{y\in \R^d}EU(W(y))$.
\end{proof}

We remark here that the above Lemma \ref{lem3.4} gives some sufficient conditions on the utility function $U$ for the well-posedness of the problem (\ref{new-L2}). Clearly the exponential utility functions $U(w)=-e^{aw}, a>0,$ satisfy the  conditions stated in the second half of this Lemma. Hence the problem (\ref{new-L2}) with exponential utility is well-posed, see the recent paper \cite{Rasonyi-Sayit} for this. Other than these, the utility functions that are presented in Example \ref{short} below and the class of Sahara utility functions with certain parameters (see Lemma \ref{3.8} below) also satisfy the sufficiency for the well-posedness of the problem (\ref{new-L2}). If the utility function $U$ satisfies 
\begin{equation}\label{both-inf}
 \lim_{x\rightarrow +\infty}U(x)=+\infty, \; \; \lim_{x\rightarrow -\infty}U(x)=-\infty,   
\end{equation}
then the well-posedness or the existence of the AOP in the economy $(U, X)$ depends on the properties of $U$ and also on $X$. The paper  \cite{Pirvu_Kwak} in its Proposition 2 shows the well-posedness of the expected utility maximization problem under the cumulative prospect theory utility function 
when $X$ is a skewed student  $t$-distribution. The cumulative prospect theory utility function clearly satisfies (\ref{both-inf}). Obtaining a sufficient condition on the utility function $U$ with (\ref{both-inf}) that guarantee the well-posedness of the problem (\ref{new-L2}) for any given model (\ref{one}) seems difficult. Below we present an example that demonstrates that when the utility function $U$ satisfies (\ref{both-inf}), the problem (\ref{new-L2}) is well-posed for some models (\ref{one}) and the economy $(U, X)$ admits AOP in some cases.

\begin{example} Consider the model (\ref{one}) in the Example \ref{ex-fi} with the corresponding wealth $W(x)=W_0+xW_0N(0, 1)$. Take the following utility function
\begin{equation*}
 U(x)=\left \{
\begin{array}{cc}
 k_1x    &  x\geq 0, \\
 k_2x    & x<0.
\end{array}
 \right.
\end{equation*}
for some $k_1>0, k_2>0$. For $x>0$ we can easily calculate
\[
EU(W(x))=W_0k_1W_0+W_0(k_2-k_1)\Phi(-\frac{1}{x})+\frac{W_0x}{\sqrt{2\pi}}e^{-\frac{1}{2x^2}}(k_1-k_2).
\]
The term  $W_0k_1W_0+W_0(k_2-k_1)\Phi(-\frac{1}{x})$ in this expression is a bounded number for all $x>0$. Therefore if $k_1<k_2$ then when $x\rightarrow +\infty$ we have $EU(W(x))=-\infty$. Since $W(x)\overset{d}{=}W(-x)$, we can hence conclude that when $k_1<k_2$, the problem (\ref{new-L2}) is well-posed. On the other hand if $k_2>k_1$, then when $x\rightarrow +\infty$ we have $EU(W(x))=+\infty$. In this case the economy $(U, X)$ admits AOP.
\end{example}
If the utility function $U$ satisfies (\ref{both-inf}), then the  limit $\lim_{|y| \rightarrow +\infty}U(W(y))$ when $y\in D_{\pi}$
may do not exist. Even if it exists and 
$EU(\xi_{\infty}^{\pi})$ is well defined it may happen that 
$\max_{x\in \R^d}EU(W(x))\le EU(\xi_{\infty}^{\pi})$ for some $\pi \in I$. Hence for utility functions $U$ with (\ref{both-inf}), the well-posedness of the problem (\ref{new-L2}) needs some care. A similar problem were studied in Proposition 2 of \cite{Pirvu_Kwak} when the utility function $S-$ shaped utility function and when the return vector has a skewed student $t$-distribution. The conditions in Assumption 1 above does not guarantee that the map $y\rightarrow EU(W(y))$ is continuous. As discussed in Example 1 in \cite{Rasonyi-Sayit}, when $U(w)=e^{-aw}, a>0,$ is exponential utility  and when $\gamma=0, Z=e^{N(0, 1)}$ in the model (\ref{one}) one has $EU(W(0))=U(W_0(1+r_f))$ and $EU(W(y))=-\infty$ for any other $y\neq 0$. Clearly in this case the map $y\rightarrow EU(W(y))$ is not continuous. Hence it is not immediately clear if the optimization problem (\ref{new-L2}) always has a solution when the domain $D$ is a bounded and closed subset of $\R^d$ under Assumption 1. 

In the following Lemma we show that the problem (\ref{L2}) is well-posed for any closed domain $D$ (no need to assume $D$ is bounded) and for any model (\ref{one}) as long as the utility function $U$ satisfies  Assumption 1 above.

\begin{lemma}\label{opt-D} Assume the utility function $U$ satisfies  Assumption 1. Let $D$ be any closed subset of $\R^d$  with a vector $x_0\in D$ such that $EU(W(x_0))>-\infty$. Then the map $x\rightarrow EU(W(x))$ is upper semi-continuous and 
the problem (\ref{L2}) always  has a solution for any given model (\ref{one})), i.e., there exists a $x_0\in D$ with $EU(W(x))\le EU(W(x_0))$ for all $x\in D$ and $EU(W(x_0))>-\infty$.    
\end{lemma}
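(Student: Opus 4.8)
The plan is to prove the two claims---upper semicontinuity of the objective and attainment of the supremum---by the method already used for Lemma~\ref{lem3.4}(ii): reduce, along any diverging portfolio direction, to a one-dimensional normal mean--variance mixture and exploit that such a mixture charges the negative half-line. The only genuinely new ingredient is that closedness of $D$ keeps the maximizing limit inside $D$.

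Fix a finite $M$ with $U\le M$ (this exists by Assumption~1). Since $M-U\ge 0$, the quantity $EU(W(x))$ is well defined in $[-\infty,M]$ for every $x\in\R^d$. For upper semicontinuity, let $x_n\to x$ in $\R^d$. For each fixed realization of $X$ the map $x\mapsto W(x)=W_0(1+r_f)+W_0x^T(X-\1 r_f)$ is affine, hence $W(x_n)\to W(x)$ surely, and by continuity of $U$ also $U(W(x_n))\to U(W(x))$ surely. Applying Fatou's lemma to the nonnegative random variables $M-U(W(x_n))$ gives $\liminf_n E[M-U(W(x_n))]\ge E[M-U(W(x))]$, i.e., $\limsup_n EU(W(x_n))\le EU(W(x))$. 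Thus $x\mapsto EU(W(x))$ is upper semicontinuous on $\R^d$.

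For existence, set $B=\sup_{x\in D}EU(W(x))$. By hypothesis $EU(W(x_0))>-\infty$ for some $x_0\in D$, so $B\in(-\infty,M]$; pick a maximizing sequence $\{x_n\}\subset D$ with $EU(W(x_n))\to B$. The main point is that $\{x_n\}$ must be bounded. Suppose not; after passing to a subsequence $|x_n|\to+\infty$, hence $r_n:=\sqrt{x_n^T\Sigma x_n}\to+\infty$ since $\Sigma$ is positive definite. With $a_n=x_n^T(\mu-\1 r_f)$ and $b_n=x_n^T\gamma$, the Cauchy--Schwarz inequality in the $\Sigma^{-1}$-inner product gives $|a_n|\le r_n\sqrt{\C}$ and $|b_n|\le r_n\sqrt{\A}$ (with $\A,\C$ as in (\ref{ABC})), so along a further subsequence $a_n/r_n\to\alpha$ and $b_n/r_n\to\beta$. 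Because $x_n^T(X-\1 r_f)\overset{d}{=}a_n+b_nZ+r_n\sqrt Z\,N(0,1)$ and $EU(W(x_n))$ depends only on the law of $W(x_n)$, we may compute it using $W(x_n)=W_0(1+r_f)+W_0r_n\zeta_n$ with $\zeta_n:=a_n/r_n+(b_n/r_n)Z+\sqrt Z\,N(0,1)$ for a fixed $Z$ and an independent $N(0,1)$; then $\zeta_n\to\zeta:=\alpha+\beta Z+\sqrt Z\,N(0,1)$ almost surely. Conditionally on $\{Z=z>0\}$ the variable $\zeta$ is Gaussian, so $\zeta$ has full support on $\R$ (here $P(Z>0)>0$ is used) and in particular $p:=P(\zeta<0)>0$. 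On $\{\zeta<0\}$ one has $\zeta_n\le\zeta/2<0$ for all large $n$, hence $W(x_n)\le W_0(1+r_f)+W_0r_n\zeta/2\to-\infty$ almost surely there, so $U(W(x_n))\to-\infty$ almost surely on $\{\zeta<0\}$ by $\lim_{w\to-\infty}U(w)=-\infty$. Writing $EU(W(x_n))=E[U(W(x_n))1_{\zeta\ge 0}]+E[U(W(x_n))1_{\zeta<0}]$, the first term is at most $\max(M,0)$, while Fatou applied to the nonnegative variables $M1_{\zeta<0}-U(W(x_n))1_{\zeta<0}$ forces $\limsup_n E[U(W(x_n))1_{\zeta<0}]=-\infty$ (using $p>0$). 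Hence $EU(W(x_n))\to-\infty$, contradicting $EU(W(x_n))\to B>-\infty$. Therefore $\{x_n\}$ is bounded, so it has a convergent subsequence $x_{n_k}\to\bar x$, with $\bar x\in D$ because $D$ is closed; by the upper semicontinuity proved above, $EU(W(\bar x))\ge\limsup_k EU(W(x_{n_k}))=B$, and since $B=\sup_{x\in D}EU(W(x))$ this forces $EU(W(\bar x))=B>-\infty$. Thus $\bar x$ solves (\ref{L2}).

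The routine parts are the two Fatou computations; the real obstacle is the compactness step---ruling out maximizing sequences with $|x_n|\to+\infty$. It works because, after normalizing by $r_n=\sqrt{x_n^T\Sigma x_n}$, the returns converge in distribution to a nondegenerate normal mean--variance mixture $\zeta$ that necessarily assigns positive mass to $(-\infty,0)$ (a consequence of $P(Z>0)>0$), which drives the expected utility of ever larger portfolios to $-\infty$; this is exactly the escape-to-infinity mechanism of Lemma~\ref{lem3.4}(ii), which never used that the domain is all of $\R^d$, so the proof can largely appeal to it. What must be added is only the observation that closedness of $D$ traps the limit point of the (now necessarily bounded) maximizing sequence inside $D$, which together with upper semicontinuity yields a genuine maximizer; specializing to $D=\R^d$ gives the well-posedness of (\ref{new-L2}) in the sense of Definition~\ref{def3.1}.
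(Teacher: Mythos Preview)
Your proof is correct and follows essentially the same route as the paper's: Fatou for upper semicontinuity, a maximizing sequence, ruling out $|x_n|\to\infty$ via the one-dimensional mixture argument of Lemma~\ref{lem3.4}(ii), and closedness of $D$ to trap the limit. The only cosmetic difference is that the paper simply cites Lemma~\ref{lem3.4}(ii) for the divergence step, whereas you reprove it in place using the normalization $r_n=\sqrt{x_n^T\Sigma x_n}$ (which is exactly $|y_n|$ in the paper's $y$-coordinates), so the two arguments coincide.
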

\begin{proof} We have $EU(W(x))<+\infty$ for all $x\in D$ as $U$ is bounded above. Also $EU(W(0))=U(W_0(1+r_f))>-\infty$. Define the map $e: x\rightarrow EU(W(x))$ and let $e_0=\sup_{x\in D}e(x)$. Then $-\infty<e_0<+\infty$. By the definition of $e_0$, we have a sequence $x_n\in D$ such that $e(x_n)=EU(W(x_n))\rightarrow e_0$. Without loss of any generality we can assume that $-\infty<e(x_n)<+\infty$ for all $x_n$. We claim that the family $\{x_n\}$ is bounded in the Euclidean norm. If not then there exists a sub-sequence $x_{n_k}$ with $|x_{n_k}|\rightarrow +\infty$. Then by ii) of Lemma \ref{lem3.4} we have
$e(x_k)=EU(W(x_{n_k}))\rightarrow -\infty$. A contradiction. Hence $\{x_n\}$ is a bounded family.
Therefore the sequence $x_n$ has a convergent sub-sequence to a limit $x_0\in D$ (as $D$ is a closed subset). 
Without loss of any generality we assume $x_n\rightarrow x_0$. Then $W(x_n)$ converges to $W(x_0)$ almost surely and since $U$ is continuous we have $U(W(x_n))\rightarrow U(W(x_0))$ almost surely. Since $\{U(W(x_n))\}$ is uniformly bounded from above, the family \{$-U(W(x_n))$\} is a sequence of random variables bounded from below. Then by Fatuo's lemma we have
\begin{equation*}\label{68}
\begin{split}
-EU(W(x_0))=E\liminf_n[-U(W(x_n))]&\le \liminf_nE[-U(W(x_n)]\\
&=-\lim sup_nEU(W(x_n))=-e_0.
\end{split}
\end{equation*}
This shows that $e_0\le EU(W(x_0))$. Then the optimality of $e_0$ implies that $e_0=EU(W(x_0))$. Last, it is easy to see from the relation (\ref{68}) that the map $x\rightarrow EU(W(x))$ is upper semi-continuous. This completes the proof.
\end{proof}

\begin{remark} The above Proposition (\ref{opt-D}) shows in particularly that the optimization problem (\ref{new-L2}) with $D=\mathcal{S}$ (the portfolio domain with short-sales  constraints defined in (\ref{short-33})) always has a solution and the optimizing portfolio also belongs to $\mathcal{S}$ under Assumption 1 on the utility function $U$. 
\end{remark}

We summarize the analysis in this section in the following Theorem. 

\begin{theorem}\label{cor-well} Consider an economy 
$(U, X)$. Assume $U$ satisfies Assumption 1. Then the problem (\ref{new-L2}) is well-posed for any given model
(\ref{one}) with the mixing distribution $Z$ can be any non-negative finite valued random variable.
\end{theorem}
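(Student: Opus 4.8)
The plan is to obtain Theorem~\ref{cor-well} as an immediate consequence of Lemma~\ref{opt-D} applied to the domain $D=\R^d$. First I would observe that $\R^d$ is (trivially) a closed subset of $\R^d$ and that the riskless portfolio $x_0=0$ belongs to it. For this portfolio the wealth $W(0)=W_0(1+r_f)$ is deterministic, so $EU(W(0))=U(W_0(1+r_f))$, which is a finite number, in particular $>-\infty$, because $U$ is finite valued by Assumption~1. Hence the standing hypothesis of Lemma~\ref{opt-D} — a closed domain containing at least one portfolio of finite, $>-\infty$ expected utility — is met, and all the other requirements of that Lemma are precisely the remaining clauses of Assumption~1.

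Next I would simply invoke the conclusion of Lemma~\ref{opt-D}: the map $x\mapsto EU(W(x))$ is upper semi-continuous on $\R^d$ and there exists $x^{\star}\in\R^d$ with $EU(W(x))\le EU(W(x^{\star}))$ for all $x\in\R^d$ and $EU(W(x^{\star}))>-\infty$. Since $x^{\star}$ is an element of $\R^d$ it automatically has finite Euclidean norm $|x^{\star}|<+\infty$, and since $U$ is bounded from above by Assumption~1 we also have $EU(W(x^{\star}))<+\infty$. Combining, $-\infty<EU(W(x^{\star}))<+\infty$ and $x^{\star}$ maximizes $EU(W(\cdot))$ over $\R^d$, which is exactly the statement in Definition~\ref{def3.1} that the problem (\ref{new-L2}) is well-posed. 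Because Lemma~\ref{opt-D} uses nothing about $Z$ beyond non-negativity and finiteness, the conclusion holds for every model (\ref{one}) with an arbitrary non-negative finite valued mixing distribution $Z$, as claimed.

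I would close by noting where the real content sits, so the reader is not misled about the triviality of this last step: the substance has already been discharged in Lemma~\ref{lem3.4}(ii), whose Fatou-type estimate shows that along any portfolio sequence with divergent Euclidean norm the expected utility tends to $-\infty$ (thereby forcing any maximizing sequence into a bounded set), and in Lemma~\ref{opt-D}, whose compactness plus upper-semicontinuity argument extracts a convergent subsequence of a maximizing sequence and identifies its limit as the maximizer. Consequently there is no genuine obstacle at the level of Theorem~\ref{cor-well} itself; the only point requiring any care is the bookkeeping check that Assumption~1 supplies each hypothesis those two lemmas need — finite valued, continuous, bounded from above, and $\lim_{w\to-\infty}U(w)=-\infty$ — together with the existence of one admissible portfolio of finite expected utility, for which $x=0$ always serves regardless of $Z$.
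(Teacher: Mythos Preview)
Your proposal is correct and takes essentially the same approach as the paper: both derive the theorem from Lemma~\ref{opt-D}. Your version is actually slightly more streamlined, since you apply Lemma~\ref{opt-D} directly to $D=\R^d$, whereas the paper first restricts to the superlevel set $A=\{x:EU(W(x))\geq EU(W(0))\}$ before invoking the lemma---a reduction that is redundant given that Lemma~\ref{opt-D} already handles unbounded closed domains internally via Lemma~\ref{lem3.4}(ii).
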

\begin{proof} It is clear that under Assumption 1, the set $A=:\{x\in \R^d: U(W(x))\geq U(W(0))\}$ is a closed and bounded subset of $\R^d$. Also 
observe that $\max_{x\in \R^d}\; EU(W(x))=\max_{x\in A}\; EU(W(x))$. Then the claim in the Theorem follows from Lemma \ref{opt-D}.
\end{proof}
\begin{remark} The condition $\lim_{w\rightarrow -\infty}U(w)=-\infty$ on the utility function in Assumption 1 above is important for our Theorem \ref{cor-well} above. This is partly due to the fact that our Theorem \ref{cor-well} is stated for any model (\ref{one}). For a particular given model (\ref{one}), the well-posedness of the problem (\ref{one}) may hold under much weaker conditions on the utility function $U$ than the conditions in Assumption 1 above.  A similar condition were discussed in Proposition 2 of \cite{Pirvu_Kwak}. As stated in the paragraph proceeding to Proposition 2 in this paper, the condition  $\lim_{w\rightarrow -\infty}U(w)=-\infty$ guarantees that any unlimited investment in the risky assets is worse than zero investment in the risky assets.
    
\end{remark}

\end{document}